\renewcommand*{\@fnsymbol}[1]{\ensuremath{\ifcase#1\or \star\or \dagger\or \ddagger\or
       \mathsection\or \mathparagraph\or \|\or **\or \dagger\dagger
       \or \ddagger\ddagger \else\@ctrerr\fi}}
\newcommand{\set}[1]{\left\lbrace #1 \right\rbrace} 
\newcommand{\IN}{\ensuremath{\mathbb{N}}\xspace}
\newcommand{\IR}{\ensuremath{\mathbb{R}}\xspace}
\newcommand{\IQ}{\ensuremath{\mathbb{Q}}\xspace}
\newcommand{\IZ}{\ensuremath{\mathbb{Z}}\xspace}
\newcommand{\lexicoleq}{\preccurlyeq}
\newcommand{\lexicol}{\prec}
\newcommand{\g}{{\sf Payoff}}
\newcommand{\RG}{\mathcal{G} = (V, V_1, V_2, E, r,\g)}
\newcommand{\lexicounRG}{(V, V_1, V_2, E, r, \g, \lexicoleq_1)}
\newcommand{\Glexun}{\G^{\lexicoleq_1}}
\newcommand{\Glexdeux}{\G^{\lexicoleq_2}}
\newcommand{\Glexi}{\G^{\lexicoleq_i}}
\newcommand{\Glexj}{\G^{\lexicoleq_j}}
\newcommand{\out}[1]{\left\langle #1 \right\rangle}
\DeclareMathOperator{\Min}{{\sf Min}}
\DeclareMathOperator{\MPInf}{{\sf InfMP}}
\DeclareMathOperator{\MPSup}{{\sf SupMP}}
\DeclareMathOperator{\InfVal}{\underline{Val}}
\DeclareMathOperator{\SupVal}{\overline{Val}}
\DeclareMathOperator{\Val}{Val}
\DeclareMathOperator{\M}{{\sf MP}}
\DeclareMathOperator{\InfVertex}{{\sf Inf}}
\DeclareMathOperator{\Compl}{{\sf C}}
\DeclareMathOperator{\LInf}{\liminf_{n \to \infty}}
\DeclareMathOperator{\LSup}{\limsup_{n \to \infty}}
\DeclareMathOperator{\LimSup}{{\sf LimSup}}
\DeclareMathOperator{\LimInf}{{\sf LimInf}}
\DeclareMathOperator{\Sup}{{\sf Sup}}
\DeclareMathOperator{\Inf}{{\sf Inf}}
\DeclareMathOperator{\Disc}{{\sf Disc}}
\newcommand{\G}{\mathcal{G}}
\newcommand{\A}{\mathcal{A}}
\newcommand{\D}{\mathcal{D}}
\newcommand{\CB}{\mathcal{C}}
\newcommand{\Safety}{\mathcal{S}}
\newcommand{\limAvgInf}{\liminf\limits_{n \to \infty} \frac{1}{n} \sum\limits_{k=0}^{n-1}}
\newcommand{\limAvgSup}{\limsup\limits_{n \to \infty} \frac{1}{n} \sum\limits_{k=0}^{n-1}}
\renewcommand{\phi}{\mathchar"11E}            
\newcommand{\bigO}[1]{\ensuremath{\mathop{}\mathopen{}O\mathopen{}\left(#1\right)}}
\title{Secure Equilibria in Weighted Games\thanks{Work partially supported by European project Cassting (FP7-ICT-601148).}}
\author{V\'{e}ronique Bruy\`{e}re\inst{1} \and No\'{e}mie Meunier\inst{1}$^,$\thanks{Author supported by F.R.S.-FNRS fellowship.} \and Jean-Fran\c cois Raskin\inst{2}$^{,}$\thanks{Author supported by ERC Starting Grant (279499: inVEST).}}
\institute{D\'{e}partement d'informatique, Universit\'{e} de Mons (UMONS), Belgium
\and 
D\'{e}partement d'informatique, Universit\'{e} Libre de Bruxelles (U.L.B.), Belgium}
\begin{document}
\maketitle
\begin{abstract}
We consider two-player non zero-sum infinite duration games played on weighted graphs. We extend the notion of secure equilibrium introduced by Chatterjee et al., from the Boolean setting to this quantitative setting. As for the Boolean setting, our notion of secure equilibrium refines the classical notion of Nash equilibrium. We prove that secure equilibria always exist in a large class of weighted games which includes common measures like sup, inf, lim sup, lim inf, mean-payoff, and discounted sum. Moreover we show that one can synthesize finite-memory strategy profiles with few memory. We also prove that the constrained existence problem for secure equilibria is decidable for sup, inf, lim sup, lim inf and mean-payoff measures.
Our solutions rely on new results for zero-sum quantitative games with lexicographic objectives that are interesting on their own right. 
\end{abstract}

\section{Introduction}

Two-player zero-sum infinite duration games played on graphs is a useful framework to formalize many important 
problems in computer science. 
System synthesis, and especially synthesis of reactive systems, is one of those important problems, see for example~\cite{RW87,PnueliR89}.
In this application, the vertices and the edges of the graph represent 
the states and the transitions of the system;
one player represents the system to synthesize, whereas the other represents the environment with which the system interacts.
In the classical setting, the environment is considered as {\em antagonist} and  
the objectives of the two players are {\em complementary}, leading to a so-called {\em zero-sum} game.
There are many known results about those zero-sum games~\cite{Mar75,EM79,EJ91,LNCS2500}. 

Modeling the environment as completely adversarial is usually a bold abstraction of reality. However it is often used as it is simple and sound: a winning strategy against a completely adversarial model of the environment, is winning against any environment which pursues its own objective. 
But this approach may fail to find a winning strategy whereas there exists a solution when the objective of the environment is taken into account.
In this case, we need to consider the more general framework of {\em non zero-sum} games.
The classical notion of rational behavior in this context is commonly formalized as Nash equilibria~\cite{Nas50}. 
Nash equilibria capture rational behaviors when the players only care about their own payoff (internal criteria), and they are indifferent to the payoff of the other player (external criteria). 
In the setting of synthesis, the more appropriate notion is the {\em adversarial
external criteria}, where the players are as harmful as possible to the other
players without sabotaging their own objectives. 
This has inspired the study of refinements of Nash equilibria, like the notion of 
{\em secure equilibria} that captures the adversarial external criteria and is at the basis of {\em compositional synthesis algorithms}~\cite{CHJ06}.
In secure equilibria, lexicographic objectives are considered: each player first tries to maximize his own payoff, and then tries to minimize the opponent's payoff. It is shown in~\cite{CHJ06} that secure equilibria are those Nash equilibria which represent enforceable contracts between the two players.

In this paper, we extend the notion of secure equilibria from the Boolean setting with $\omega$-regular objectives of~\cite{CHJ06} to a quantitative setting where objectives are non necessarily $\omega$-regular. More precisely, we consider two-player non zero-sum turn-based games played on weighted graphs, called \emph{weighted games} with objectives defined by classical measures considered in the literature for infinite plays: {\em sup}, {\em inf}, {\em lim sup}, {\em lim inf}, {\em mean-payoff}, and {\em discounted sum}~\cite{LaurentDoyen}. In our setting, the edges of the weighted graph are labelled with pairs of rational values that are used to assign two values to each infinite play: one value models the reward of player~1, and the other the reward of player~2. 

\paragraph{Contributions.}  Our contributions are threefold. (1) We show that all weighted games with the classical measures have {\em secure equilibria}. We also establish that there exist simple profiles of strategies that witness such equilibria: finite-memory strategies with a linear memory size are sufficient for most of the measures (polynomial size for inf and sup measures). (2) We provide polynomial or pseudo-polynomial (depending on the measures) algorithms for the automatic synthesis of such strategy profiles. We thus provide the necessary algorithms to extend the compositional synthesis framework of~\cite{ChatterjeeH07} to a quantitative setting. (3) We prove that one can decide the existence of a secure equilibrium whose outcome satisfies some constraints, for all measures except the discounted sum. In the latter case, we show that this problem is connected to a challenging open problem~\cite{Boker,ChatterjeeFW13}. Our solutions rely on the analysis of two-player zero-sum games with lexicographic objectives (for all the measures) for which we provide worst-case optimal algorithms. These so-called \emph{lexicographic games} are interesting on their own right and, to the best of our knowledge, were not studied in the literature.

\paragraph{Related work.}
Secure equilibria were first introduced in~\cite{CHJ06} in a Boolean setting, and for $\omega$-regular objectives. 
We extend this work to a quantitative setting in which objectives are not necessarily $\omega$-regular. 
Our contributions (1) and (2) are very much inspired by a recent work by Brihaye et al.~\cite{TJS} that provides general results for the existence of Nash equilibria in a large class of multi-player weighted graphs. We show that their main theorem is extendable to secure equilibria in the case of two-player games. However, to adapt their theorem we need non trivial new results on lexicographic games, while for Nash equilibria they can rely on well-known results for (non lexicographic) zero-sum two-player games.
Previously to~\cite{TJS}, existence results about different kinds of equilibria (Nash, secure, perfect, and subgame perfect) have been established for \emph{quantitative  reachability} objectives in multi-player weighted games~\cite{BrihayeBP10,BBDG,Kim}.
In \cite{UmmelsW11}, among other results, the authors study 
the decision problem of both the existence and the constraint existence of Nash equilibria in multi-player weighted games for the mean-payoff measure. A part of our contribution (3) is inspired by some of their techniques.
The {\em rational synthesis} problem is studied in~\cite{FKL10}, when a system interacts with agents that all have their own objectives. This problem asks to construct a strategy profile that enforces the objective of the system, and which is an equilibrium for the agents. The objectives are $\omega$-regular or defined by deterministic latticed B\"uchi word automata (they do not include mean-payoff and discounted sum objectives), and secure equilibria are not considered explicitly.
Lexicographic games were first considered in~\cite{KTB}, for a mean-payoff measure that differs from the one studied in this paper. The proof technique that we develop is similar to their approach but requires non trivial adaptations. To the best of our knowledge, lexicographic games for all the other objectives (sup, inf, lim sup, lim inf, and discounted sum) have not been studied previously.

\paragraph{Structure of the paper.}   In Section~\ref{sec:prelim}, we first recall classical notions on games and equilibria, we then introduce the three problems studied in this paper and our solutions for the sup, inf, lim sup, lim inf, mean-payoff, and discounted sum measures (Problems~\ref{prob:ES}-\ref{prob:ESconstraint} and Theorems~\ref{thm:existenceES}-\ref{thm:constraintES}). In Section~\ref{sec:equi}, for each of the three problems, we provide a general framework of weighted games in which the problem can be solved (Propositions~\ref{SE}-\ref{constSE}). Those frameworks impose in particular the determinacy of two (one for each player)  lexicographic games associated with the initial game. Lexicographic games are introduced in Section~\ref{sec:equi} and proved to be determined in Section~\ref{sec:lexico}, where their complexity is also established. With these results, in Section~\ref{sec:equi}, we are able to prove Theorems~\ref{thm:existenceES}-\ref{thm:complES}, because games with the considered measures all fall in the frameworks proposed for solving Problems~\ref{prob:ES}-\ref{prob:EScomplexity}. The framework proposed for the last problem requires an additional hypothesis that we study in Section~\ref{sec:path}. Theorem~\ref{thm:constraintES} can then be derived for all the measures, except the discounted sum. We also show that Problem~\ref{prob:ESconstraint} for this measure is linked to a challenging open problem~\cite{Boker,ChatterjeeFW13}. 
In  Section~\ref{sec:conclusion}, we give a conclusion.

\section{Weighted Games and Studied Problems} \label{sec:prelim}

In this section, we recall the notions of weighted game, Nash equilibrium, secure equilibrium, and we state the problems that we want to solve. 

\subsection{Weighted Games}

We here consider two-player turn-based non zero-sum weighted games such that the weights are seen as rewards, and the two players want to maximize their payoff (this payoff is computed from the weights, for example as in Definition~\ref{def:payoff}). 

\begin{definition}\label{RG}
A two-player non zero-sum \emph{weighted game} is a tuple $\G = (V, V_1,$ $V_2, E, r, \g)$ where 
\begin{itemize}

  \item $(V, E)$ is a finite directed graph, the arena of the game, with vertices $V$ and edges $E \subseteq V \times V$, such that for each $v \in V$, there exists $e = (v,v') \in E$ for some $v' \in V$ (no deadlock), 

  \item $V_1, V_2$ form a partition of $V$ such that $V_i$ is the set of vertices 
    controlled by player $i \in \set{1, 2}$, 

  \item $r = (r_1,r_2)$ is the weight function such that $r_i : E \to \IQ$ associates a rational reward with each edge for player $i \in \set{1, 2}$,
  
  \item $\g = (\g_1, \g_2)$ is the payoff function such that $\g_i  : V^\omega \to \IR$ is the payoff function of player $i \in \set{1, 2}$ such that $\g_i$ is defined starting from $r_i$.

\end{itemize}
\end{definition} 

When an initial vertex $v_0 \in V$ is fixed, we call $(\mathcal{G}, v_0)$ an \emph{initialized weighted game}. 
A \emph{play} of $(\mathcal{G}, v_0)$ is an infinite 
sequence $\rho = \rho_0 \rho_1 \ldots \in V^\omega$ 
such that $\rho_0 = v_0$ and  $(\rho_i, \rho_{i + 1}) \in E$ for all $i \in \IN$. 
\emph{Histories} of $(\mathcal{G}, v_0)$ are finite sequences $h = h_0 \ldots h_n \in V^+$ defined in the same way. 
A \emph{prefix} (resp. \emph{suffix}) of a play $\rho$ is a finite sequence $\rho_0 \dots \rho_n$ 
(resp. infinite sequence $\rho_n \rho_{n+1} \ldots$) denoted by $\rho_{\leq n}$ (resp. $\rho_{\geq n}$). 
The \emph{length} of $\rho_{\leq n}$ is the number $n$ of its edges. 
For a play $\rho \in V^\omega$ and a player 
$i \in \set{1, 2}$, we denote by 
$r_i(\rho) = r_i(\rho_0, \rho_1) r_i(\rho_1, \rho_2) \dots$ 
the sequence of player $i$ weights along $\rho$.
The \emph{payoff} of $\rho$ for player $i$ is given by $\g_i(\rho)$, 
and $\g(\rho) = (\g_1(\rho), \g_2(\rho))$ is the payoff of $\rho$ 
in $\G$. 

A \emph{strategy} $\sigma$ for player $i \in \set{1, 2}$ is a function 
$\sigma : V^\ast V_i \to V$ assigning to each history $hv \in V^\ast V_i$ 
a vertex $v' = \sigma(hv)$ such that $(v, v') \in E$. We denote by
$\Sigma_i$ the set of strategies of player $i$.
A strategy $\sigma$ for player $i$ is \emph{positional} 
if $\sigma(h) = \sigma(h')$ for all histories $h, h'$ ending with the same vertex ($\sigma$ only depends on the last vertex of the history). In particular, a positional strategy is a function  $\sigma : V_i \to V$ (instead of $\sigma : V^\ast V_i \to V$). 
A strategy $\sigma$ is a \emph{finite-memory} strategy if it only needs finite memory of the history (recorded by a finite strategy automaton). 
Formally a \emph{finite strategy automaton} for player $i \in \set{1, 2}$ over a weighted game 
$\RG$ is a Mealy automaton $\mathcal{M} = (M, m_0, V, \delta, \nu)$ where: 
\begin{itemize}

  \item $M$ is a non-empty, finite set of memory states, 

  \item $m_0 \in M$ is the initial memory state, 

  \item $\delta : M \times V \to M$ is the memory update function, 

  \item $\nu : M \times V_i \to V$ is the transition choice function for player $i$, 
    such that $(v, \nu(m, v)) \in E$ for all $m \in M$ and $v \in V_i$.

\end{itemize}
This automaton defines the finite-memory strategy $\sigma_{\mathcal{M}}$ such that $\sigma_{\mathcal{M}}(hv) = \nu(\hat{\delta}(m_0, h), v)$ for all $hv \in V^\ast V_{i}$, where $\hat{\delta}$ extends $\delta$ to histories starting from $m_0$ as expected. The memory size of $\sigma_{\mathcal{M}}$ is defined as the size of $M$.

Given a strategy $\sigma \in \Sigma_i$ with $i \in \set{1,2}$, we say that a play $\rho$ 
of $\mathcal{G}$ is \emph{consistent} with $\sigma$ if 
$\rho_{k + 1} = \sigma(\rho_{\leq k})$ for all $k \in \IN$ such that 
$\rho_k \in V_i$. 
A \emph{strategy profile} of $\mathcal{G}$ is a pair 
$(\sigma_1, \sigma_2)$ of strategies, with $\sigma_i \in \Sigma_i$ for each $i \in \set{1, 2}$. 
Given an initial vertex $v_0$, such a strategy profile determines a unique 
play of $(\mathcal{G}, v_0)$ that is consistent with both strategies. 
This play is called the \emph{outcome} of $(\sigma_1, \sigma_2)$ and is denoted 
by $\out{\sigma_1, \sigma_2}_{v_0}$. For a history $hv \in V^\ast V$, and a strategy profile $(\sigma_1, \sigma_2)$, we denote by $\out{{\sigma_1}|_h, \sigma_2}_v$ the outcome of $({\sigma_1}|_h, \sigma_2)$  in the initialized game $(\mathcal{G}, v)$, where ${\sigma_1}|_h$ is the strategy defined by $\sigma_1|_h (h'v') = \sigma_1(hh'v')$ for all histories $h'v' \in V^\ast V_1$ that begin with $v$. The outcome $\out{\sigma_1, {\sigma_2}|_h}_v$ is defined similarly. We say that a player
\emph{deviates} from a strategy (resp. from a play) if he does not carefully follow this strategy (resp. play).
We say that a strategy profile $(\sigma_1, \sigma_2)$ is a positional (resp. finite-memory) strategy profile 
if $\sigma_1$ and $\sigma_2$ are positional (resp. finite-memory) strategies. The memory size of a finite-memory strategy profile is equal to the maximum memory size of its strategies.

In this paper, we focus on several well-known payoff functions (see for instance~\cite{LaurentDoyen}). 

\begin{definition} \label{def:payoff}
Given a weighted game $\RG$, we define the payoff function $\g$ as one of the measures in $\{\Inf$, $\Sup$, $\LimInf$, $\LimSup$, $\MPInf$, $\MPSup$, $\Disc^\lambda \mbox{ for $\lambda \in \, ]0, 1[$} \}$, where for all $i \in \{1, 2\}$ and $\rho \in V^\omega$:
\begin{itemize}
  \item $\Inf_i(\rho)  = \inf_{n \in \IN} r_i(\rho_n, \rho_{n+1})$, 
  \item $\Sup_i(\rho) =  \sup_{n \in \IN} r_i(\rho_n, \rho_{n+1})$,
  \item $\LimInf_i(\rho) = \liminf\limits_{n \to \infty} r_i(\rho_n, \rho_{n+1})$, 
  \item$\LimSup_i(\rho)  =  \limsup\limits_{n \to \infty} r_i(\rho_n, \rho_{n+1})$,
  \item  $\MPInf_i(\rho)  =  \limAvgInf r_i(\rho_k, \rho_{k+1})$,  
  \item $\MPSup_i(\rho)  =  \limAvgSup r_i(\rho_k, \rho_{k+1})$,
  \item $\Disc^\lambda_i(\rho) =  (1 - \lambda) \cdot \sum_{n=0}^{\infty} \lambda^n r_i(\rho_n, \rho_{n+1})$. 
\end{itemize}
We also call these games \emph{$\g$ weighted games.}
\end{definition}

\subsection{Equilibria}

We now recall the concept of Nash equilibrium and secure equilibrium. 
In this aim we need to fix two lexicographic orders on $\IR^2$: 
a lexicographic order $\lexicoleq_1$ w.r.t. the first component 
and a lexicographic order $\lexicoleq_2$ w.r.t. the second component 
such that for all $(x_1, x_2), (x'_1, x'_2) \in \IR^2$, 
\begin{eqnarray*}
(x_1, x_2) \lexicoleq_1 (x'_1, x'_2) &\textrm{ iff }& (x_1 < x'_1) \lor (x_1 = x'_1 \land x_2 \geq x'_2),  \\ 
(x_1, x_2) \lexicoleq_2 (x'_1, x'_2) &\textrm{ iff }& (x_2 < x'_2) \lor (x_2 = x'_2 \land x_1 \geq x'_1). 
\end{eqnarray*} 
Notice that $(\IR^2, \lexicoleq_1)$ and $(\IR^2, \lexicoleq_2)$ are totally ordered sets.

A strategy profile $(\sigma_1, \sigma_2)$ in an initialized weighted game $(\G, v_0)$ is a Nash equilibrium if player 1 (resp. player 2)
has no incentive to deviate unilaterally from $\sigma_1$ (resp. $\sigma_2$), since he cannot strictly increase his payoff when using $\sigma'_1$ (resp. $\sigma'_2$)
instead of $\sigma_1$ (resp. $\sigma_2$). The notion of secure equilibrium is stronger in the sense that player $i$ has no incentive to deviate from $\sigma_i$
with respect to the order $\lexicoleq_i$ (instead of the usual order $\leq$ on his payoffs). 

\begin{definition} \label{def:ES}
Let $(\G, v_0)$ be an initialized weighted game. A strategy profile 
$(\sigma_1, \sigma_2)$ with $\sigma_i \in \Sigma_i$,  $i \in \set{1,2}$, is
a \emph{Nash equilibrium} in $(\G, v_0)$ if, for each strategy $\sigma'_i \in \Sigma_i$, $i \in \set{1,2}$,
\begin{eqnarray*}
\g_1(\out{\sigma'_1, \sigma_2}_{v_0}) &\leq& \g_1(\out{\sigma_1, \sigma_2}_{v_0}), \\ 
\g_2(\out{\sigma_1, \sigma'_2}_{v_0}) &\leq& \g_2(\out{\sigma_1, \sigma_2}_{v_0}).
\end{eqnarray*}
It is a \emph{secure equilibrium} in $(\G, v_0)$ if, for each strategy $\sigma'_i \in \Sigma_i$, $i \in \set{1,2}$,
\begin{eqnarray*}
\g(\out{\sigma'_1, \sigma_2}_{v_0}) &\lexicoleq_1& \g(\out{\sigma_1, \sigma_2}_{v_0}), \\ 
\g(\out{\sigma_1, \sigma'_2}_{v_0}) &\lexicoleq_2& \g(\out{\sigma_1, \sigma_2}_{v_0}). 
\end{eqnarray*}
\end{definition}

Note that the notion of secure equilibrium is a refinement of that of Nash equilibrium.
In a Nash equilibrium, each player only cares about his own payoff (he maximizes it), whereas in a secure equilibrium, he cares about the payoff of both players (he maximizes his payoff and then minimizes the payoff of the other player).

With the notation of Definition~\ref{def:ES}, we say that $\sigma'_1$ is a \emph{profitable deviation} for player 1 w.r.t. $(\sigma_1, \sigma_2)$ if  $\g_1(\out{\sigma_1, \sigma_2}_{v_0}) < \g_1(\out{\sigma'_1, \sigma_2}_{v_0})$ (resp. $\g(\out{\sigma_1, \sigma_2}_{v_0}) \lexicol_1 \g(\out{\sigma'_1, \sigma_2}_{v_0})$). Profitable deviations for player 2 are defined similarly.
With these terms, we can say that $(\sigma_1, \sigma_2)$ is a Nash equilibrium (resp. secure equilibrium) if no player has a profitable deviation w.r.t. $(\sigma_1, \sigma_2)$ for the relation $<$ (resp. $\lexicol_i$).

\begin{example}\label{Ex:ENandES}
Consider the initialized $\MPInf$ weighted game $(\G, v_0)$ depicted in Figure  \ref{Figure:WG}. 
Circle (resp. square) vertices are 
player $1$ (resp. player $2$) vertices, and the weights $(0, 0)$ are not specified. In this simple game, both players have two positional strategies that are respectively denoted by $\sigma_i$ and $\sigma'_i$ for each player $i \in \set{1, 2}$. These strategies are depicted in Figure \ref{Figure:WG}.
\begin{figure}[ht!]
\begin{center}
\begin{tikzpicture}[initial text=,auto, node distance=1cm, shorten >=1pt] 

\node[state, scale=0.7]    (v0)                           {$v_0$};
\node[state, scale=0.7]               (v1)    [below left=of v0]     {$v_1$};
\node[state, rectangle, scale=0.7]               (v2)    [below right=of v0]    {$v_2$};
\node[state, scale=0.7]    (v3)    [below left=of v2]     {$v_3$};
\node[state, scale=0.7]    (v4)    [below right=of v2]    {$v_4$};

\path[->] (v0) edge               node[left, scale=0.8] {$\sigma_1$} (v1)
               edge               node[right, scale=0.8] {$\sigma'_1$} (v2)

          (v1) edge [loop below]  node[midway, scale=0.7] {$(4,4)$}  ()

          (v2) edge               node[left, scale=0.8] {$\sigma_2$}  (v3)
               edge               node[right, scale=0.8] {$\sigma'_2$} (v4)

          (v3) edge [loop below]  node[midway, scale=0.7] {$(4,3)$}  ()

          (v4) edge [loop below]  node[midway, scale=0.7] {$(3,2)$}  ();

\end{tikzpicture}
\end{center}
\caption{A simple weighted game\label{Figure:WG}}
\end{figure}
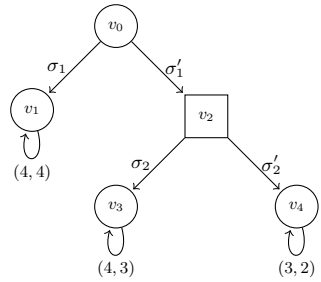

The strategy profiles $(\sigma_1, \sigma'_2)$ and $(\sigma'_1, \sigma_2)$ are secure equilibria in $(\G, v_0)$. Indeed, for the first one, only 
player~$1$ has control on the play $\out{\sigma_1, \sigma'_2}_{v_0}$ with payoff $(4,4)$, and he decreases his payoff if he plays strategy $\sigma'_1$ instead of $\sigma_1$ ($(3,2) \lexicol_1 (4,4)$). For the second one, both players have control on the play $\out{\sigma'_1, \sigma_2}_{v_0}$ with payoff $(4,3)$. If player~$1$ plays $\sigma_1$ instead of $\sigma'_1$, he keeps the same payoff but increases the payoff of player $2$ ($(4,4) \lexicol_1 (4,3)$). If player~2 plays $\sigma'_2$ instead of $\sigma_2$, he decreases his payoff ($(3,2) \lexicol_2 (4,3)$). Hence in both examples no player has a profitable deviation, and these strategy profiles are secure equilibria (and thus Nash equilibria) in $(\G, v_0)$. 

However the strategy profile $(\sigma_1, \sigma_2)$ is a Nash equilibrium which is not a secure equilibrium. 
Indeed, $\sigma'_1$ is a profitable deviation for player $1$ since by playing $\sigma'_1$ instead of $\sigma_1$, he keeps his own payoff but can decrease the payoff of player 2.

Finally, the strategy profile $(\sigma'_1, \sigma'_2)$ is neither a secure equilibrium nor a Nash equilibrium.
\end{example}

\subsection{Problems and Main Theorems}

To conclude this section, we state the three problems studied in the paper and our results. Let $(\G, v_0)$ be an initialized two-player non zero-sum weighted game. 

\begin{problem} \label{prob:ES}
Does there exist a secure equilibrium in $(\G, v_0)$? If it is the case, does there exist a finite-memory secure equilibrium?
\end{problem}

\begin{problem} \label{prob:EScomplexity}
What is the complexity of constructing a (finite-memory) secure equilibrium in $(\G, v_0)$ if one exists? 
\end{problem}

\begin{problem} \label{prob:ESconstraint}
Given two thresholds $\mu, \nu \in (\IQ \cup \{\pm \infty\})^2$, is it decidable whether there exists a secure equilibrium $(\sigma_1, \sigma_2)$ in $(\G, v_0)$ such that $$\mu \leq \g(\out{\sigma_1, \sigma_2}_{v_0}) \leq \nu,$$ i.e. $\mu_i \leq \g_i(\out{\sigma_1, \sigma_2}_{v_0}) \leq \nu_i$ for $i \in \{1,2\}$?
\end{problem}

As expected, if no restriction is given on the payoff function used in weighted games, the answer to Problem~\ref{prob:ES} is negative. An example of a weighted game with no Nash equilibrium (and thus with no secure equilibrium) is given in \cite{TJS}. 
In this paper, we solve the stated problems for weighted games with the payoff functions given in Definition~\ref{def:payoff}. Our solutions are as follows.

\begin{theorem} \label{thm:existenceES}
Let $(\G, v_0)$ be an initialized weighted game. Then $(\G, v_0)$ has a secure equilibrium with memory bounded by $|V| + 2$ (resp. $|V| \cdot |E|^2 + 3$) for payoff functions $\MPInf$, $\MPSup$, $\LimInf$, $\LimSup$ and  $\Disc^\lambda$ (resp. $\Inf$ and $\Sup$).
\end{theorem}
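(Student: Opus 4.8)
The plan is to adapt the construction used for Nash equilibria in~\cite{TJS} to the lexicographic refinement demanded by secure equilibria, thereby reducing the existence question to the determinacy of two associated zero-sum lexicographic games. First I would associate with $(\G, v_0)$ the two zero-sum lexicographic games $\Glexun$ and $\Glexdeux$ introduced in Section~\ref{sec:equi}: in $\Glexi$, player $i$ plays as maximizer with respect to the order $\lexicoleq_i$ while the other player plays as minimizer. By the determinacy results of Section~\ref{sec:lexico}, each $\Glexi$ is determined, admitting a value at every vertex $v$ — which I denote $\Val_i(v)$ — together with optimal strategies for both players whose memory is bounded as established there.

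The next step is to invoke the general framework of Proposition~\ref{SE}, which characterizes the outcomes of secure equilibria in terms of these lexicographic values: a play $\rho$ is the outcome of a secure equilibrium exactly when, at every decision point $\rho_k \in V_i$, the continuation payoff $\g(\rho_{\geq k})$ is at least as good for player $i$ with respect to $\lexicoleq_i$ as the value $\Val_i(\rho_k)$ that the opponent can enforce against him. Determinacy of the two games guarantees that such a play exists. I would then exhibit the witnessing profile explicitly: both players cooperate by following a fixed such play $\rho$; as soon as one player deviates, the other — say player $i$ — responds with his optimal minimizing strategy in $\Glexmoinsi$, the lexicographic game in which the deviator maximizes for his own order, thereby capping the deviator's payoff at the value the opponent can enforce and destroying any incentive to deviate. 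Checking that no profitable deviation (w.r.t.\ $\lexicol_i$) remains then reduces directly to the defining inequalities of the values.

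For the memory bounds I would choose $\rho$ as simple as possible. For $\MPInf$, $\MPSup$, $\LimInf$, $\LimSup$, and $\Disc^\lambda$, the cooperative play can be taken to be an ultimately periodic lasso visiting at most $|V|$ distinct vertices, while the punishment strategies are positional; storing the position along the lasso plus a constant number of states to record that a deviation has occurred and which player must be punished yields the bound $|V| + 2$. For $\Inf$ and $\Sup$ the optimal strategies in the associated lexicographic games are no longer positional: optimizing the second, adversarial component forces a player to monitor a bounded amount of information about the extremal weights of both players seen so far, which contributes the $|E|^2$ factor and inflates the overall bound to $|V| \cdot |E|^2 + 3$.

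The main obstacle is not the construction above — which is essentially routine once the framework is in place — but rather the input it consumes: the determinacy of the lexicographic games together with the precise memory bounds for their optimal strategies. For the ordinary (non-lexicographic) measures these facts are classical, but the lexicographic versions, and especially those for $\Inf$ and $\Sup$, do not appear in the literature and must be proved from scratch in Section~\ref{sec:lexico}; this is where the real difficulty lies.
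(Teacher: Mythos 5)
For the five measures $\MPInf$, $\MPSup$, $\LimInf$, $\LimSup$ and $\Disc^\lambda$, your construction is essentially the paper's: the outcome of the two uniform optimal strategies of $\Glexun$ and $\Glexdeux$ is a lasso on at most $|V|$ distinct vertices, deviations are punished by switching to the opponent's uniform optimal strategy in the deviator's lexicographic game, and the $|V|+2$ memory bound follows from storing the position on the lasso plus the punishment mode. One point you leave implicit but which is genuinely load-bearing: the deviation analysis does not reduce ``directly to the defining inequalities of the values'' unless the payoff is prefix-linear, because the value inequality at the deviation vertex $v$ must be transported back through the common prefix $h$ via $\g(h\rho)=a+b\cdot\g(\rho)$ with $b\geq 0$. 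This is exactly the hypothesis of Proposition~\ref{SE}, and it holds for these five measures (Remark~\ref{rem:prefix-linear}).

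The genuine gap is in your treatment of $\Inf$ and $\Sup$. Here \emph{both} hypotheses of the framework fail: $\Inf$ and $\Sup$ are not prefix-linear, and the lexicographic games $\Glexi$ are only positionally-determined, not uniformly-determined (Example~\ref{ex} in the paper shows the optimal strategy must depend on the start vertex --- note they \emph{are} positional, contrary to your claim that they are ``no longer positional''). Consequently the punishment scheme as you describe it breaks down twice over: the punisher's optimal strategy at the deviation vertex need not be optimal from there when fixed globally, and even if it were, the inequality $\Val(v)\lexicoleq_1 \g(\out{\sigma_1^\star,\sigma_2}_{v_0})$ cannot be inferred from optimality at $v$ because the prefix cannot be factored out. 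Your intuition that one must ``monitor the extremal weights seen so far'' is the right one, but it has to be made into an actual reduction: the paper builds a new $\LimInf$ (resp.\ $\LimSup$) game $\G'$ on the vertex set $V\times R\times R$, where $R$ is the set of edge weights augmented with $+\infty$, records in the state the componentwise minima seen so far, and sets $r'$ to output those minima, so that $\Inf(\rho)=\LimInf(\rho')$. Secure equilibria then transfer back and forth, and the already-proved $|V'|+2$ bound for $\LimInf$ yields $|V|\cdot|E|^2+3$. Without some such reduction (or a separate ad hoc argument), the $\Inf$/$\Sup$ case of the theorem is not established by your proposal; the claim that the construction is ``routine once the framework is in place'' is precisely what fails there, since these two measures lie outside the framework.
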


\begin{theorem} \label{thm:complES}
Let $(\G, v_0)$ be an initialized weighted game. Then one can compute a finite-memory secure equilibrium in $(\G, v_0)$ in pseudo-polynomial time (resp. polynomial time)  for payoff functions $\MPInf$, $\MPSup$ and  $\Disc^\lambda$ (resp. $\LimInf$, $\LimSup$, $\Inf$ and $\Sup$).
\end{theorem}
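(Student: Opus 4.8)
The plan is to turn the \emph{existence} argument behind Theorem~\ref{thm:existenceES} into a constructive, time-bounded procedure, relying on the general framework of Proposition~\ref{SE} that reduces the synthesis of a secure equilibrium in $(\G, v_0)$ to the resolution of the two associated lexicographic games $\Glexun$ and $\Glexdeux$, together with a polynomial amount of bookkeeping. Since Theorem~\ref{thm:existenceES} already guarantees a finite-memory secure equilibrium with linear (resp.\ polynomial) memory, the only remaining task is to bound the time needed to actually \emph{produce} the Mealy automaton realising it. I would therefore decompose the construction into three stages and charge the claimed running times to each, observing that the total cost is dominated by the resolution of the lexicographic games, which is the sole source of the pseudo-polynomial/polynomial dichotomy in the statement.

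First I would solve the two lexicographic games: for each player $i \in \set{1,2}$ and each vertex $v$, compute the lexicographic threat value $\Val_i(v)$ that the opponent can force upon player~$i$ with respect to $\lexicoleq_i$, together with an optimal finite-memory punishing strategy. By the determinacy and complexity results of Section~\ref{sec:lexico}, this runs in pseudo-polynomial time for $\MPInf$, $\MPSup$ and $\Disc^\lambda$, and in polynomial time for $\LimInf$, $\LimSup$, $\Inf$ and $\Sup$. Second, I would compute a witness \emph{main outcome}: an ultimately periodic play $\rho$ from $v_0$ whose payoff $\g(\rho)$ dominates, at every prefix and for the relevant order, the threat values just computed, so that no unilateral deviation is profitable. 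Using that these values induce a sub-arena of edges compatible with the equilibrium constraints, such a lasso is extracted by a polynomial graph search (a reachable cycle whose mean / limit / discounted payoff meets the required thresholds). Third, I would assemble the profile: the memory records the current position along $\rho$ and, upon detecting a deviation, switches permanently to the precomputed punishing strategy of the deviated-against player. This yields an automaton of size bounded as in Theorem~\ref{thm:existenceES}, and its assembly from the data of the first two stages is polynomial.

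The complexity accounting then follows: stages two and three add only polynomial overhead on top of stage one, so the overall time matches the cost of solving the lexicographic games. For $\MPInf$, $\MPSup$ and $\Disc^\lambda$ this is pseudo-polynomial, whereas for $\LimInf$, $\LimSup$, $\Inf$ and $\Sup$ the values range over finitely many edge-weights, keeping all three stages polynomial.

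The main obstacle is the second stage together with the exact bound proved in Section~\ref{sec:lexico}. For the witness outcome one must check that the lasso search genuinely stays polynomial for every measure, most delicately for mean-payoff, where the cycle constraint is an average and a naive enumeration of cycles is exponential; the point is to reuse the value information so that a suitable cycle is found by a best-ratio / shortest-cycle computation rather than by enumeration, and to handle the two payoff components simultaneously under $\lexicoleq_i$. Granting the complexity bounds of Section~\ref{sec:lexico} for the lexicographic games, the theorem reduces to verifying that these graph-search and automaton-assembly steps incur only the stated polynomial overhead.
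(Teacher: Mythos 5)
Your overall architecture (solve the two lexicographic games $\Glexun$, $\Glexdeux$, fix a main outcome, add punishment, package everything into a Mealy automaton) is the same as the paper's, which obtains the theorem from Proposition~\ref{complSE} together with the complexity bounds of Theorem~\ref{thm:compl-lexico}. However, there are two genuine gaps in your proposal. First, your second stage is both unnecessary and, as you yourself flag, unresolved: you propose to \emph{search} for a lasso whose payoff dominates the threat values at every prefix, and you leave open how to do this in polynomial time for mean-payoff under the lexicographic orders. The paper never performs such a search. The main outcome is simply $\out{\sigma_1^\star,\sigma_2^\star}_{v_0}$, the play obtained by letting player~1 follow his uniform optimal strategy in $\Glexun$ and player~2 his uniform optimal strategy in $\Glexdeux$; since both strategies are positional, this play is a lasso $v_0\dots v_{k-1}(v_k\dots v_n)^\omega$ with $n\leq|V|$ computable by plain simulation, and the proof of Proposition~\ref{SE} (via prefix-linearity and optimality) shows directly that no deviation from it is profitable. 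Your ``main obstacle'' disappears once the outcome is chosen this way; a cycle-search formulation of the kind you sketch is essentially the constrained-existence machinery of Section~\ref{sec:path}, which is not needed here.

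Second, your uniform treatment of $\Inf$ and $\Sup$ is wrong: Proposition~\ref{SE} does \emph{not} apply to these payoffs, because $\Inf$ and $\Sup$ are not prefix-linear (Remark~\ref{rem:prefix-linear}) and their lexicographic payoff games are only positionally-determined, not uniformly-determined (Example~\ref{ex}). Without uniform punishing strategies the ``switch permanently to the precomputed punishing strategy'' step of your third stage is not justified, since the strategy that is optimal from the deviation vertex may differ from the one precomputed for $v_0$. The paper instead handles $\Inf$ and $\Sup$ by a polynomial reduction to $\LimInf$ and $\LimSup$ on an enlarged arena $V\times R\times R$ that records the current minimal (resp.\ maximal) weights; this is precisely where the distinct memory bound $|V|\cdot|E|^2+3$ of Theorem~\ref{thm:existenceES} comes from, a bound your argument cannot reproduce. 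Your remark that ``the values range over finitely many edge-weights'' does not repair either of these issues.
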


\begin{theorem} \label{thm:constraintES}
Let $(\G, v_0)$ be an initialized weighted game and $\mu, \nu \in (\mathbb{Q} \cup \{\pm \infty\})^2$ be two thresholds. Then one can decide in {\sf NP} $\cap$ {\sf co-NP} (resp. in ${\sf P}$) whether there exists a secure equilibrium $(\sigma_1, \sigma_2)$ in $(\G, v_0)$ such that $\mu \leq \g(\out{\sigma_1, \sigma_2}_{v_0}) \leq \nu$ for payoff functions $\MPInf$ and $\MPSup$ (resp. $\LimInf$, $\LimSup$, $\Inf$ and $\Sup$).
\end{theorem}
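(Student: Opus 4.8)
The plan is to reduce the constrained-existence problem for secure equilibria to the analysis of the two associated lexicographic zero-sum games $\Glexun$ and $\Glexdeux$, exactly as in the unconstrained case, but now while simultaneously tracking the threshold requirement $\mu \leq \g(\out{\sigma_1,\sigma_2}_{v_0}) \leq \nu$. The key structural fact I would exploit is the standard characterization of Nash/secure equilibria: a play $\rho$ is the outcome of a secure equilibrium if and only if, at each prefix, neither player can secure a lexicographically better payoff by deviating than what $\rho$ yields. For each player $i$, this ``no profitable deviation'' condition is captured precisely by the optimal value $\Val_i(v)$ that player $i$ can guarantee in the lexicographic game $\Glexi$ from vertex $v$, where the opponent plays antagonistically with respect to $\lexicoleq_i$. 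Since Section~\ref{sec:lexico} establishes that these lexicographic games are determined with positional (or small finite-memory) optimal strategies and that their values are computable within the stated complexity bounds, the existence of a secure equilibrium whose outcome visits only ``safe'' vertices becomes a reachability/safety condition that I can decide.

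Concretely, I would first compute, for every vertex $v$ and each player $i$, the lexicographic value $\Val_i(v)$ in $\Glexi$ using the algorithms of Section~\ref{sec:lexico}; this costs the claimed pseudo-polynomial time (mean-payoff, via the $\mathsf{NP}\cap\mathsf{co\text{-}NP}$ value problem) or polynomial time ($\LimInf$, $\LimSup$, $\Inf$, $\Sup$). Then I would search for a candidate outcome play $\rho$, presumably of the ``lasso'' form (a finite prefix followed by a simple cycle) so that its payoff $\g(\rho)$ is effectively computable and comparable with $\mu$ and $\nu$. The play $\rho$ must satisfy two independent conditions: (i) the \emph{threshold} condition $\mu \leq \g(\rho) \leq \nu$; and (ii) the \emph{secure-equilibrium} condition, namely that for every prefix $\rho_{\leq n}$ ending in a vertex controlled by player $i$, every alternative successor yields a payoff no larger than $\g(\rho)$ in the order $\lexicoleq_i$ against the opponent's optimal lexicographic strategy, which is exactly the statement that $\Val_i(\rho_{n+1}) \lexicoleq_i \g(\rho)$ for all deviation targets. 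Enforcing (ii) reduces to restricting attention to edges whose target vertices have lexicographic value bounded by $\g(\rho)$, and the candidate cycle must live inside this restricted graph while the prefix may leave it freely.

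The main obstacle, and the step I expect to require the most care, is the interaction between the threshold condition (i) and the secure-equilibrium condition (ii): the secure-equilibrium constraint depends on the \emph{actual} payoff $\g(\rho)$ of the outcome, which is itself the quantity being constrained, so one cannot first fix the safe subgraph and then optimize independently. My plan is to guess the payoff value $\g(\rho) = (p_1, p_2)$ of the outcome, where the finitely many relevant candidate values arise from the weights appearing in $\G$ (for $\Inf$, $\Sup$, $\LimInf$, $\LimSup$ these are edge-weights, so there are polynomially many candidates; for the mean-payoff measures the value of a simple cycle ranges over a pseudo-polynomially bounded set of rationals). For each guessed pair $(p_1,p_2)$ with $\mu \leq (p_1,p_2) \leq \nu$, I would build the subgraph $G'$ retaining only edges $(u,v)$ with $\Val_1(v) \lexicoleq_1 (p_1,p_2)$ when $u \in V_1$ (and symmetrically with $\lexicoleq_2$ when $u \in V_2$), and then test whether $G'$ contains a play that is reachable from $v_0$, has payoff exactly $(p_1,p_2)$, and remains consistent with securing the values; this reduces to reachability plus a cycle-value check, both decidable within the bounds. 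Combining the guesses with the per-measure value computation yields the $\mathsf{NP}\cap\mathsf{co\text{-}NP}$ bound for $\MPInf, \MPSup$ (inherited from mean-payoff value computation) and the $\mathsf{P}$ bound for the four other measures, since for them the values and candidate payoffs are all polynomial-time computable.
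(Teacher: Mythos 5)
Your overall architecture matches the paper's: compute the lexicographic values $\Val^i(v)$ in $\Glexi$, characterize the outcomes of secure equilibria by the condition that no visited vertex offers a better guaranteed lexicographic value than what the play delivers from that point on (this is exactly Lemma~\ref{lem}), and then search for a play satisfying both this condition and the threshold. However, there is a genuine gap in how you propose to carry out the search. You restrict candidate outcomes to lassos and propose to guess the payoff $\g(\rho)=(p_1,p_2)$ from a (pseudo-)polynomially bounded set of values arising from simple cycles. For $\MPInf$ and $\MPSup$ this is unsound: Example~\ref{ex:infini} exhibits a game and thresholds for which every ultimately periodic outcome violates the threshold, while a non-periodic play (staying in $v_0$ and then in $v_1$ for longer and longer stretches) achieves payoff $(1,1)$ and is the outcome of a secure equilibrium. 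The achievable mean-payoff vectors of infinite plays are not the finite set of simple-cycle values but, roughly, convex combinations thereof, so neither the lasso restriction nor the finite guess of $(p_1,p_2)$ is complete; moreover the constraint is an inequality $\mu \leq \g(\rho) \leq \nu$, not an equality, so even a correct candidate set would be a continuum. The paper circumvents this by reducing the inner test to the feasibility of a system of linear constraints with both strict and non-strict inequalities (Lemma~\ref{lem:ummels}, following Ummels and Wojtczak), which captures non-periodic plays, and it resolves the circularity you correctly identify not by guessing the payoff but by enumerating pairs of vertices $(s_1,s_2)$ realizing the $\lexicoleq_i$-maximal values $\Val^i(s_i)$ along the play (Algorithm~\ref{algo}); the equilibrium condition then becomes ``stay in the subgraph where $\Val^i(v) \lexicoleq_i \Val^i(s_i)$ and satisfy $\Val^i(s_i) \lexicoleq_i \g(\rho)$,'' which is a Boolean combination of threshold conditions on $\g_i(\rho)$.

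Two further points need attention. First, for $\Inf$ and $\Sup$ the hypotheses behind the characterization fail: $\g_i$ is not prefix-linear and $\Glexi$ is only positionally, not uniformly, determined, so Proposition~\ref{constSE} does not apply directly; the paper instead reduces these cases to $\LimInf$ and $\LimSup$ via the augmented arena that records the running extremal weights, whereas your proposal treats them alongside the other measures without justification. Second, the equilibrium condition of Lemma~\ref{lem} involves $\g(\rho_{\geq k})$, not $\g(\rho)$; it is only prefix-independence of the four remaining measures that lets Theorem~\ref{thm:path} replace the former by the latter, and this step should be made explicit.
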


\noindent All these results are summarized in Tables~\ref{table:existence}-\ref{table:constrained}.

\begin{table}[h!]
\begin{center}
\begin{tabular}{|l|l|l|}
\hline
& Memory size & Construction time  \\
\hline 
$\MPInf$, $\MPSup$ &  $|V| + 2$ & pseudo-polynomial  \\
$\LimInf$, $\LimSup$ &  $|V| + 2$ & polynomial  \\
$\Inf$, $\Sup$ & $|V| \cdot |E|^2 + 3$ & polynomial \\
$\Disc^\lambda$ &  $|V| + 2$ & pseudo-polynomial  \\
\hline
\end{tabular}
\end{center}
\caption{Memory size and construction time of a finite-memory secure equilibrium (see Theorems~\ref{thm:existenceES}-\ref{thm:complES})}	\label{table:existence}
\end{table}
\begin{table}[h!]
\begin{center}
\begin{tabular}{|l|l|}
\hline
& Complexity \\
\hline 
$\MPInf$, $\MPSup$  & {\sf NP} $\cap$ {\sf co-NP} \\
$\LimInf$, $\LimSup$  & {\sf P} \\
$\Inf$, $\Sup$ & {\sf P} \\ 
$\Disc^\lambda$  & ? \\
\hline
\end{tabular}
\end{center}
\caption{Complexity of Problem~\ref{prob:ESconstraint} (see Theorem~\ref{thm:constraintES})}	\label{table:constrained}
\end{table}

\section{Lexicographic Payoff Games and Equilibria} \label{sec:equi}

To solve Problems~\ref{prob:ES}-\ref{prob:ESconstraint}, we follow the approach proposed in \cite{TJS} to solve the first problem for Nash equilibria (instead of secure equilibria). The general idea is the following one. Given an initialized (non zero-sum) weighted game $(\G, v_0)$, we derive two well-chosen two-player (zero-sum) games $\Glexun$ and $\Glexdeux$, and under adequate hypotheses, we obtain properties about secure equilibria in $(\G, v_0)$ through determinacy results and characterization of the optimal strategies of $\Glexun$ and $\Glexdeux$. 

In this section, we first introduce the games $\Glexun$ and $\Glexdeux$. Then for each of the three studied problems, we propose a general framework (i.e. adequate general hypotheses on $(\G, v_0)$, $\Glexun$ and $\Glexdeux$) under which we are able to solve the considered problem (Propositions~\ref{SE}-\ref{constSE}).

Later in Sections~\ref{sec:lexico} and~\ref{sec:path}, we will prove that these hypotheses are satisfied for most of the weighted games with the payoff functions of Definition~\ref{def:payoff}. We will thus be able to prove Theorems~\ref{thm:existenceES}-\ref{thm:constraintES} as a consequence of Propositions~\ref{SE}-\ref{constSE}.

\subsection{Lexicographic Payoff Games}

Let $\mathcal{G}$ be a weighted game as in Definition~\ref{RG}. We associate with $\G$ two zero-sum games $\G^1$ and $\G^2$, one for each player, that are respectively equipped with the lexicographic order $\lexicoleq_1$ and $\lexicoleq_2$ used to compare payoffs of these games.

\begin{definition} \label{def:lexgame}
From a weighted game $\RG$ as in Definition~\ref{RG}, we derive a zero-sum \emph{lexicographic payoff game} $\Glexun$ of the form $\lexicounRG$.
\end{definition} 

In the zero-sum game $\Glexun$, the two players have antagonistic goals. 
For each play $\rho \in V^\omega$,
player $1$ receives the \emph{payoff} $\g(\rho)$ that he wants to maximize w.r.t. the 
lexicographic order $\lexicoleq_1$, while player $2$ pays the \emph{cost} $\g(\rho)$ that 
he wants to minimize w.r.t. $\lexicoleq_1$. 
When $\g$ is one among the payoff functions of Definition~\ref{def:payoff}, we say that $\Glexun$ is a \emph{$\g$ lexicographic payoff game}. 

We also consider a dual lexicographic payoff game $\Glexdeux = (V, V_2, V_1, E, r,$ $\g, \lexicoleq_2)$ where the roles of the two players are exchanged and the used lexicographic order is $\lexicoleq_2$. In this game player $2$ wants to maximize $\g(\rho)$ w.r.t. $\lexicoleq_2$, while player $1$ wants to minimize it.

\begin{definition}
Given a lexicographic payoff game $\Glexun$, 
we define for every vertex $v \in V$ the \emph{upper value} $\SupVal(v)$ and the \emph{lower value} $\InfVal(v)$ respectively as:
\begin{eqnarray*}
\SupVal(v) &=&   \inf\limits_{\sigma_{2} \in \Sigma_{2}} \sup\limits_{\sigma_1 \in \Sigma_1} \g(\out{\sigma_1, \sigma_{2}}_v), \\
\InfVal(v) &=&  \sup\limits_{\sigma_1 \in \Sigma_1} \inf\limits_{\sigma_{2} \in \Sigma_{2}} \g(\out{\sigma_1, \sigma_{2}}_v). 
\end{eqnarray*}
The game $\Glexun$ is \emph{determined} if, 
for every $v \in V$, we have $\SupVal(v) = \InfVal(v)$. We also say 
that $\Glexun$ has a \emph{value} from $v$, and we write 
$\Val(v) = \SupVal(v) = \InfVal(v)$.
\end{definition}

In the previous definition, let us remind that the infimum and supremum functions are applied on a set of payoffs lexicographically ordered with
$\lexicoleq_1$. 

\begin{definition} \label{def:optimal}
Given a lexicographic payoff game $\Glexun$ and a vertex $v \in V$,
we say that $\sigma^\star_1 \in \Sigma_1$ is an \emph{optimal 
strategy} for player $1$ and vertex $v$ if, for each strategy $\sigma_{2} \in \Sigma_{2}$, 
we have 
\begin{equation*}
\InfVal(v) \lexicoleq_1  \g(\out{\sigma^\star_1, \sigma_{2}}_v).
\end{equation*} 
Similarly, $\sigma^\star_{2} \in \Sigma_{2}$ is an optimal strategy for player $2$ and vertex $v$ if, 
for each strategy $\sigma_1 \in \Sigma_1$, we have 
\begin{equation*}
\g(\out{\sigma_1, \sigma^\star_{2}}_v) \lexicoleq_1 \SupVal(v).
\end{equation*}
We say that $\Glexun$ is \emph{positionally-determined} if it is determined and has positional 
optimal strategies for both players and all vertices. Additionally, we say that $\Glexun$ is \emph{uniformly-determined} if the positional optimal strategies $\sigma^\star_1, \sigma^\star_2$ can be chosen globally independently of vertex $v$. We call these strategies \emph{uniform}.
\end{definition}

\begin{example}
We come back to the weighted game $\G$ of Figure~\ref{Figure:WG}, and consider the $\MPInf$ lexicographic payoff game $\Glexun$. Let us show that this game is uniformly-determined. Indeed each vertex has a value. In case of vertices $v_1, v_3$ and $v_4$, the value is trivially the weight of the loop on these vertices.  Vertex $v_2$ has value $(3,2)$ since the worse that player 2 can do is to play strategy $\sigma'_2$ ($(3,2) \lexicol_1 (4,3)$). Vertex $v_0$ has value $(4,4)$ (realised by strategy $\sigma_1$ of player 1). The positional optimal strategies are $\sigma_1$ and $\sigma'_2$ for player $1$ and player $2$ respectively, they are uniform. 
\end{example}

In the sequel, we will sometimes use the next lemma to prove that a game has a value from $v$ with optimal strategies $\sigma_1^\star, \sigma_2^\star$ for both players. 

\begin{lemma}\label{lemma:pointDeSelle}
If for all $v \in V$, there are a pair $(\alpha, \beta) \in \IR^2$, and two strategies $\sigma_1^\star \in \Sigma_1, \sigma_2^\star \in \Sigma_2$ such that for all 
$\sigma_1 \in \Sigma_1, \sigma_2 \in \Sigma_2$, 

\begin{equation}\label{pointDeSelle}
\g(\out{\sigma_1, \sigma_2^\star}_v) \lexicoleq_1 (\alpha, \beta) \lexicoleq_1 \g(\out{\sigma_1^\star, \sigma_2}_v), 
\end{equation}

then the game is determined and $\sigma_1^\star, \sigma_2^\star$ are optimal strategies for $v$ such that $\g(\out{\sigma_1^\star, \sigma_2^\star}_v) = (\alpha, \beta) = \Val(v)$. 

\end{lemma}

\begin{proof}
If we consider the strategies $\sigma_1 = \sigma_1^\star$ and $\sigma_2 = \sigma_2^\star$ in~(\ref{pointDeSelle}), we clearly have that $\g(\out{\sigma_1^\star, \sigma_2^\star}_v) = (\alpha, \beta)$. From Definition~\ref{def:optimal}, we have that $\InfVal(v) \lexicoleq_1 \g(\out{\sigma_1^\star, \sigma_2^\star}_v) \lexicoleq_1 \SupVal(v)$. Let us show that $\SupVal(v) \lexicoleq_1 (\alpha, \beta) \lexicoleq_1 \InfVal(v)$. It will follow that $\InfVal(v) = \SupVal(v) = \Val(v) = \g(\out{\sigma_1^\star, \sigma_2^\star}_v)$ and $\sigma_1^\star, \sigma_2^\star$ are optimal strategies.  

By (\ref{pointDeSelle}), we have 

\begin{align*}
(\alpha, \beta) = \inf_{\sigma_2 \in \Sigma_2} (\alpha, \beta) & \lexicoleq_1 \inf_{\sigma_2 \in \Sigma_2} \g(\out{\sigma_1^\star, \sigma_2}_v) \\	
 & \lexicoleq_1 \sup_{\sigma_1 \in \Sigma_1} \inf_{\sigma_2 \in \Sigma_2} \g(\out{\sigma_1, \sigma_2}_v) = \InfVal(v). 
\end{align*}

In a similar way, one can show that $\SupVal(v) \lexicoleq_1 (\alpha, \beta)$. 
\qed\end{proof}

Additionally to the notion of lexicographic payoff game, we need to define the next properties in a way to solve Problems~\ref{prob:ES}-\ref{prob:ESconstraint}.

\begin{definition}[\cite{TJS}] \label{def:pf}
Let $\RG$ be a weighted game. The payoff function $\g_i$, $i \in \set{1, 2}$, is \emph{prefix-linear} in $\G$ if, for every 
vertex $v \in V$ and history $hv \in V^+$, there exists 
$a \in \IR$ and $b \in \IR^+$ such that, for every play 
$\rho \in V^\omega$ whose first vertex is $v$, we have : 
\begin{equation*}
\g_i(h\rho) = a + b\cdot\g_i(\rho).
\end{equation*}
Moreover, $\g_i$ is \emph{prefix-independent} if for all $v \in V$, $hv \in V^+$ and  
$\rho \in V^\omega$ whose first vertex is $v$, we have $\g_i(h\rho) = \g_i(\rho)$.
\end{definition}

Clearly the notion of prefix-independent payoff function is a particular case of that of prefix-linear. 

\begin{remark}\label{rem:prefix-linear}
The two components of the payoff functions $\LimInf$, $\LimSup$, $\MPInf$ and $\MPSup$ are clearly prefix-independent. 

For payoff function $\Disc^{\lambda}$, the components are not prefix-independent but rather prefix-linear (see \cite{TJS}), since for $h = \rho_0 \ldots \rho_{n-1}$ and $\rho = \rho_n \rho_{n+1} \ldots$, we have for $i \in \{1,2\}$ that
$\Disc^{\lambda}_i(h \rho)  =  (1 - \lambda) \cdot \sum_{k=0}^\infty \lambda^k \cdot r_i(\rho_k, \rho_{k+1}) 
                      =  (1 - \lambda) \cdot \sum_{k=0}^{n-1} \lambda^k \cdot r_i(\rho_k, \rho_{k+1}) + \lambda^{n} \cdot \Disc^{\lambda}_i (\rho)$.
                      
Finally neither $\Inf$ nor $\Sup$ functions have prefix-linear components. Indeed, it is easy to find a history $h$ and two plays $\rho, \rho'$ such that $\Inf(h\rho) < \Inf(\rho), \Inf(\rho')$ with $\Inf(\rho) \ne \Inf(\rho')$, which implies that $a = \Inf(h\rho)$ and  $b=0$ in Definition~\ref{def:pf}. We then get a contradiction by taking $\rho''$ such that $\Inf(\rho'') < a$.
\end{remark}

\subsection{Existence of Secure Equilibria}

We can now state the general framework under which Problem~\ref{prob:ES} is positively solved. This framework is composed of all initialized weighted games $(\G,v_0)$ such that each lexicographic payoff game $\Glexi$ is uniformly-determined, and each payoff function $\g_i$ is prefix-linear, with $i \in \{1,2\}$.

\begin{proposition}\label{SE}
Let $\RG$ be a weighted game and $v_0$ be an initial vertex. If for each $i \in \set{1,2}$, $\g_i$ is prefix-linear, and $\Glexi$ is uniformly-determined, then
there exists a finite-memory secure equilibrium in $(\G,v_0)$ with memory size bounded by $|V| + 2$. 
\end{proposition}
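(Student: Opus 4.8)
\medskip
\noindent\textbf{Proof plan.} The plan is to transport the value-based construction of~\cite{TJS} for Nash equilibria to the lexicographic setting, using the values of $\Glexun$ and $\Glexdeux$ in place of the scalar zero-sum values. Since $\Glexun$ is uniformly-determined, I would fix a uniform positional optimal strategy $\tau_1$ for player~$1$ (the maximizer) and a uniform positional optimal strategy $\bar\pi_2$ for player~$2$ (the minimizer) in $\Glexun$; symmetrically, since $\Glexdeux$ is uniformly-determined, I would fix a uniform positional optimal strategy $\tau_2$ for player~$2$ and a uniform positional optimal $\bar\pi_1$ for player~$1$ in $\Glexdeux$. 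Write $\Val^1$, $\Val^2$ for the respective values. The candidate equilibrium outcome is the play $\bar\rho = \out{\tau_1, \tau_2}_{v_0}$; being the outcome of two positional strategies it is a lasso, visiting pairwise distinct vertices until it closes a cycle, so it has the shape $\bar\rho = \bar\rho_0 \cdots \bar\rho_{\ell - 1} (\bar\rho_\ell \cdots)^\omega$ with $\ell \le |V|$.

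The construction rests on two suffix-level inequalities, established at every position $k$. First, since $\tau_1$ is positional and optimal, the suffix $\bar\rho_{\ge k}$ equals $\out{\tau_1, \tau_2}_{\bar\rho_k}$, so optimality of $\tau_1$ from $\bar\rho_k$ gives $\Val^1(\bar\rho_k) \lexicoleq_1 \g(\bar\rho_{\ge k})$. Second, any play from $\bar\rho_k$ consistent with the minimizer strategy $\bar\pi_2$ has payoff $\lexicoleq_1 \Val^1(\bar\rho_k)$, which is exactly optimality of $\bar\pi_2$ at $\bar\rho_k$. The symmetric facts hold for $\Glexdeux$, $\tau_2$, $\bar\pi_1$ and $\lexicoleq_2$. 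To pass from suffixes to full plays I would use prefix-linearity: a common prefix ending at $\bar\rho_k$ turns a suffix payoff into a global payoff through the componentwise affine map $\Phi(x_1, x_2) = (a_1 + b_1 x_1,\, a_2 + b_2 x_2)$, the \emph{same} map for any two plays sharing that prefix. The crux of the whole argument—and the step I expect to be the main obstacle—is to check that $\Phi$ is monotone for the lexicographic orders. This holds precisely because the multiplicative coefficients are strictly positive for every payoff function considered: they equal $1$ in the prefix-independent cases and $\lambda^{n} > 0$ for $\Disc^\lambda$ (Remark~\ref{rem:prefix-linear}). Strict positivity is what preserves strict first-component inequalities as strict, and ties as ties, so that $u \lexicoleq_1 u'$ implies $\Phi(u) \lexicoleq_1 \Phi(u')$ (and likewise for $\lexicoleq_2$).

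Next I would define the profile $(\sigma_1, \sigma_2)$: each player follows $\bar\rho$ (i.e.\ plays $\tau_1$, resp.\ $\tau_2$) as long as the history is a prefix of $\bar\rho$, and switches forever to the punishing strategy $\bar\pi_1$, resp.\ $\bar\pi_2$, as soon as the opponent leaves $\bar\rho$. To verify this is a secure equilibrium, consider a unilateral deviation $\sigma_1'$ of player~$1$ with outcome $\pi$. Let $k$ be the first position where $\pi$ differs from $\bar\rho$; as player~$2$ still follows $\bar\rho$ up to there, the deviating move is player~$1$'s, so $\bar\rho_k \in V_1$, and from $\bar\rho_k$ onwards $\pi$ is consistent with $\bar\pi_2$. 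The two suffix-level inequalities give $\g(\pi_{\ge k}) \lexicoleq_1 \Val^1(\bar\rho_k) \lexicoleq_1 \g(\bar\rho_{\ge k})$, and since $\pi$ and $\bar\rho$ share the prefix up to $\bar\rho_k$, applying the monotone map $\Phi$ yields $\g(\pi) \lexicoleq_1 \g(\bar\rho)$. Hence player~$1$ has no profitable deviation for $\lexicoleq_1$; the argument for player~$2$ is symmetric with $\lexicoleq_2$, $\Val^2$ and $\bar\pi_1$. Therefore $(\sigma_1, \sigma_2)$ is a secure equilibrium with outcome $\bar\rho$.

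Finally, for the memory bound I would note that each strategy needs only to follow the lasso $\bar\rho$—which requires tracking the current position among its at most $|V|$ distinct vertices, so as to detect deviations that jump back onto $\bar\rho$—and then to switch to a positional punishing strategy, which costs one extra memory state. This produces a finite-memory profile of memory size bounded by $|V| + 2$.
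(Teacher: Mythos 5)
Your proposal is correct and follows essentially the same route as the paper's proof: the same uniform optimal strategies $\sigma_i^\star$ in $\Glexi$ together with the punishing strategies $\sigma_{-i}^\star$, the same candidate outcome $\out{\sigma_1^\star,\sigma_2^\star}_{v_0}$, the same sandwich $\g(\pi_{\geq k}) \lexicoleq_1 \Val^1(\rho_k) \lexicoleq_1 \g(\rho_{\geq k})$ at the first point of deviation transported to full plays via prefix-linearity, and the same $|V|+2$-state strategy automaton. The only substantive difference is that you explicitly check that the affine map coming from prefix-linearity is monotone for the lexicographic orders — a step the paper uses silently in passing from Equations~(\ref{lexicoleq})--(\ref{lexicogeq}) — and you rightly note that this needs a strictly positive multiplicative coefficient on the leading component, which holds for all the measures of Definition~\ref{def:payoff} but is not literally forced by Definition~\ref{def:pf} alone (where $b=0$ is permitted); this is a worthwhile clarification rather than a gap.
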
 

This proposition is an adaptation to secure equilibria in two-player games of a theorem\footnote{In Theorem $10$ of \cite{TJS}, one hypothesis is missing: determinacy must be uniform determinacy.} given in \cite{TJS,Julie} for the existence of Nash equilibria in multi-player games. The main difference is that we here need to work with (two-dimensional) lexicographic payoff games instead of classical (one-dimensional, non lexicographic) zero-sum quantitative games.  
The proof of this proposition is similar to the one given in \cite{TJS,Julie}. Nevertheless, we give this proof in a way to have a self-contained paper. In the next sections, we show that the hypotheses of Proposition~\ref{SE} are satisfied for most of payoff functions in Definition~\ref{def:payoff}.  

\begin{proof}
Let $\RG$ be a weighted game and let $v_0 \in V$ be an initial vertex. 
We know that the lexicographic payoff game $\Glexi$ is uniformly-determined for each $i \in \set{1, 2}$.
Let us fix some notation. In $\Glexi$, player $i$ wants to maximize his payoff against the other player who wants to minimize it.    
To emphasize this situation, we denote by player $-i$ (instead of player $3-i$) the player that is opposed to player $i$ in $\Glexi$. Moreover we denote by $\sigma_i^\star$ and $\sigma_{-i}^\star$ the uniform optimal strategies for player $i$ and $-i$ respectively in $\Glexi$. 
In other words, $\sigma^\star_1$ denotes a uniform optimal strategy of player~1 in $\Glexun$, and $\sigma^\star_{-1}$ denotes a uniform optimal strategy of player~2 in $\Glexun$. Similarly, $\sigma_2^\star$ (resp. $\sigma_{-2}^\star$) denotes a uniform optimal strategy of player $2$ (resp. player $1$) in $\Glexdeux$.  

We first show that there exists a secure equilibrium $(\tau_1, \tau_2)$ in $(\G,v_0)$, 
as follows: player $i$ plays according to his strategy 
$\sigma_i^\star$ in $\Glexi$, and punishes the other player $j$ 
if he is the first player to deviate from his strategy $\sigma_{j}^\star$ 
in $\Glexj$, by playing according to $\sigma_{-j}^\star$ in $\Glexj$. 

Formally, let $\rho = \out{\sigma_1^\star, \sigma_2^\star}_{v_0}$ be the outcome of the optimal strategies $(\sigma_1^\star, \sigma_2^\star)$ from $v_0$. We need to specify a punishment function 
$P : V^+ \to \set{\bot, 1, 2}$ that detects who is the first player to deviate from the play $\rho$, 
i.e. who has to be punished. For the initial vertex $v_0$, we define $P(v_0) = \bot$ and for every 
history $hv \in V^+ V$ starting in $v_0$, we let: 
$$
P(hv) = \left\{
  \begin{array}{lll}
    \bot & \mbox{if } P(h) = \bot \mbox{ and } hv \mbox{ is a prefix of } \rho, \\ 
    i & \mbox{if } P(h) = \bot, hv \mbox{ is not a prefix of } \rho, \mbox{ and } h \in V^\ast V_i, \\ 
    P(h) & \mbox{otherwise } (P(h) \neq \bot). 
  \end{array}
\right.
$$

For each player $i \in \set{1, 2}$ we define the strategy $\tau_i$ such that 
for all 
$hv \in V^\ast V_i$, 

\begin{equation} \label{eq:ES}
\tau_i(hv) = \left\{
  \begin{array}{ll}
    \sigma_i^\star(v) & \mbox{if } P(hv) = \bot \mbox{ or } i, \\ 
    \sigma_{-j}^\star(v) & \mbox{otherwise where } j = 3 - i \mbox{ is the other player}.
  \end{array}
\right.
\end{equation}

Clearly, the outcome of $(\tau_1, \tau_2)$ is the play $\rho = \out{\sigma^\star_1, \sigma^\star_2}_{v_0}$. 
Let us show that $(\tau_1, \tau_2)$ is a secure equilibrium in $(\G, v_0)$. 
We first prove that player $1$ has no profitable deviation. 
As a contradiction, let us assume that 
$\tau'_1$ is a profitable deviation for player $1$ w.r.t. $(\tau_1, \tau_2)$. 
We thus have that: 
\begin{equation}
\g(\rho) \lexicol_1 \g(\rho') \label{ProfitableDeviation} 
\end{equation}
with $\rho' = \out{\tau'_1, \tau_2}_{v_0}$.

Let $hv \in V^\ast V$ be the longest prefix common to $\rho$ and $\rho'$. This prefix exists and 
is finite as both plays $\rho$ and $\rho'$ start from vertex $v_0$ and $\rho \neq \rho'$. 
As the optimal strategies $\sigma_1^\star, \sigma_2^\star$ are uniform, we can write that $\rho = h \out{\sigma_1^\star, \sigma_2^\star}_v$. 
In the case of $\rho'$, player $1$ does not follow his strategy $\sigma_1^\star$ any more from vertex $v$, and so, player $2$ 
punishes him by playing according to his optimal strategy $\sigma_{-1}^\star$ after history $hv$. Therefore we have
$\rho' = h \out{{\tau'_1}|_h, \sigma_{-1}^\star}_v$. 

As the payoff functions are prefix-linear, there exist 
$a = (a_1,a_2) \in \IR^2$ and $b = (b_1,b_2) \in (\IR^+)^2$ such that 
\begin{eqnarray*}
\g(\rho') = a + b \cdot \g\left(\out{{\tau'_1}|_h, \sigma_{-1}^\star}_v\right), & and \\ 
& \\
\g(\rho) = a + b \cdot \g\left(\out{\sigma_1^\star, \sigma_2^\star}_v\right). &
\end{eqnarray*}
Since $\sigma_{-1}^\star$ is an optimal strategy for player $2$ in the lexicographic payoff game 
$\Glexun$, we have:
\begin{equation}
\g(\rho') ~{\lexicoleq}_1~ a + b \cdot \Val^1(v) \label{lexicoleq}, 
\end{equation}
where $\Val^1(v)$ is the value of $v$ in $\Glexun$. 
Furthermore, as $\sigma_1^\star$ is an optimal strategy for player $1$ in $\Glexun$, it follows that: 
\begin{equation}
a + b \cdot \Val^1(v) ~{\lexicoleq}_1~ \g(\rho) \label{lexicogeq}. 
\end{equation} 
From Equations (\ref{lexicoleq}) and (\ref{lexicogeq}) we have that 
$\g(\rho') \lexicoleq_1 \g(\rho)$ in contradiction with Equation (\ref{ProfitableDeviation}). 
This proves that player 1 has no profitable deviation.
We can show that player $2$ has not profitable deviation in the same way. 

We now prove that $(\tau_1, \tau_2)$ is a finite-memory strategy profile such that the memory size of both $\tau_1$ and $\tau_2$ is 
bounded by $|V| + 2$. For this purpose, we define a finite strategy 
automaton for both players that remembers the play $\rho$ and who has 
to be punished. As the play $\rho$ is the outcome of the uniform
strategy profile $(\sigma_1^\star, \sigma_2^\star)$, we can write 
$\rho = v_0 \dots v_{k-1} (v_k \dots v_n)^\omega$ where 
$0 \leq k \leq n \leq |V|$, $v_l \in V$ for all $0 \leq l \leq n$ 
and these vertices are all different. For any $i \in \set{1, 2}$, 
let $\mathcal{M}_i = (M, m_0, V, \delta, \nu)$ be the strategy automaton 
of player $i$, where\footnote{In this definition, vertex $v_{n+1}$ means vertex $v_k$.}: 
\begin{itemize}

  \item $M = \set{v_0v_0, v_0v_1, \dots, v_{n-1}v_n, v_nv_k} \cup \set{1,2}$. 
    With $M$ we remember the next edge that should be chosen to be sure that both players follow $\rho$, or the player $j \in \set{1,2}$ that has first deviated.
    
  \item $m_0 = v_0v_0$. This memory element means that the play has not begun yet.
  
  \item $\delta : M \times V \to M$ is defined for all $m \in M$ and $v \in V$ as follows: 
$$
\delta(m, v) = \left\{
    \begin{array}{lll}
        v_{l}v_{l+1} & \quad \mbox{if } m = uv_{l} \mbox{ and } v = v_l, \mbox{ with } u \in V , l \in \set{0,\dots,n} \\ 
          & \\
        j & \quad \mbox{if } m = j \in \set{1, 2} \mbox{or } (m = uv_{l}, v \neq v_l, \\ 
          & \quad \mbox{ with } u \in V_j, l \in \set{0, \dots, n})
            \end{array}
\right.
$$ 
Function $\delta$ udpates the memory either to the next edge of $\rho$ in case of no deviation, or to $j$ if player $j$ was the first to deviate.

  \item $\nu : M \times V_i \to V$ is defined for all $m \in M$ and $v \in V_i$ as follows:
$$
\nu(m, v) = \left\{
    \begin{array}{lll}
        \sigma_i^\star(v) & \quad \mbox{if } \delta(m,v) =  v_{l}v_{l+1}, \mbox{ with } l \in \set{0,\dots,n} \\ 
        \sigma_{-j}^\star(v) & \quad \mbox{if } \delta(m,v) =  j   \in \set{1, 2}         \end{array}
\right.
$$ 
Function $\nu$ proposes to play according to Equation~(\ref{eq:ES}).
\end{itemize}

Obviously, the strategy $\sigma_{\mathcal{M}_i}$ computed by the strategy automaton $\mathcal{M}_i$ 
exactly corresponds to the strategy $\tau_i$ of the secure equilibrium. Notice that in the definition of $M$, we can forget the memory element $i$. Indeed if player $i$ follows the strategy computed by $\mathcal{M}_i$, then he never deviates from the play $\rho$. Therefore, $M$ has size at most  $|V| + 2$, and strategy $\tau_i$ requires a memory of size at most $|V| + 2$. 
\qed\end{proof}

\begin{example}
We come back again to the weighted game $\G$ of Figure~\ref{Figure:WG} with the initial vertex $v_0$. Uniform optimal strategies in game $\Glexun$ are $\sigma^\star_1 = \sigma_1$ and  $\sigma^\star_{-1} = \sigma'_2$ (we use the notations of the previous proof). In $\Glexdeux$, they are $\sigma^\star_{-2} = \sigma'_1$ and  $\sigma^\star_{2} = \sigma_2$. We thus obtain the secure equilibrium $(\tau_{1},\tau_{2})$ leading to the outcome $\rho = \out{\sigma_1, \sigma_2}_{v_0}$. If player $1$ deviates from $\rho$ (in order to decrease player $2$ payoff), then player $2$ punishes him by playing $\sigma'_2$ which is the worst for player~$1$ (since it decreases his payoff). 
\end{example}

Notice that the proof of Proposition~\ref{SE} also holds for payoff functions that mix different measures for the two players, like for example $\MPInf_1$ for player~1 and $\Sup_2$ for player~2.

We could also prove a result similar to Proposition~\ref{SE} such that the hypotheses are replaced by the following ones: $(\G,v_0)$ is an initialized weighted game such that for each $i \in \set{1,2}$, $\Glexi$ is determined (less restrictive than uniformly-determined) and $\g_i$ is prefix-independent (more restrictive than prefix-linear). 
The same kind of result is proved in \cite{Julie} in the case of Nash equilibria; its proof can be easily adapted to the case of secure equilibria as done for the proof of Proposition~\ref{SE}. 

\subsection{Construction of Secure Equilibria}

In the previous section, we have proposed  a rather general framework of weighted games for which Problem~\ref{prob:ES} has a positive answer. In this section we show that we can solve Problem~\ref{prob:EScomplexity} in the same general framework, with the additional hypothesis that one can compute uniform optimal strategies in $\Glexi$, for $i \in \set{1,2}$.

\begin{proposition}\label{complSE}
Let $(\G,v_0)$ be an initialized weighted game such that for each $i \in \set{1,2}$, $\g_i$ is prefix-linear, and $\Glexi$ is uniformly-determined with computable uniform optimal strategies. If such strategies can be computed in $\Compl$~time for both players, then a finite-memory secure equilibrium in $(\G,v_0)$ can also be constructed  in $\Compl$~time.
\end{proposition}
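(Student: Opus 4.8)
The plan is to note that the secure equilibrium $(\tau_1, \tau_2)$ exhibited in the proof of Proposition~\ref{SE} is built by an explicit procedure whose only non-elementary ingredient is the computation of the uniform optimal strategies; everything else is polynomial-time post-processing. So I would argue that the whole construction runs in $\Compl$~time as soon as those optimal strategies are available in $\Compl$~time.

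Concretely, I would first use the hypothesis to compute, in $\Compl$~time, the uniform optimal strategies $\sigma_1^\star, \sigma_{-1}^\star$ of $\Glexun$ and $\sigma_2^\star, \sigma_{-2}^\star$ of $\Glexdeux$; since these are positional, each is stored as a table of size $O(|V|)$. Next I would compute the outcome $\rho = \out{\sigma_1^\star, \sigma_2^\star}_{v_0}$ by following the two positional strategies from $v_0$: after at most $|V|$ steps a vertex repeats, so $\rho$ is obtained in polynomial time in its lasso form $\rho = v_0 \dots v_{k-1}(v_k \dots v_n)^\omega$ with $n \le |V|$. Finally, for each player $i$ I would assemble the finite strategy automaton $\mathcal{M}_i$ of the proof of Proposition~\ref{SE}: its memory set $M$ of size at most $|V| + 2$, its update function $\delta$ and its choice function $\nu$ are all read off directly from $\rho$ and from the positional strategies $\sigma_i^\star, \sigma_{-j}^\star$, so this step is polynomial in $|V|$ and $|E|$ as well.

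Putting the three steps together, the total cost is the $\Compl$~time spent on the optimal strategies plus a polynomial overhead, which is again $\Compl$~time. The only point that genuinely needs attention — and the closest thing to an obstacle — is that this absorption requires $\Compl$ to be a class closed under the addition of polynomial-time computations; this holds for all the classes arising in our applications (polynomial and pseudo-polynomial time), which is exactly why the complexity figures of Theorem~\ref{thm:complES} follow once the bounds for computing uniform optimal strategies in $\Glexi$ are established.
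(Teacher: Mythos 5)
Your proposal is correct and follows essentially the same route as the paper's proof: build the equilibrium of Proposition~\ref{SE} from the computed uniform optimal strategies, the remaining assembly of the outcome and the strategy automata being polynomial overhead. Your explicit remark that $\Compl$ must absorb polynomial-time post-processing is a reasonable clarification that the paper leaves implicit, but it does not change the argument.
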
 

\begin{proof}
By Proposition~\ref{SE}, we know that there exists a finite-memory secure equilibrium $(\tau_1,\tau_2)$ in $(\G,v_0)$. Moreover the proof of this theorem indicates how to construct $(\tau_1,\tau_2)$ from uniform optimal strategies in $\Glexun$ and $\Glexdeux$ (see (\ref{eq:ES}) and the finite strategy automaton proposed for both players at the end of this proof). It follows that a finite-memory secure equilibrium can be constructed in $\Compl$~time.
\qed\end{proof}

\subsection{Constrained Existence of Secure Equilibria}

Let us turn to Problem~\ref{prob:ESconstraint}. 
First notice that contrarily to Problem~\ref{prob:EScomplexity}, strategy profiles for solving this problem require infinite memory, as shown by the next example.

\begin{example} \label{ex:infini}
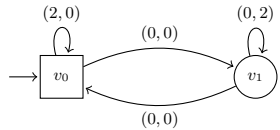
\begin{figure}[ht!]
\centering
\begin{tikzpicture}[initial text=,auto, node distance=2cm, shorten >=1pt] 

\node[state, initial, rectangle, scale=0.7]    (v0)                           {$v_0$};
\node[state, scale=0.7]    (v1)    [right=of v0]     {$v_1$};

\path[->] (v0) edge [bend left=25]              node[midway, scale=0.7] {$(0, 0)$} (v1)
               edge [loop above]                 node[midway, scale=0.7] {$(2, 0)$} ()

          (v1) edge [bend right=-25]              node[midway, scale=0.7] {$(0, 0)$}  (v0)
               edge [loop above]                 node[midway, scale=0.7] {$(0, 2)$} ();

\end{tikzpicture}
\centering
\caption{An weighted game that shows the need of infinite memory for Problem~\ref{prob:ESconstraint}} \label{fig:infini}
\end{figure}

Consider the initialized $\MPInf$ weighted game $(\G,v_0)$ of Figure~\ref{fig:infini}. Take thresholds $\mu = (1,1)$ and $\nu = (+\infty, +\infty)$. 

As explained in~\cite[Proof of Lemma 7]{Velner}, any strategy profile that is finite-memory produces an ultimately periodic outcome $\rho$ such that $\g_1(\rho) + \g_2(\rho) < 1$, thus not satisfying threshold $(1,1)$. However there exists an infinite path $\rho^{\ast}$ in $(\G,v_0)$ with payoff equal to $(1,1)$: $\rho^{\ast}$ visits $n$ times vertex $v_0$  and then $n$ times vertex $v_1$, and repeats this forever with increasing values of $n$.

Let us define a strategy profile $(\sigma_1,\sigma_2)$ such that its outcome is equal to $\rho^{\ast}$, and as soon as one player deviates from $\rho^{\ast}$, the other player punishes him by always chosing an edge with weights $(0,0)$. This strategy  profile is a secure equilibrium since one verifies that the player who first deviates receives a payoff of 0 instead of 1.

Therefore there exists a secure equilibrium $(\sigma_1,\sigma_2)$ in $(\G,v_0)$ that satisfies $\mu \leq \g(\out{\sigma_1, \sigma_2}_{v_0}) \leq \nu$, but it cannot be finite-memory.
\end{example}

We show that Problem~\ref{prob:ESconstraint} is decidable in the same framework as for Problem{~\ref{prob:ES}, with the additional hypotheses that one can compute values in $\Glexi$, for $i \in \set{1,2}$, and the next problem is decidable.

\begin{problem} \label{prob:path}
Let $G = (V,E,v_0,r,\Val)$ be a finite directed graph with an initial vertex $v_0$, a weight function $r = (r_1, r_2)$ with $r_i : E \to \IQ$, and a value function $\Val = (\Val^1,\Val^2)$, with $\Val^i : V \to \IQ^2$. Let $\mu, \nu \in (\IQ \cup \{\pm \infty\})^2$ be two thresholds. Is it decidable whether there exists an infinite path $\rho$ in $G$ starting in $v_0$ such that 
\begin{itemize}
\item $\forall k \geq 0, \forall i \in \{1,2\}, \Val^i(\rho_k) \lexicoleq_i \g(\rho_{\geq k})$, and
\item $\mu \leq \g(\rho) \leq \nu$?
\end{itemize}
\end{problem}

\begin{proposition} \label{constSE}
Let $(\G,v_0)$ be an initialized weighted game and $\mu, \nu \in (\IQ \cup \{\pm \infty\})^2$ be two thresholds. Suppose that 
\begin{itemize}
\item for each $i \in \set{1,2}$, $\g_i$ is prefix-linear, and $\Glexi$ is uniformly-determined with computable values,  
\item Problem~\ref{prob:path} is decidable for the graph $G$ constructed from $\G$ such that both functions $\Val^i$, $i \in \{1,2\}$, are constructed from the values $\Val^i(v)$, $v \in V$, in $\Glexi$.
\end{itemize}
One can decide whether  there exists a secure equilibrium $(\sigma_1, \sigma_2)$ in $(\G, v_0)$ such that $\mu \leq \g(\out{\sigma_1, \sigma_2}_{v_0}) \leq \nu$.
\end{proposition}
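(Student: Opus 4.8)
The plan is to reduce the constrained existence of secure equilibria to Problem~\ref{prob:path} by establishing a characterisation of the outcomes of such equilibria. Concretely, I would prove that there exists a secure equilibrium $(\sigma_1,\sigma_2)$ in $(\G,v_0)$ with $\mu \leq \g(\out{\sigma_1,\sigma_2}_{v_0}) \leq \nu$ if and only if there exists an infinite path $\rho$ from $v_0$ in the graph $G$ built from $\G$—where each vertex $v$ is labelled by the values $\Val^i(v)$ of $v$ in $\Glexi$—such that $\Val^i(\rho_k) \lexicoleq_i \g(\rho_{\geq k})$ for all $k \geq 0$ and all $i \in \set{1,2}$, and $\mu \leq \g(\rho) \leq \nu$. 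These are exactly the two conditions of Problem~\ref{prob:path}, so this equivalence together with the hypotheses reduces our question to the decidability of Problem~\ref{prob:path}.

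First I would prove sufficiency. Given a path $\rho$ as above, I reuse the punishment construction of the proof of Proposition~\ref{SE}, but with $\rho$ as the target play instead of the outcome of the uniform optimal strategies: both players follow $\rho$, and the first to deviate, say player $j$, is punished forever by the other player switching to the optimal minimising strategy $\sigma_{-j}^\star$ of $\Glexj$. The outcome is $\rho$, which meets the thresholds. To check that it is a secure equilibrium, suppose player $i$ deviates at the first position $k$ with $\rho_k \in V_i$; by optimality of $\sigma_{-i}^\star$ the punished tail from $\rho_k$ has payoff $\lexicoleq_i \Val^i(\rho_k)$, and the path condition at $\rho_k$ gives $\Val^i(\rho_k) \lexicoleq_i \g(\rho_{\geq k})$. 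Since $\g_i$ is prefix-linear, the common prefix $\rho_{\leq k-1}$ reparametrises both tails by the same affine map $x \mapsto a + b \cdot x$; as its multiplicative coefficients are strictly positive, this map preserves $\lexicoleq_i$, so the deviation payoff is $\lexicoleq_i \g(\rho)$ and is not profitable.

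Next comes necessity, which I expect to be the main obstacle. Let $(\sigma_1,\sigma_2)$ be a secure equilibrium with outcome $\rho$ satisfying the thresholds. For a position $k$ with $\rho_k \in V_i$ I obtain $\Val^i(\rho_k) \lexicoleq_i \g(\rho_{\geq k})$ by contradiction: otherwise player $i$ could follow $\rho$ up to $\rho_k$ and then switch to an optimal strategy of player $i$ in $\Glexi$, which against any opponent strategy secures a tail payoff $\lexicogeq_i \Val^i(\rho_k) \lexicog_i \g(\rho_{\geq k})$; applying the order-preserving affine map induced by $\rho_{\leq k-1}$ then yields a strictly $\lexicol_i$-larger total payoff, a profitable deviation contradicting the equilibrium. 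The delicate point is to extend the inequality to positions $k$ with $\rho_k \notin V_i$, where player $i$ cannot act and so the deviation argument is unavailable. Here I would exploit the Bellman value-propagation inequalities, which hold because $\Glexi$ is positionally determined: at a vertex not controlled by player $i$ the value $\Val^i$ is $\lexicoleq_i$-minimal (after the relevant edge weighting) over the successors, hence propagating this relation along the maximal block of such vertices up to the next position controlled by player $i$—where the inequality is already established—forces $\Val^i(\rho_k) \lexicoleq_i \g(\rho_{\geq k})$ as well; the degenerate case in which no later player-$i$ vertex occurs is immediate, since then $\Val^i(\rho_k)$ is a minimum over continuations that already includes $\rho_{\geq k}$.

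Finally I would assemble the reduction. By hypothesis the values $\Val^i(v)$ in $\Glexi$ are computable, so the labelled graph $G$ of Problem~\ref{prob:path} is effectively constructible from $\G$, and Problem~\ref{prob:path} is assumed decidable for it; combined with the characterisation, this decides the existence of a secure equilibrium whose outcome satisfies $\mu \leq \g(\out{\sigma_1,\sigma_2}_{v_0}) \leq \nu$. Beyond the necessity argument for non-controlled vertices, the point requiring most care is the systematic use of prefix-linearity: throughout, one must verify that the affine reparametrisations induced by finite prefixes have strictly positive multiplicative coefficients, so that they preserve the lexicographic orders $\lexicoleq_1$ and $\lexicoleq_2$—which holds for all prefix-linear measures of Definition~\ref{def:payoff}.
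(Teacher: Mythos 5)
Your overall architecture is the same as the paper's: you characterise the outcomes of secure equilibria by the condition $\Val^i(\rho_k) \lexicoleq_i \g(\rho_{\geq k})$ for all $k$ and $i$ (this is exactly Lemma~\ref{lem} in the paper), and then reduce to Problem~\ref{prob:path} using the computability of the values. Your sufficiency direction (punishment strategies built on a path satisfying the condition) matches the paper's, and your remark that the affine reparametrisations must preserve the lexicographic orders is a legitimate point of care that the paper handles only implicitly.

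However, your necessity direction contains a genuine gap where you depart from the paper. You claim the deviation argument is ``unavailable'' at positions $k$ with $\rho_k \notin V_i$ and substitute a Bellman value-propagation argument. The premise is wrong: by Definition~\ref{def:optimal}, an optimal strategy $\sigma_i^\star$ for player $i$ and vertex $v$ guarantees $\Val^i(v) \lexicoleq_i \g(\out{\sigma_i^\star,\sigma_{-i}}_v)$ against \emph{every} opponent strategy, irrespective of which player controls $v$. So the deviation ``follow the equilibrium up to $\rho_k$, then switch to $\sigma_i^\star$'' is well defined and profitable at every position $k$, exactly as in the paper's proof of Lemma~\ref{lem}, and no case split on ownership is needed. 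Worse, the replacement you propose does not go through under the stated hypotheses: Proposition~\ref{constSE} only assumes that each $\g_i$ is prefix-linear and that $\Glexi$ is uniformly-determined with computable values; no Bellman-type optimality equations relating $\Val^i(v)$ to the values of its successors are given, and establishing them for an arbitrary prefix-linear payoff is additional work you do not supply. The ``degenerate case'' of an infinite suffix with no player-$i$ vertex is also not immediate --- it would require passing the one-step inequality to a limit, which again is unjustified in this generality. The fix is simply to drop the case distinction and run the single deviation argument at every $k$.
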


The proof of this theorem is based on the next lemma that characterizes the outcomes of secure equilibria in $(\G,v_0)$.

\begin{lemma} \label{lem}
Let $\rho$ be a play in $\G$ starting in $v_0$.  Then $\rho$ is the outcome of a secure equilibrium in $(\G,v_0)$ if and only if  
\begin{eqnarray} \label{eq:caractES}
\forall k \geq 0, \forall i \in \{1,2\}, \quad \Val^i(\rho_k) \lexicoleq_i \g(\rho_{\geq k}).
\end{eqnarray}	
\end{lemma}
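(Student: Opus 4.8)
The plan is to prove the two implications separately, reusing the punishment construction from the proof of Proposition~\ref{SE}. Throughout I would exploit two ingredients: the optimality of the uniform optimal strategies $\sigma^\star_i$ (the maximizer in $\Glexi$) and $\sigma^\star_{-i}$ (the minimizer in $\Glexi$), and the fact that prefix-linearity turns the payoff of a full play into an affine image $\g(h\pi) = a + b\cdot\g(\pi)$ of the payoff of its suffix, where the coefficients $a,b$ (with $b$ strictly positive in each component for the measures at hand) depend only on the prefix $h$ and are therefore identical for two plays sharing that prefix. Since $b>0$, the map $x \mapsto a + b\cdot x$ preserves both $\lexicoleq_i$ and its strict version $\lexicol_i$; this is the technical glue used in both directions.

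For the ($\Rightarrow$) direction I would argue by contraposition. Suppose the characterization fails, i.e. there are $k \geq 0$ and $i \in \{1,2\}$ with $\g(\rho_{\geq k}) \lexicol_i \Val^i(\rho_k)$, and let $(\sigma_1,\sigma_2)$ be any profile with outcome $\rho$. I would build a deviation $\sigma'_i$ that follows $\sigma_i$ until the prefix $\rho_0\cdots\rho_k$ is produced and then switches to the optimal maximizing strategy $\sigma^\star_i$ of $\Glexi$ from $\rho_k$. The outcome $\rho' = \out{\sigma'_i, \sigma_{3-i}}_{v_0}$ shares that prefix with $\rho$, and its suffix $\rho'_{\geq k}$ is consistent with $\sigma^\star_i$ from $\rho_k$, so maximizer optimality gives $\Val^i(\rho_k) \lexicoleq_i \g(\rho'_{\geq k})$. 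Chaining with the failed inequality yields $\g(\rho_{\geq k}) \lexicol_i \g(\rho'_{\geq k})$, and applying the common affine map of the shared prefix gives $\g(\rho) \lexicol_i \g(\rho')$; hence $\sigma'_i$ is a profitable deviation and $(\sigma_1,\sigma_2)$ is not a secure equilibrium.

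For the ($\Leftarrow$) direction I would construct a secure equilibrium with outcome $\rho$ exactly as in Proposition~\ref{SE}: both players follow $\rho$, and the first player to deviate, say player $j$, is punished by the other switching to the uniform optimal minimizing strategy $\sigma^\star_{-j}$ of $\Glexj$. To rule out profitable deviations, suppose player $j$ deviates and let $\rho_m \in V_j$ be the vertex at which the outcome $\rho'$ first leaves $\rho$; then $\rho$ and $\rho'$ share the prefix $\rho_0\cdots\rho_m$, and from $\rho_m$ the suffix $\rho'_{\geq m}$ is consistent with $\sigma^\star_{-j}$. Minimizer optimality gives $\g(\rho'_{\geq m}) \lexicoleq_j \Val^j(\rho_m)$, while the hypothesis at index $m$ gives $\Val^j(\rho_m) \lexicoleq_j \g(\rho_{\geq m})$, so $\g(\rho'_{\geq m}) \lexicoleq_j \g(\rho_{\geq m})$. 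Applying the common affine map of the shared prefix yields $\g(\rho') \lexicoleq_j \g(\rho)$, so player $j$ gains nothing; as this holds for $j \in \{1,2\}$, the profile is a secure equilibrium with outcome $\rho$.

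The delicate points are bookkeeping rather than conceptual. The crux of both directions is to decompose $\rho$ and the deviating play $\rho'$ at the divergence vertex so that they share the \emph{same} prefix and hence the same affine coefficients $(a,b)$ from prefix-linearity; this is precisely what transports a comparison of suffixes to a comparison of the full plays. One must also check that $b$ is strictly positive (it is: $b=1$ for the prefix-independent measures and $b=\lambda^n$ for $\Disc^\lambda$), since a vanishing component would break the preservation of the lexicographic orders $\lexicoleq_i$. Finally, as in Proposition~\ref{SE}, the ``first deviator'' punishment must be made precise so the strategies are well defined; note that reproducing an arbitrary $\rho$ may force infinite memory (cf. Example~\ref{ex:infini}), but this is irrelevant to the characterization of outcomes.
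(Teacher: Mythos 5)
Your proof is correct and follows essentially the same route as the paper's: for ($\Rightarrow$) a deviation that switches to the maximizer's uniform optimal strategy $\sigma^\star_i$ at the offending index $k$ (the paper phrases it by contradiction and picks the smallest such $k$, which is not actually needed), and for ($\Leftarrow$) the punishment profile of Proposition~\ref{SE} combined with minimizer optimality and the hypothesis at the divergence index. Your explicit remark that the affine coefficient $b$ must be strictly positive for the map $x \mapsto a + b\cdot x$ to preserve $\lexicoleq_i$ (which holds here since $b=1$ or $b=\lambda^n$) is a point the paper leaves implicit, but otherwise the two arguments coincide.
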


\begin{proof}
As in the proof of Proposition~\ref{SE}, we denote by $\sigma_i^\star$ and $\sigma_{-i}^\star$ the uniform optimal strategies for player $i$ and $-i$ respectively in $\Glexi$.

First, let $(\tau_1,\tau_2)$ be a secure equilibrium in $(\G,v_0)$, and $\rho = \out{\tau_1, \tau_2}_{v_0}$ be its outcome. By contradiction, assume that there exists $k \geq 0$ and $i \in \{1,2\}$ such that $\g(\rho_{\geq k})  \lexicol_i \Val^i(\rho_k)$. Take such a $k$ as small as possible and let $h = \rho_{\leq k-1}$. Suppose that $i = 1$ (the case $i=2$ is similar). We can then construct a profitable deviation $\tau'_1$ for player~1: he follows the strategy $\tau_1$ until vertex $\rho_k$ from which he uses his optimal strategy $\sigma_1^\star$ in $\Glexun$. As $\g$ is prefix-linear and $\sigma_1^\star$ is uniform, it follows that for some $a \in \IR^2$ and $b \in (\IR^+)^2$, we have
\begin{eqnarray*}
\g(\out{\tau_1, \tau_2}_{v_0}) &=& a + b \cdot \g(\rho_{\geq k}) \\
							   &\lexicol_1& a + b \cdot \Val^1(\rho_k) \\
							   &\lexicoleq_1& a + b \cdot \g(\out{\sigma_1^\star, \tau_2|_{h}}_{\rho_k}) \\
							   &=& \g(\out{\tau'_1, \tau_2}_{v_0}).
\end{eqnarray*}							   
This is impossible since  $(\tau_1,\tau_2)$ is a secure equilibrium.

Next, let $\rho$ be a play that starts with $v_0$ and satisfies ($\ref{eq:caractES}$). We define a strategy profile $(\tau_1,\tau_2)$ such that $\out{\tau_1, \tau_2}_{v_0} = \rho$, and as soon as player~$i$ deviates, the other player uses his strategy $\sigma_{-i}^\star$ in $\G^{\lexicoleq_i}$ to punish him. Let us prove that $(\tau_1,\tau_2)$ is a secure equilibrium in $(\G,v_0)$. Let $\tau'_1$ be a strategy for player~$1$ such that $\rho' = \out{\tau'_1, \tau_2}_{v_0} \neq \rho$, and let $h\rho_k$ be the longest common prefix of $\rho$ and $\rho'$. We have for some $a \in \IR^2$ and $b \in (\IR^+)^2$:
\begin{eqnarray*}
 \g(\out{\tau'_1, \tau_2}_{v_0}) &=& a + b \cdot \g(\out{\tau'_1|_{h},\sigma_{-1}^\star}_{\rho_k}) \\
 							    &\lexicoleq_1& a + b \cdot \Val^1(\rho_k) \\
							    &\lexicoleq_1& a + b \cdot \g(\rho_{\geq k}) \\
							   &=& \g(\rho).
\end{eqnarray*}		
This shows that player~1 has no profitable deviation. The same holds for player~2. Hence $(\tau_1,\tau_2)$ is a secure equilibrium.
\qed\end{proof}

\begin{proof}[of Proposition~\ref{constSE}]
We propose the next algorithm. First compute $\Val^i(v)$ for each vertex $v$ of $\Glexi$, $i \in \{1,2\}$. Then construct from $(\G,v_0)$ the graph $G = (V,E,v_0,r,\Val)$. Finally test whether there exists a path $\rho$ in $G$ starting in $v_0$ such that $\forall k \geq 0, \forall i \in \{1,2\}, \Val^i(\rho_k) \lexicoleq_i \g(\rho_{\geq k})$, and $\mu \leq \g(\rho) \leq \nu$. Notice that it is indeed an algorithm since each $\Glexi$ is computationally uniformly-determined, and Problem~\ref{prob:path} is decidable. This algorithm is correct by Lemma~\ref{lem}.
\qed\end{proof}

Again, as for Propositions~\ref{SE}-\ref{complSE}, Proposition~\ref{constSE} also holds for distinct payoff functions that mix different measures for the two players

\section{Determinacy of Lexicographic Payoff Games} \label{sec:lexico}

In Section~\ref{sec:equi}, we have established strong links between secure equilibria in initialized weighted games $(\G,v_0)$ and determinacy of the two lexicographic payoff games $\Glexi$, $i \in \set{1,2}$. We have shown how to solve Problems~\ref{prob:ES}-\ref{prob:ESconstraint} under some adequate general hypotheses (see Propositions~\ref{SE}-\ref{constSE}).
In this section, we study the determinacy of lexicographic payoff games for the payoff functions proposed in Definition~\ref{def:payoff}. We show that for all payoffs, these games are uniformly-determined, except for the $\Inf$ and $\Sup$ payoffs for which they are only positionally-determined. We also study the complexity of computing values and optimal strategies for these games. Besides being important ingredients to solve Problems~\ref{prob:ES}-\ref{prob:ESconstraint}, these results are also very interesting on their own right. Our results are the following ones (we state them only for $\Glexun$):

\begin{theorem} \label{thm:uniform-det}
Let $\Glexun$ be a lexicographic payoff game. Then $\Glexun$ is uniformly-determined (resp. positionally-determined) for payoff functions $\MPInf$, $\MPSup$, $\LimInf$, $\LimSup$ and  $\Disc^\lambda$ (resp. $\Inf$ and $\Sup$).
\end{theorem}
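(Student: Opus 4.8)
The plan is to reduce the two-dimensional lexicographic game $\Glexun$ to two successive one-dimensional (scalar) zero-sum games, solving the primary component first and then the secondary component on a restricted arena, following and adapting the technique of~\cite{KTB} for lexicographic mean-payoff. Recall that in $\Glexun$ both players compare payoffs with the single total order $\lexicoleq_1$: player~1 maximizes and player~2 minimizes $(x_1,x_2)$ with respect to $\lexicoleq_1$. By definition of $\lexicoleq_1$, both players treat the first coordinate $\g_1$ as dominant and only break ties on $\g_1$ by looking at $\g_2$; since a $\lexicoleq_1$-larger payoff has a \emph{smaller} second coordinate, player~1 then wants $\g_2$ small and player~2 wants $\g_2$ large. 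So the roles on the second coordinate are swapped relative to the first, and I would exploit this to solve the game in two stages.

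\emph{Stage~1 (primary component).} First I would consider the scalar zero-sum game on the same arena with payoff $\g_1$, where player~1 maximizes and player~2 minimizes. For each of $\MPInf$, $\MPSup$, $\LimInf$, $\LimSup$, $\Disc^\lambda$ (and also $\Inf$, $\Sup$), this scalar game is positionally determined by classical results on one-dimensional quantitative games, and for the first five it is moreover \emph{uniformly} positionally determined, with optimal positional strategies independent of the starting vertex. Let $W_1 : V \to \IR$ be its value. I would then carve out the \emph{primary-optimal sub-arena} $\A'$ by keeping from each vertex exactly the moves consistent with preserving $W_1$ (the tight moves, read off from the value and optimal-strategy characterization of the scalar game). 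The facts to verify are soundness and completeness of this restriction: every infinite play staying in $\A'$ has first-coordinate payoff exactly $W_1(v)$; player~1 can force the play to stay in $\A'$ while guaranteeing $\g_1 \geq W_1(v)$; and player~2 can likewise force $\g_1 \leq W_1(v)$.

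\emph{Stage~2 (secondary component) and combination.} On $\A'$ I would solve a second scalar game, now with payoff $\g_2$ but with the roles swapped: player~1 minimizes $\g_2$ and player~2 maximizes it. This is again a one-dimensional game of the same measure, hence (uniformly) positionally determined; call its value $W_2$. I would then claim $\Val(v) = (W_1(v), W_2(v))$ and build the optimal positional strategies of $\Glexun$ by composing the primary and secondary ones: each player restricts to his tight moves securing $W_1$, and among those plays the secondary-optimal move for $\g_2$. Verifying optimality is a short case analysis against an arbitrary opponent strategy, showing both $\SupVal(v) \lexicoleq_1 (W_1(v),W_2(v))$ and $(W_1(v),W_2(v)) \lexicoleq_1 \InfVal(v)$: any deviation either loses on the dominant coordinate (handled by Stage~1) or, if it stays primary-optimal, loses on the second coordinate (handled by Stage~2). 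This simultaneously gives $\SupVal(v) = \InfVal(v)$, i.e. determinacy, and since for $\MPInf$, $\MPSup$, $\LimInf$, $\LimSup$, $\Disc^\lambda$ both scalar ingredients are uniform and the composition is a single positional strategy, it yields \emph{uniform} determinacy.

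The main obstacle, and where the measures split, is the correct definition and analysis of $\A'$. For the prefix-independent measures ($\MPInf$, $\MPSup$, $\LimInf$, $\LimSup$), optimality of the first coordinate is an asymptotic property, so the confinement argument (that both players can keep the play inside $\A'$ while exactly realizing $W_1$) must go through the positional-strategy structure rather than edge by edge; for $\Disc^\lambda$, which is prefix-linear but not prefix-independent, the discounting has to be carried through the value equations, and it is precisely the prefix-linearity that lets the residual value on $\A'$ be compared cleanly. The genuinely different case is $\Inf$ and $\Sup$: by Remark~\ref{rem:prefix-linear} these are neither prefix-independent nor prefix-linear, so the current minimum or maximum seen along the prefix matters, the clean two-stage reduction no longer produces start-independent strategies, and one obtains only positional (per-vertex) optimal strategies. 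I would treat $\Inf$ and $\Sup$ by a direct argument that tracks the best weight attainable while respecting the lexicographic trade-off; this is exactly what forces the weaker positional-only conclusion for these two measures and accounts for the larger memory bound assigned to them in Theorem~\ref{thm:existenceES}.
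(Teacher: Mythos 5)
Your plan works, and matches the paper, for $\Disc^\lambda$: the paper solves the scalar discounted game $\D$ on $r_1$, keeps only the edges $E'$ that are tight in the Bellman equations, and then solves a second discounted game $\D'$ on $(V,E')$ with weight $r_2$ and the roles swapped; its Lemma on deviations shows that leaving $E'$ strictly changes the first coordinate, which settles the lexicographic comparison. Your sketch for $\Inf$ and $\Sup$ (track the running extremum, accept per-vertex strategies) is also in the spirit of the paper, which reduces $\Inf$ to $\LimInf$ by augmenting vertices with the current minima and extracts positional strategies from a cascade of reachability/safety games.

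The genuine gap is in the prefix-independent cases $\MPInf$, $\MPSup$, $\LimInf$, $\LimSup$, precisely at the point you flag and then leave unresolved: the ``primary-optimal sub-arena $\A'$ of tight moves'' is not well defined there, and no edge-local restriction can play the role that $E'$ plays for $\Disc^\lambda$. Because these payoffs are prefix-independent, \emph{every} edge inside a value-level set lies on some play that is optimal for the first coordinate (traverse it once, then play optimally forever), so ``keep the tight edges'' either removes nothing or, if you instead fix one scalar-optimal positional strategy, begs the question of why the resulting pair is the lexicographic value. Worse, the two coordinates do not decouple into two successive scalar games: player~1 may rationally accept cycles whose first-coordinate mean is \emph{strictly above} the level $l$ in exchange for a different second coordinate, and whether this helps him depends on the asymptotic density with which such cycles occur. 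The paper handles this with machinery your proposal does not supply: it partitions the \emph{vertices} (not edges) by their scalar value $l$, and on each level set builds a finite cycle-forming game $\F^l$ whose cost is $-1$ if the formed cycle $c$ has $\M_1(c)>l$, $|r_2|+1$ if $\M_1(c)<l$, and $\M_2(c)$ if $\M_1(c)=l$; uniform determinacy of that finite game, the cycle-decomposition of infinite plays, and a $\liminf$/$\limsup$ computation splitting on $\kappa=\liminf_n J_2(n)/n$ are what transfer the finite-game value to the infinite lexicographic game. Without this (or an equivalent device, e.g.\ a bias-based tight sub-arena together with a density argument), your Stage~2 claim $\Val(v)=(W_1(v),W_2(v))$ is unsupported for the mean-payoff and $\LimInf$/$\LimSup$ measures, which are the hardest cases of the theorem.
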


\begin{theorem} \label{thm:compl-lexico}
Let $\Glexun$ be a lexicographic payoff game. 
\begin{enumerate}
\item Let $v$ be a vertex of $\Glexun$ and $(\alpha,\beta) \in \IQ^2$ be a pair of rationals. Deciding whether $(\alpha,\beta) \lexicoleq_1 \Val(v)$  is in {\sf NP} $\cap$ {\sf co-NP} for payoff functions $\MPInf$, $\MPSup$ and  $\Disc^\lambda$,  and {\sf P}-complete for  payoff functions $\LimInf$, $\LimSup$, $\Inf$ and $\Sup$.
\item The (rational) value of each vertex of $\Glexun$ can be computed in pseudo-polynomial time for payoff functions $\MPInf$, $\MPSup$ and  $\Disc^\lambda$, and in polynomial time for  payoff functions $\LimInf$, $\LimSup$, $\Inf$ and $\Sup$.
\item Uniform (resp. positional) optimal strategies for both players can be computed in pseudo-polynomial time for payoff functions $\MPInf$, $\MPSup$ and  $\Disc^\lambda$, and in polynomial time for payoff functions $\LimInf$ and $\LimSup$ (resp. $\Inf$ and $\Sup$).\end{enumerate}
\end{theorem}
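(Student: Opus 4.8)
The plan is to reduce each two-dimensional lexicographic game $\Glexun$ to a pair of ordinary one-dimensional zero-sum quantitative games that are solved in sequence, and then to quote the known complexity of the corresponding single-dimensional measures. By Theorem~\ref{thm:uniform-det} I may already assume that $\Glexun$ is determined (uniformly for $\MPInf, \MPSup, \LimInf, \LimSup, \Disc^\lambda$, positionally for $\Inf, \Sup$), so that $\Val(v)$ is well defined; write $\Val(v) = (w_1(v), w_2(v))$ for its two components. First I would consider the \emph{primary} game $W_1$, played on the arena $(V,E)$, in which player~1 maximizes $\g_1$ and player~2 minimizes it. Since player~1's top priority under $\lexicoleq_1$ is to increase $\g_1$ and player~2's top priority is to decrease it, the first component $w_1(v)$ equals the value of this classical $\g_1$ weighted game. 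For $\g_1 \in \{\MPInf, \MPSup, \Disc^\lambda\}$ this value is computable in pseudo-polynomial time and the associated threshold problem is in {\sf NP} $\cap$ {\sf co-NP}; for $\g_1 \in \{\LimInf, \LimSup, \Inf, \Sup\}$ it is computable in polynomial time by reductions to B\"uchi- and reachability-type games.

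Next, fixing the optimal first-component value, I would build a \emph{restricted arena} $G'$ by deleting from $(V,E)$ every move inconsistent with both players securing $w_1$: at a player-1 vertex $u$ I keep only those successors from which player~1 can still guarantee $w_1(u)$ for the first component, and dually at player-2 vertices. On $G'$ every play has first component exactly $w_1$, so the order $\lexicoleq_1$ collapses to ordinary optimization of the second component; by the definition of $\lexicoleq_1$, player~1 now \emph{minimizes} $\g_2$ while player~2 \emph{maximizes} it. Solving this \emph{secondary} $\g_2$ weighted game $W_2$ on $G'$ yields $w_2(v)$, in the same complexity class as the primary game. A lexicographic-optimal strategy is obtained by playing a primary-optimal strategy and, inside $G'$, following a secondary-optimal one; since the one-dimensional optimal strategies are positional (uniform for $\MPInf, \MPSup, \LimInf, \LimSup, \Disc^\lambda$, merely positional for $\Inf, \Sup$), the combined strategy inherits these properties. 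This settles part~(3), with the quoted pseudo-polynomial/polynomial running times, and part~(2), since computing $w_1$ and then $w_2$ exactly is pseudo-polynomial for $\MPInf, \MPSup, \Disc^\lambda$ and polynomial for the remaining measures.

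For part~(1) I would simply unfold the order: $(\alpha,\beta) \lexicoleq_1 \Val(v)$ holds iff $\alpha < w_1(v)$, or else $\alpha = w_1(v)$ and $\beta \geq w_2(v)$. Each clause is a threshold query on a single one-dimensional game ($w_1$ on $\G$, and $w_2$ on $G'$), so the whole test is a fixed Boolean combination of such queries. For $\MPInf, \MPSup, \Disc^\lambda$ each single-dimensional threshold query lies in {\sf NP} $\cap$ {\sf co-NP}, and this class is preserved under the constant-size combination used here, because every query together with its negation admits a polynomial-size certificate (an optimal positional strategy witnessing the bound, together with the guessed exact value of $w_1$ used to certify $G'$); hence the test is in {\sf NP} $\cap$ {\sf co-NP}. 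For $\LimInf, \LimSup, \Inf, \Sup$ each threshold query is in {\sf P}, and {\sf P}-hardness is inherited since these games already subsume reachability games, giving {\sf P}-completeness.

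The delicate point, and the one I expect to be the main obstacle, is the correctness of the restricted-arena construction: one must show that optimizing the second component \emph{within} $G'$ genuinely computes the lexicographic value, i.e. that neither player can gain on $\g_2$ by leaving $G'$ without strictly losing on $\g_1$. For the prefix-independent measures $\LimInf, \LimSup, \MPInf, \MPSup$ this is clean, since a finite excursion outside $G'$ is invisible to the asymptotic first component while any infinite escape strictly lowers $w_1$, so the two positional strategies coordinate on $G'$ without conflict. For $\Disc^\lambda$ (only prefix-linear) and, more seriously, for $\Inf$ and $\Sup$ (which are \emph{not} prefix-linear, by Remark~\ref{rem:prefix-linear}), the definition of the tight moves of $G'$ and the strategy-combination argument must be reworked; this is precisely where the adaptation of the mean-payoff technique of~\cite{KTB} becomes non-trivial, and where I would concentrate most of the effort.
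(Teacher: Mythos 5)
There is a genuine gap: the two-stage reduction (solve the one-dimensional $\g_1$ game, prune to ``value-preserving'' moves, then solve a $\g_2$ game on the pruned arena) is only sound for $\Disc^\lambda$, where it is exactly the paper's argument (the games $\D$ and $\D'$ on the optimal edge set $E'$, justified by the fixed-point equations and by the fact that any single non-optimal edge strictly perturbs the discounted first component). For the prefix-independent measures your key claim --- ``on $G'$ every play has first component exactly $w_1$, so $\lexicoleq_1$ collapses to ordinary optimization of the second component'' --- is false. Take $V_1=\{u\}$, $V_2=\{w\}$, edges $u\to w$ with reward $(0,0)$, $w\to u$ with reward $(4,0)$, and a self-loop $w\to w$ with reward $(3,10)$, under $\MPInf$. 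Every vertex has first-component value $2$, so no edge is pruned; yet the play $w^\omega$ stays in $G'$ and has first component $3$. The lexicographic value of $w$ is $(2,0)$ (player~2 must avoid the loop because $2<3$), whereas your secondary game, in which player~2 simply maximizes $\g_2$ on $G'$, returns $10$. The point you flag as ``clean'' for prefix-independent measures is precisely where the argument breaks: the difficulty is not excursions outside $G'$ but plays \emph{inside} $G'$ whose first component over- or under-shoots $l$, and the lexicographic preference makes such plays incomparable to plain $\g_2$-optimization. This is why the paper's Lemma~\ref{lem:optimal} routes through the finite cycle-forming game $\F^l$ with the three-valued cost ($-1$ if $\M_1(c)>l$, $|r_2|+1$ if $\M_1(c)<l$, $\M_2(c)$ otherwise), and why the complexity bounds for $\MPInf,\MPSup$ are then obtained by a \emph{scalarization} $r^\ast=r_1\cdot m-r_2$ with $m=|V|^2\cdot|r_2|+1$ into a single mean-payoff game, not by two games in sequence.

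The remaining cases are also not covered by your plan. For $\LimInf,\LimSup$ the paper does not solve two quantitative games but reduces the threshold query $(\alpha,\beta)\lexicoleq_1\Val(v)$ to a disjunction of a co-B\"uchi game and a one-pair Rabin game on a split arena (Lemma~\ref{lem:reduction}), recovers values by enumerating the $|E|^2$ candidate pairs, and extracts uniform strategies by the edge-dichotomy of~\cite{ZP}; {\sf P}-hardness needs an explicit reduction from co-B\"uchi (resp.\ safety) games, not just the remark that reachability is subsumed. For $\Inf,\Sup$ you explicitly defer the construction, whereas the paper gives it: values via the product with running minima reducing to $\LimInf$ (Lemma~\ref{lem:valueInf}), and positional strategies via a cascade of three reachability/safety computations (Lemma~\ref{lem:partition}). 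As written, your proposal proves the theorem only for $\Disc^\lambda$.
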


\noindent All the results of Theorems~\ref{thm:uniform-det}-~\ref{thm:compl-lexico} are summarized in Table~\ref{table:determined}.

\begin{table}[ht!]
\begin{center}
\begin{tabular}{|l|l|l|l|l|}
\hline
 & determinacy & $(\alpha,\beta) \lexicoleq_1 \Val(v)$?  & values & optimal strategies \\
\hline 
$\MPInf$, $\MPSup$ & uniform & {\sf NP} $\cap$ {\sf co-NP} & pseudo-polynomial & pseudo-polynomial \\
$\LimInf$, $\LimSup$ & uniform & {\sf P}-complete & polynomial & polynomial \\
$\Inf$, $\Sup$ &  positional & {\sf P}-complete & polynomial & polynomial \\
$\Disc^\lambda$ & uniform & {\sf NP} $\cap$ {\sf co-NP} & pseudo-polynomial & pseudo-polynomial \\
\hline
\end{tabular}
\end{center}
\caption{Obtained results for lexicographic games (see Theorems~\ref{thm:uniform-det}-\ref{thm:compl-lexico})}	\label{table:determined}
\end{table}

Theorem~\ref{thm:uniform-det} states that every $\Inf$ lexicographic payoff game is positionally-determined. The next example shows that there is no hope to have uniformly-determined such games.

\begin{example} \label{ex}
Consider the $\Inf$ lexicographic payoff game depicted in Figure~\ref{Figure:InfLP}. 
Clearly, this game is positionally-determined, such that each vertex has value $(2, 0)$ except vertex $v_4$ that has value $(3, 1)$. Let us show that it is not uniformly-determined. To guarantee value $(2, 0)$ for $v_0$, player 1 has to use the positional strategy $\sigma_1^\star$ such that $\sigma_1^\star(v_4) = v_3$.   Indeed, this is the only way to have  
$(2, 0) \lexicoleq_1 \Inf(\out{\sigma_1^\star, \sigma_2}_{v_0})$ for all strategy $\sigma_2$ of player $2$. However, with this strategy, player~1 cannot guarantee value $(3,1)$ for $v_4$.

\begin{figure}[ht!]
\begin{center}
\begin{tikzpicture}[initial text=,auto, node distance=1.5cm, shorten >=1pt] 

\node[initial, state, scale=0.7]    (v0)                           {$v_0$};
\node[state, rectangle, scale=0.7]               (v2)    [right=of v0]    {$v_2$};
\node (fictif) [right=2cm of v2] {};
\node[state, scale=0.7]    (v3)    [above=0.5cm of fictif]     {$v_3$};
\node[state, scale=0.7]    (v4)    [below=0.5cm of fictif]    {$v_4$};

\path[->] (v0) edge               node[midway, scale=0.7] {$(2, 1)$} (v2)

          (v2) edge               node[above, scale=0.7] {$(2, 0)$}  (v3)
               edge               node[below, scale=0.7] {$(3, 1)$} (v4)

          (v4) edge               node[right, scale=0.7] {$(2, 0)$}  (v3)

          (v3) edge [loop right]  node[midway, scale=0.7] {$(2, 0)$}  ()

          (v4) edge [loop right]  node[midway, scale=0.7] {$(3, 1)$}  ();

\end{tikzpicture}
\end{center}
\caption{An $\Inf$ lexicographic payoff game\label{Figure:InfLP} that is not uniformly-determined.}
\end{figure}
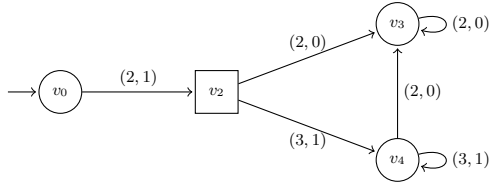

\end{example}

Before giving the proofs of Theorems~\ref{thm:uniform-det} and~\ref{thm:compl-lexico}, notice that we are now able to prove Theorems~\ref{thm:existenceES}-\ref{thm:complES} for the payoff functions $\MPInf$, $\MPSup$, $\LimInf$, $\LimSup$ and $\Disc^\lambda$. Indeed the hypotheses of  Propositions~\ref{SE}-\ref{complSE} are satisfied as the two components of these payoff functions are prefix-linear (see Remark~\ref{rem:prefix-linear}) and the games $\Glexi$, $i \in \{1,2\}$, are uniformly-determined with computable uniform optimal strategies (see Theorems~\ref{thm:uniform-det}-\ref{thm:compl-lexico}). 


\begin{proof}[of Theorems~\ref{thm:existenceES} and~\ref{thm:complES}, except for $\Inf$ and $\Sup$ payoffs] 
Let $(\G,v_0)$ be an initialized weighted game with a payoff function equal to one among $\MPInf$, $\MPSup$, $\LimInf$, $\LimSup$ and $\Disc^\lambda$.
By Remark~\ref{rem:prefix-linear}, this function is prefix-linear. Moreover, by Theorem~\ref{thm:uniform-det}, each game $\Glexi$, $i \in \set{1,2}$, is uniformly-determined. Therefore, since the hypotheses of Proposition~\ref{SE} are satisfied, there exists a finite-memory secure equilibrium in $(\G,v_0)$ with memory size at most $|V| + 2$, leading to Theorem~\ref{thm:existenceES}. The complexities stated in Theorem~\ref{thm:complES} are obtained as a consequence of Propositions~\ref{complSE} and~\ref{thm:compl-lexico}.
\qed\end{proof}

For $\Inf$ (resp. $\Sup$) payoffs, the proof of Theorem~\ref{thm:existenceES} cannot be based on Proposition~\ref{SE} since none of its hypotheses is satisfied. Indeed, the payoff function $\Inf$ is not prefix-linear (see Remark~\ref{rem:prefix-linear}), and there exist $\Inf$ lexicographic payoff games that are not uniformly-determined (see Example~\ref{ex}). 
However we get Theorems~\ref{thm:existenceES}-\ref{thm:complES} for $\Inf$ (resp. $\Sup$) payoffs thanks to a polynomial reduction to $\LimInf$ (resp. $\LimSup$) payoffs and the fact that Theorems~\ref{thm:existenceES}-\ref{thm:complES} hold for the latter payoffs.


\begin{proof}[of Theorems~\ref{thm:existenceES} and~\ref{thm:complES} for $\Inf$ and $\Sup$ payoffs] 
Let $\G = (V, V_1, V_2, E, r, \Inf)$ be an $\Inf$ weighted game and $v_0$ be an initial vertex (the proof is similar for $\Sup$ payoff). Let $R = \{r_1(e) \mid e \in E\} \cup \{r_2(e) \mid e \in E\} \cup \{+\infty\}$ be the set of weights labeling the edges of $E$ with, in addition, a highest new weight $+\infty$. We construct 
an initialized $\LimInf$ weighted game $(\G', v'_0)$ with $\G' = (V', V'_1, V'_2, E', r', \LimInf)$ as follows:
\begin{itemize}

  \item $V' = V \times R \times R$ is partitioned into $V'_1 = V_1 \times R \times R$ and $V'_2 = V_2 \times R \times R$,

  \item $E' \subseteq V' \times V'$ is the set of edges of $E$ augmented with the current smallest weights, that is, $e' = ((v, w_1, w_2), (v', w'_1, w'_2)) \in E'$ iff $e = (v,v') \in E$ and $w'_i = \min \{w_i,r_i(e)\}$, with $i \in \set{1, 2}$,
  
    \item $r' = (r'_1,r'_2)$ is the weight function that also remembers the current smallest weights, that is, for each $i \in \set{1, 2}$,  $r'_i(e') = w'_i$ for each edge $e' = ((v, w_1, w_2), (v', w'_1, w'_2)) \in E'$,

  \item $v'_0 = (v_0, +\infty, +\infty)$ is the initial vertex augmented with the highest weight $+\infty$ for both components. 

\end{itemize}

Clearly, to any play $\rho$ in $(\G,v_0)$ corresponds a unique play $\rho'$ in $(\G',v'_0)$, and conversely.
Moreover, by construction of $\G'$, for all $n > 0$, we have $\rho'_n = (\rho_n,w^1_n,w^2_n)$ with 
$$w^i_n = \min\limits_{0 \leq k < n} r_i((\rho_k, \rho_{k+1}))$$
for $i \in \set{1,2}$. 
Since there is a finite number of possible weight values, the decreasing weight components of $\rho'$ eventually stabilise along the play. It follows that $\Inf(\rho) = \LimInf(\rho')$.

Clearly any finite-memory secure equilibrium in $(\G,v_0)$ is a finite-memory secure equilibrium in $(\G',v'_0)$, and conversely. By Theorems~\ref{thm:existenceES} and~\ref{thm:complES} for $\LimInf$ payoff, $(\G', v'_0)$ has a secure equilibrium with memory at most $|V'| + 2$  that can be computed in polynomial time. Since $|V'| \leq |V| \cdot |E|^2 + 1$, we can conclude that $(\G, v_0)$ has a secure equilibrium with memory at most $|V| \cdot |E|^2 + 3$ that can also be computed in polynomial time. 
\qed\end{proof}

In the next sections~\ref{subsec:MP}-\ref{subsec:Disc}, we are going to prove Theorems~\ref{thm:uniform-det} and~\ref{thm:compl-lexico} for the various payoff functions. In this aim, let us first briefly recall some useful notions and results about several classes of one-dimensional (instead of two-dimensional) games. 

\subsection{Useful Background}

\emph{Mean-payoff games} (resp. \emph{Liminf games}, \emph{discounted games}) are games as in Definition~\ref{def:lexgame} except that functions $r$ and $\g = \MPInf$ (resp. $\LimInf$, $\Disc^\lambda$) are one-dimensional instead of two-dimensional. The comparison of two payoffs is thus done with usual order $\leq$ instead of $\lexicoleq_1$. As for lexicographic payoff games, the notions of value and optimal strategies can be defined for such games.
It is a classical result that mean-payoff games are uniformly-determined~\cite{EM79,ZP}. Moreover deciding whether the value of a vertex is greater than or equal to a given threshold is in in {\sf NP} $\cap$ {\sf co-NP}, there exists a pseudo-polynomial time algorithm to compute the values of each vertex and optimal strategies for both players~\cite{ZP}. The same results hold for discounted games~\cite{ZP,LNCS2500}. LimInf games  are known to be uniformly-determined~\cite{DH09}.

In \cite{GimbertZ05}, the authors have proposed general conditions to guarantee the uniform determinacy of zero-sum games $\G^\sqsubseteq = (V, V_1, V_2, E, r, \g, \sqsubseteq)$ where $\g$ is any payoff function and $\sqsubseteq$ is any preference relation\footnote{A preference relation is a complete, reflexive and transitive relation.}

\begin{theorem}[\cite{GimbertZ05}]\label{thm:UniformlyDeterminedHugo}
Suppose that for the preference relation $\sqsubseteq$ each zero-sum game $\G^\sqsubseteq = (V, V_1, V_2, E, r, \g, \sqsubseteq)$ such that either $V_1 = \emptyset$ or $V_2 = \emptyset$, the unique player controlling all $V$ has a 
uniform optimal strategy in the game $\G^\sqsubseteq$. 
Then for all finite two-player games $\G^\sqsubseteq$ both players 
have uniform optimal strategies in the games $\G^\sqsubseteq$.  
\end{theorem}

This characterization will allow us to show the uniform-determinacy of lexicographic cost games with the $\MPInf$, $\MPSup$, $\LimInf$, and $\LimSup$ cost functions. The cases of $\Inf$, $\Sup$ and $\Disc^\lambda$ will be studied differently. 

We also need the next proposition about the cycle decomposition of an  infinite path in an arena $(V,E)$. Given a path $\rho$, we consider its \emph{cycle decomposition} into a multiset of simple cycles as follows.\footnote{We can similarly consider the cycle decomposition of finite paths $\rho$.} We push successively vertices $\rho _0, \rho_1, \dots$ onto a stack. Whenever we push a vertex $\rho_m$ equal to a vertex $\rho_n$ in the stack (i.e. a simple cycle $\rho_n \dots \rho_{m}$ is formed), we remove it from the stack (i.e. we remove all the vertices above $\rho_n$, but not $\rho_n$) and add it to the cycle decomposition multiset of $\rho$. Notice that at any moment, the stack contains the vertices of a simple path, thus at most $|V|$ vertices. 

\begin{proposition} \label{prop:decomposition}
Let $(V,E)$ be an arena. For each infinite path $\rho$ in $(V,E)$,  
there exists $n_0 \in \IN$ such that for all $n \geq n_0$, $(\rho_n, \rho_{n+1})$ 
is an edge of a cycle that appears in the cycle decomposition multiset of $\rho$. 
\end{proposition}

\begin{proof}
%
As $\rho$ is infinite, there exists $n_0 \in \IN$ such that 
for all $n \geq n_0$, each vertex $\rho_n$ is infinitely often repeated in $\rho_{\geq n_0}$. 
Consider the edge $(\rho_n,\rho_{n+1})$, with $n \geq n_0$. When $q = \rho_n$ is put 
on the stack, either a cycle is formed and removed from the stack, or a cycle is not 
yet been formed. In both cases, $q$ is at the top of the stack. Then $q' = \rho_{n+1}$ 
is pushed on the stack above $q$, as well as $\rho_{n+2}, \rho_{n+3}, \ldots$ and formed 
simple cycles are successively removed. At a certain point, a simple cycle $c$ containing 
the edge $(q,q')$ will be formed since $q$ appears infinitely often along $\rho$. 
\qed\end{proof}

Let us now recall some useful properties about \emph{reachability games} (resp. \emph{safety games}, \emph{co-B\"uchi} games, \emph{Rabin games})\cite{LNCS2500,EmersonJ88,PnueliR89}. They are two-player zero-sum games played on an arena which is a finite directed graph $(V,E)$ with no weights. A set $A \subseteq V$ of vertices is given in the case of reachability, safety and co-B\"uchi games, whereas a finite set of pairs $(A_k,B_k)$, with $A_k, B_k \subseteq V$ is given in the case of Rabin games. The two players play in a turn-based manner from a given initial vertex $v_0$. In the case of reachability (resp. safety) game, the produced play~$\rho$ is won by player~1 if $\rho$ visits some vertex of $A$ (resp. no vertex of $A$), and by player~2 otherwise. The set of initial vertices $v_0$ from which player 1 has a winning strategy can be computed in time $O(|V| + |E|)$, and a winning strategy can be chosen positional and computed in time $O(|V| + |E|)$~\cite{LNCS2500}. Moreover deciding whether player~1 has a winning strategy from an initial vertex $v_0$ is known to be {\sf P}-complete \cite{Immerman81}.
For co-B\"uchi games, play $\rho$ is won by player~1 if $\InfVertex(\rho) \cap A = \emptyset$ (with $\InfVertex(\rho)$ being the set of vertices infinitely visited along $\rho$), and by player 2 otherwise. For Rabin games, this condition is replaced by $\exists k: \InfVertex(\rho) \cap A_k = \emptyset \wedge \InfVertex(\rho) \cap B_k \neq \emptyset$. 
One can decide if player 1 has a winning strategy in time $\bigO{n^2}$ \cite{ChatterjeeH12} for co-B\"uchi games and in time $O(n^{p+1}\cdot p!)$ \cite{PitermanP06} for Rabin games, with $n$ the number of vertices and $p$ the number of pairs. Moreover for co-B\"uchi games, this problem is known to be {\sf P}-complete, as this is already the case for the AND-OR graph reachability problem~\cite{Immerman81}. 

We conclude this section with a remark that will be useful when studying the determinacy of lexicographic payoff games.

\begin{remark} \label{rem:positive} 
Without lost of generality we can assume that the game $\Glexun = (V, V_1, V_2, E, r, \g, \lexicoleq_1)$ uses a weight function $r = (r_1,r_2)$ such that $r_i : E \to \IN$ (instead of $r_i : E \to \IQ$) for each $i \in \set{1,2}$. Indeed all the weights appearing in $\Glexun$ have the form $\frac{a}{b}$ with $a \in \IZ$ and $b \in \IN$. 
Let $a^\ast$ be the smallest weight numerator ($0$ if weights are positive in $\Glexun$) and $b^\ast$ be the least common multiple of the weight denominators.
Then we define a new lexicographic payoff game $\G'^{\lexicoleq_1} = (V, V_1, V_2, E, r', \g, \lexicoleq_1)$ such that $r'_i(e) =  r_i(e) \cdot b^\ast  - a^\ast \cdot b^\ast$ for all $e \in E$ and $i \in \set{1,2}$. For the payoff functions $\g$ of Definition~\ref{def:payoff},
given a play $\rho$ in $\Glexun$, its payoff is multiplied by $b^\ast$ and shifted by the value $(-a^\ast \cdot b^\ast,-a^\ast \cdot b^\ast)$ when it is seen as a play in $\G'^{\lexicoleq_1}$. The same holds for values (if they exist), and positional optimal strategies correspond in both games. 
\end{remark}

\subsection{$\MPInf$ and $\MPSup$ Lexicographic Payoff Games} \label{subsec:MP}

In this section we prove that every $\MPInf$ lexicographic payoff game is uniformly-determined (see Theorem~\ref{thm:uniform-det}). We also give the proof of Theorem~\ref{thm:compl-lexico} for $\MPInf$ payoff.
Notice that we do not provide the proofs for $\MPSup$ payoff since they are similar.

Theorems~\ref{thm:uniform-det} and~\ref{thm:compl-lexico} for $\MPInf$ payoff are non trivial extensions to dimension~2 of well-known results about (one-dimensional) mean-payoff games \cite{EM79,ZP}. 
In \cite{KTB}, similar results have been proved for a class of zero-sum games called lexicographic mean-payoff games. It should be noted that this class is different from our class of $\MPInf$ lexicographic payoff games. Indeed, the authors use a lexicographic order $\lexicoleq'_1$ on $\IR^2$ different from ours:  $(x_1, x_2) \lexicoleq'_1 (x'_1, x'_2)$ iff $(x_1 < x'_1) \lor (x_1 = x'_1 \land x_2 \leq x'_2)$, and most importantly the payoff of a play $\rho$ is computed as $ \limAvgInf (r_1(\rho_k, \rho_{k+1}), r_2(\rho_k, \rho_{k+1}))$ at the level of two-dimensional vectors and with the $\liminf$ using order $\lexicoleq'_1$ (whereas we proceed componentwise). An example showing that the two definitions are not equivalent is given in \cite{KTB}\footnote{This example is given in the ArXiv version of \cite{KTB}, following discussions with one of the authors.}.

Let us first proceed to the proof of Theorem~\ref{thm:uniform-det} with $\MPInf$ payoff. 
For this purpose we use Theorem~\ref{thm:UniformlyDeterminedHugo}. We show that each 
$\MPInf$ lexicographic payoff game played by only one player is uniformly determined. 
At the end of this section, we will establish the related complexity stated in Theorem~\ref{thm:compl-lexico}.

\paragraph{Uniform-Determinacy.} 
Let $\Glexun = (V, V_1, V_2, E, r, \MPInf, \lexicoleq_1)$ be a lexicographic payoff game with a weight function with natural weights (see Remark~\ref{rem:positive}). 
For each $i \in \set{1, 2}$, we denote by $|r_i| = \max\{r_i(e) \mid e \in E\}$ the maximal weight for component $i$, and by $|R| = \max\set{|r_1|, |r_2|}$ the maximal weight of the game.

Recall that $\lexicoleq_1$ is a preference relation. Then, by Theorem~\ref{thm:UniformlyDeterminedHugo}, to show that $\Glexun$ is uniformly-determined, it suffices to show 
that if the vertices of the game are controlled by only one player, this player has a uniform optimal strategy. 

\begin{proposition}\label{prop:uniformlyDeterminedOnePlayerGame}
Let $\Glexun = (V, V_1, V_2, E, r, \MPInf, \lexicoleq_1)$ be an $\MPInf$ lexicographic payoff game. 
If $V_2 = \emptyset$ (resp. $V_1 = \emptyset$), then player $1$ (resp. player $2$) 
has a uniform optimal strategy in $\Glexun$. 
\end{proposition}

The proof of Proposition~\ref{prop:uniformlyDeterminedOnePlayerGame} is inspired from~\cite{GimbertZ05} and~\cite{KTB} with non trivial adaptations. It uses the following notation and lemma. Given a simple cycle $c$, we denote for each $i \in \set{1,2}$ 
\begin{equation*}
\M_i(c) = \frac{1}{n - m} \sum_{k = m}^{n-1} r_i(\rho_k, \rho_{k + 1}) 
\end{equation*}
the mean-payoff of the cycle $c$ according to component $i$. 

\begin{lemma}[see for instance \cite{LaurentDoyen}]\label{LimInfSup}
For all sequences $(a_n)_{n \in \IN}$ and $(b_n)_{n \in \IN}$ of reals 
numbers, we have:
\begin{eqnarray*}
\LSup (a_n + b_n) & \leq & \LSup (a_n) + \LSup (b_n) \\
\LSup (a_n - b_n) & \geq & \LSup (a_n) - \LSup (b_n) \\
\LInf (a_n + b_n) & \geq & \LInf (a_n) + \LInf (b_n) \\
\LInf (a_n - b_n) & \leq & \LInf (a_n) - \LInf (b_n). \\
\end{eqnarray*}
\end{lemma}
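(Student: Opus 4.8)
The plan is to prove Lemma~\ref{LimInfSup} directly from the definitions of $\limsup$ and $\liminf$, using only elementary properties of suprema, infima, and monotone limits. All four inequalities are standard consequences of subadditivity of $\limsup$ and superadditivity of $\liminf$, so the work is routine but worth organizing carefully. I would first observe that the third and fourth inequalities follow from the first and second by the identity $\liminf_{n} x_n = -\limsup_{n}(-x_n)$. Concretely, applying the second inequality to the sequences $(-a_n)$ and $(-b_n)$ and negating throughout yields the third, and similarly the fourth follows from the first. So it suffices to establish the two $\limsup$ inequalities, and the two $\liminf$ inequalities come for free.

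For the first inequality, $\LSup(a_n + b_n) \leq \LSup(a_n) + \LSup(b_n)$, the key step is to work with the tail suprema. I would set $A_N = \sup_{n \geq N} a_n$ and $B_N = \sup_{n \geq N} b_n$, so that $\limsup_n a_n = \lim_{N} A_N$ and likewise for $B_N$. For every $n \geq N$ we have $a_n + b_n \leq A_N + B_N$, hence $\sup_{n \geq N}(a_n + b_n) \leq A_N + B_N$. Taking the limit as $N \to \infty$ on both sides (all three quantities being monotone in $N$ and convergent in the extended reals) gives the claimed inequality. One must take a little care when any of the $\limsup$ values is $+\infty$ or $-\infty$, but in those cases the inequality holds trivially or reduces to the finite case.

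For the second inequality, $\LSup(a_n - b_n) \geq \LSup(a_n) - \LSup(b_n)$, I would derive it from the first by a substitution. Write $a_n = (a_n - b_n) + b_n$; applying the first inequality to the pair of sequences $(a_n - b_n)$ and $(b_n)$ yields $\LSup(a_n) \leq \LSup(a_n - b_n) + \LSup(b_n)$, and rearranging gives exactly the second inequality (again assuming the relevant quantities are finite, with the infinite cases handled separately). This keeps the argument economical, as only the first inequality requires a genuine $\varepsilon$-type or tail-supremum argument and everything else is algebraic manipulation and the duality between $\limsup$ and $\liminf$.

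The main obstacle, such as it is, is purely bookkeeping: ensuring the inequalities remain valid when some of the limits take infinite values, so that the subtractions $\LSup(a_n) - \LSup(b_n)$ and $\LInf(a_n) - \LInf(b_n)$ are well defined (avoiding indeterminate forms such as $\infty - \infty$). Since the lemma is cited as a known fact (``see for instance \cite{LaurentDoyen}''), I expect the intended proof is brief and one may even invoke it without detailed justification; a complete self-contained argument would simply reduce all four statements to the single tail-supremum inequality above and dispatch the degenerate cases by inspection.
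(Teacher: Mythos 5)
Your proof is correct in substance. Note that the paper itself gives no proof of this lemma: it is stated with a pointer to \cite{LaurentDoyen} as a standard fact, so there is no ``paper's argument'' to diverge from, and your self-contained tail-supremum proof is exactly the classical one a reader would expect. The only slip is a labeling swap in your reduction step: applying the \emph{first} inequality to $(-a_n)$ and $(-b_n)$ and negating yields the \emph{third} (the $\liminf$ of a sum), while applying the \emph{second} to $(-a_n)$ and $(-b_n)$ yields the \emph{fourth} (the $\liminf$ of a difference), not the other way around as you wrote. Since both reductions appear in your text, this is harmless and the argument goes through. Your caution about infinite values is appropriate in general, but is moot for the paper's application (Lemma~\ref{lem:optimal}), where all sequences involved, such as $J_i(n)/n$ scaled by bounded weights, are bounded.
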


\begin{proof}[of Proposition~\ref{prop:uniformlyDeterminedOnePlayerGame}]
Let $\Glexun = (V, V_1, V_2, E, r, \MPInf, \lexicoleq_1)$ be an $\MPInf$ lexicographic payoff game. 

(a) Suppose first that $V_2 = \emptyset$. 
Let us denote by $\A = (V, V_1, V_2, E)$ the arena of $\Glexun$ where $V_2 = \emptyset$ and by $\Val^{\A}(v)$ the value of $v$ in $\A$. Remark that, in this case, we have that $\Val^{\A}(v) = \sup_{\sigma_1 \in \Sigma_1}\g(\out{\sigma_1}_v) = \sup_{\rho \in V^\omega, \rho_0 = v} \g(\rho)$. Let us construct a uniform optimal strategy $\sigma_1^\star$ for player~$1$ in $\A$ by induction on the size $|V|$ of $\A$ such that for all vertex $v \in V, \g(\out{\sigma_1^\star}_v) = \Val^{\A}(v)$. 

The initial case ($|V| = 1$) is trivial. Suppose that $|V| > 1$. Consider $c_{max}$ a simple cycle in $\A$ with the maximal mean-payoff $(m_1, m_2)$ w.r.t. the lexicographic order $\lexicoleq_1$. Let $W$ be the set of all the vertices of $V$ from which there is a path in $\A$ to reach $c_{max}$. Let us define the strategy $\sigma_1^\star$ on $W$, such that $\sigma_1^\star$ goes as quickly as possible to this maximum payoff cycle and next go round this cycle forever. The produced play $\rho^\star$ from a vertex of $W$ has then payoff $\MPInf(\rho^\star) = \MPInf(c_{max}^\omega) = (\M_1(c_{max}), \M_2(c_{max})) = (m_1, m_2)$.  

Let us now show that for all vertices $v \in W$ any infinite play from $v$ in $\A$ cannot supply a payoff 
greater than $(m_1, m_2)$. Let $\rho$ be any play in $\A$ from $v$. 
Let $\rho_{\leq n}$ be any prefix of $\rho$. By definition of $c_{max}$, each simple cycle $c$ 
of the cycle decomposition multiset of $\rho_{\leq n}$ satisfies 
\begin{align}
\mbox{either} & \quad \M_1(c) = m_1   \mbox{ and }  \M_{2}(c) \geq m_2\label{equation:cas1} \\
\mbox{or}  & \quad \M_1(c) \leq \bar m_1 < m_1.  \label{equation:cas2}
\end{align}

For each prefix $\rho_{\leq n}$ of $\rho$ of length $n$, let us denote 
by $J_1(n)$ (resp. $J_2(n)$) the sum of the lengths of all simple cycles in its cycle decomposition 
that satisfy Property ($\ref{equation:cas1}$) (resp. Property ($\ref{equation:cas2}$)). Recall that $n - J_1(n) - J_2(n) \leq |V|$ by definition of the cycle decomposition.
First observe that for $i \in \set{1, 2}$, 
\begin{equation*}\label{LI}
  0 \leq \LInf\left(\frac{J_i(n)}{n}\right) \leq \LSup\left(\frac{J_i(n)}{n}\right) \leq 1. 
\end{equation*} 
and that 
\begin{equation*}
  \lim_{n \to \infty} \frac{J_1(n) + J_2(n)}{n} = 1, 
\end{equation*}
since
\begin{equation*}
  1 = \lim_{n \to \infty} \frac{n}{n} \geq \lim_{n \to \infty} \frac{J_1(n) + J_2(n)}{n} \geq \lim_{n \to \infty} \frac{n - |V|}{n} = 1.
\end{equation*}

For any $n \in \IN$ we have 
\begin{eqnarray*}
  \sum_{k = 0}^{n - 1} r_1(\rho_k, \rho_{k + 1}) & \leq & J_1(n) \cdot m_1 + J_2(n) \cdot \bar m_1 + |V| \cdot |R|, \textrm{ and } \label{ineq1} \\
  & & \nonumber \\
  \sum_{k = 0}^{n - 1} r_2(\rho_k, \rho_{k + 1}) & \geq & J_1(n)\cdot m_2. \label{ineq2} 
\end{eqnarray*}
   
It follows that
\begin{eqnarray*}
  \MPInf_{1}(\rho) & \leq &  \LInf\left(\frac{J_1(n)}{n} (m_1 - \bar m_1 + \bar m_1) + \frac{J_2(n)}{n} \cdot \bar m_1  + \frac{|V|}{n} \cdot |R|\right) \\
  & & \\
  & \leq & \LInf\left(\frac{J_1(n)}{n} \cdot (m_1 - \bar m_1 ) \right) \\ 
  & & - \LInf\left(- \frac{J_1(n) + J_2(n)}{n} \cdot \bar m_1 - \frac{|V|}{n} \cdot |R| \right) \\
  & & \\
  & \leq & \LInf\left(\frac{J_1(n)}{n}  \cdot (m_1 - \bar m_1 )  \right) + \LSup\left(\frac{J_1(n) + J_2(n)}{n} \cdot \bar m_1 \right) \\
  & & \\
  & = & \LInf\left(\frac{J_1(n)}{n}\right) \cdot (m_1 - \bar m_1 ) + \bar m_1.
\end{eqnarray*}

Recall that $\mu = \LInf\left(\frac{J_1(n)}{n}\right) \leq 1$.

\begin{enumerate}[(a)]
  
\item If $\mu < 1$, then $\MPInf_{1}(\rho) < m_1$.

\item If $\mu = 1$, then  $\MPInf_{1}(\rho) \leq m_1$, and 
  \begin{equation*}
    \MPInf_{2}(\rho) \geq \LInf\left(\frac{J_1(n)}{n}.m_2\right) = m_2.
  \end{equation*}

\end{enumerate}

We can conclude that $\MPInf(\rho) \lexicoleq_1 (m_1, m_2)$. 


We then have that $\g(\out{\sigma_1^\star}_v) = \sup_{\rho \in V^\omega, \rho_0 = v} \g(\rho) = \Val^{\A}(v) = (m_1, m_2)$. If $|V| = |W|$, $\sigma_1^\star$ is the required uniform optimal strategy for player~$1$. Else, consider the subarena $\A' = {\A}_{\upharpoonright{V'}}$ of $\A$ restricted to the set of vertices $V' = V \setminus W$. By induction, we have  constructed a uniform optimal strategy ${\sigma'}_1^\star$ for player~$1$ in $\A'$ such that for all $v \in V', \g(\out{{\sigma'}_1^\star}_v) = \sup_{\sigma_1 \in \Sigma_1} \g(\out{\sigma_1}_v) = \Val^{\A'}(v)$. 
As all the plays $\rho$ from $v \in V \setminus W$ stay in $\A'$ by definition of the set $W$, we have that $\Val^{\A'}(v) = \Val^{\A}(v)$. The required uniform optimal strategy is then the union of $\sigma_1^\star$ (defined on $W$) and ${\sigma'}_1^\star$ (defined on $V \setminus W$). 

(b) Let us now suppose that $V_1 = \emptyset$ and show that player $2$ 
has a uniform optimal strategy in $\A$. 
This strategy is constructed by induction on the size $|V|$ of $\A$ as in the previous case except that, in the induction step (when $|V| > 1$), 
we choose a simple cycle $c_{min}$ with the minimal (instead of the maximal) mean-payoff $(m_1, m_2)$ w.r.t. the lexicographic order $\lexicoleq_1$.
Let $W$ be the set of all the vertices of $V$ from which there is a path in $\A$ to reach $c_{min}$. We define the strategy $\sigma_1^\star$ on $W$, such that $\sigma_1^\star$ goes as quickly as possible to this minimum payoff cycle and next go round this cycle forever. The produced play $\rho^\star$ from a vertex of $W$ has then cost $\MPInf(\rho^\star) = \MPInf(c_{min}^\omega) = (\M_1(c_{min}), \M_2(c_{min})) = (m_1, m_2)$. 
Let us now show that for all vertices $v \in W$ any infinite play from $v$ in $\A$ cannot supply a payoff smaller than $(m_1, m_2)$. Let $\rho$ be any play in $\A$. 
Let $\rho_{\leq n}$ be any prefix of $\rho$. 
By definition of $c_{min}$, for each simple cycle $c$ of the cycle decomposition multiset of $\rho_{\leq n}$ 
Properties ($\ref{equation:cas1}$) and ($\ref{equation:cas2}$) are now replaced by the next ones:  
\begin{align}
\mbox{either} & \quad \M_1(c) = m_1   \mbox{ and }  \M_{2}(c) \leq m_2\label{equation:cas3} \\
\mbox{or}  & \quad \M_1(c) \geq \bar m_1 > m_1. \label{equation:cas4}
\end{align}

For each prefix $\rho_{\leq n}$ of $\rho$ of length $n$, let us denote 
by $J_1(n)$ (resp. $J_2(n)$) the sum of the lengths of all simple cycles in its cycle decomposition 
that satisfy Property ($\ref{equation:cas3}$) (resp. Property ($\ref{equation:cas4}$)). 

For all $n \in \IN$ we have 
\begin{eqnarray*}
  \sum_{k = 0}^{n - 1} r_1(\rho_k, \rho_{k + 1}) & \geq & J_1(n)\cdot m_1 + J_2(n)\cdot \bar m_1 , \textrm{ and } \\
  & & \\
  \sum_{k = 0}^{n - 1} r_2(\rho_k, \rho_{k + 1}) & \leq & J_1(n)\cdot m_2 + J_2(n)\cdot |R| + |V|\cdot |R|. 
\end{eqnarray*}
Let $\kappa = \LInf \frac{J_2(n)}{n}$. We have either $\kappa > 0$ or $\kappa = 0$. We have (notice the usage of Lemma~\ref{LimInfSup}):
\begin{eqnarray*}
  \MPInf_{1}(\rho) & \geq & \LInf\left(\frac{J_1(n)}{n} \cdot m_1 + \frac{J_2(n)}{n} \cdot \bar m_1 \right) \\
  & & \\
  & \geq & \LInf\left(\frac{J_1(n) + J_2(n)}{n} \cdot m_1\right) + \LInf\left(\frac{J_2(n)}{n} \cdot (\bar m_1 - m_1) \right) \\
  & & \\
  & = & m_1 + \kappa \cdot (\bar m_1 - m_1).
\end{eqnarray*}
We can now prove that $(m_1, m_2) \lexicoleq_1 \MPInf(\rho)$. 
\begin{enumerate}[(a)]
\item If $\kappa > 0$, since $\bar m_1 > m_1$, we have 
  $$\MPInf_{1}(\rho) \geq  m_1 + \kappa \cdot (\bar m_1 - m_1) > m_1$$ 
  and so $(m_1, m_2) \lexicoleq_1 \MPInf(\rho)$.
\item If $\kappa = 0$, then 
  $$\MPInf_{1}(\rho) \geq m_1,$$ 
  and (using again Lemma~\ref{LimInfSup} and recalling that $|R|, m_2 \geq 0$)
  \begin{eqnarray*}
    \MPInf_{2}(\rho) & \leq & \LInf\left(\frac{J_1(n)}{n} \cdot m_2 + \frac{J_2(n)}{n} \cdot |R| + \frac{|V|}{n} \cdot |R|\right) \\
    & & \\
    & \leq & \LInf\left(\frac{J_2(n)}{n} |R|\right) - \LInf\left(\frac{-J_1(n)}{n} \cdot m_2 + \frac{-|V|}{n} \cdot |R| \right) \\
    & & \\
    & \leq & \LInf\left(\frac{J_2(n)}{n}\right) \cdot |R| + \LSup\left(\frac{J_1(n)}{n}\right)\cdot m_2 \\ 
	& & + \LSup \left(\frac{|V|}{n} \cdot |R| \right) \\
    & & \\
    & \leq & 0 \cdot |R|  + 1 \cdot m_2 + 0 = m_2.
  \end{eqnarray*}
  In this case we thus also have $(m_1, m_2) \lexicoleq_1 \MPInf(\rho)$.    
\end{enumerate}

We can conclude the proof with similar arguments as in case (a). 	
\qed\end{proof}

Theorem~\ref{thm:uniform-det} for $\MPInf$ follows from Theorem~\ref{thm:UniformlyDeterminedHugo} 
and Proposition~\ref{prop:uniformlyDeterminedOnePlayerGame}.

\paragraph{Complexity Results.}
We have proved that every $\MPInf$ lexicographic payoff game is uniformly-determined. Let us now established the related complexities as given in Theorem~\ref{thm:compl-lexico}.

\begin{proof}[of Theorem~\ref{thm:compl-lexico} for $\MPInf$ payoff] 
The proof is inspired by a proof given in \cite{KTB}: we are going to sketch a reduction of $\MPInf$ lexicographic payoff games to mean-payoff games for optimal strategies. 
This reduction keeps the same arena $(V,E)$ but replaces the weight function $(r_1, r_2)$ by a single weight function $r^\ast$. We again suppose that the weights are positive integers by Remark~\ref{rem:positive}. We know by Theorem~\ref{thm:uniform-det} that $\Glexun$ is uniformly-determined. We can thus restrict our attention to simple cycles. The mean-payoff of a simple cycle $C$ in $\Glexun$ is of the form $(\frac{a}{n},\frac{b}{n})$ such that each component is a bounded rational (between $0$ and $|r_1|$ (resp. $|r_2|$)), and the denominator is at most equal to $|V|$. We define $m = |V|^2 \cdot |r_2| + 1$ and $r^\ast = r_1 \cdot m - r_2$. It follows that in the resulting mean-payoff game, the cycle $C$ has a mean-payoff equal to $\frac{a \cdot m - b}{n}$. Notice that given two simple cycles $C_1$ and $C_2$ in $\Glexun$, if the first components of their mean-payoff are distinct, that is, $|\frac{a_1}{n_1} - \frac{a_2}{n_2}| > 0$, then $|\frac{a_1}{n_1} - \frac{a_2}{n_2}| \geq \frac{1}{|V|^2}$ because the weights are integers. By the choice of $m$, let us show that 
\begin{eqnarray} \label{eq:reductionMP}
\frac{a_1}{n_1} < \frac{a_2}{n_2} \quad\Rightarrow\quad \frac{a_1 \cdot m - b_1}{n_1} < \frac{a_2 \cdot m - b_2}{n_2}.
\end{eqnarray}
Indeed, we have 
\begin{eqnarray*}
\frac{a_1}{n_1} + \frac{1}{|V|^2} &\leq& \frac{a_2}{n_2} \\
\frac{a_1}{n_1}\cdot m + |r_2| + \frac{1}{|V|^2}   &\leq&   \frac{a_2}{n_2}\cdot m \\
\frac{a_1 \cdot m - b_1}{n_1} \leq \frac{a_1}{n_1}\cdot m &<&   \frac{a_2}{n_2} \cdot m -  |r_2| \leq \frac{a_2 \cdot m - b_2}{n_2}. \\
\end{eqnarray*}
To prove the correctness of the proposed reduction, it is enough to prove that  $(\frac{a_1}{n_1},\frac{b_1}{n_1}) \lexicoleq_1 (\frac{a_2}{n_2},\frac{b_2}{n_2})$ in $\Glexun$ if and only if $\frac{a_1 \cdot m - b_1}{n_1} \leq \frac{a_2 \cdot m - b_2}{n_2}$ in the resulting mean-payoff game (the comparaison of two cycle payoffs coincide in both games). Suppose first that $(\frac{a_1}{n_1},\frac{b_1}{n_1}) \lexicoleq_1 (\frac{a_2}{n_2},\frac{b_2}{n_2})$, i.e., either $\frac{a_1}{n_1} < \frac{a_2}{n_2}$, or ($\frac{a_1}{n_1} = \frac{a_2}{n_2}$ and $\frac{b_1}{n_1} \geq \frac{b_2}{n_2}$). In both cases we get $\frac{a_1 \cdot m - b_1}{n_1} \leq \frac{a_2 \cdot m - b_2}{n_2}$ (we use ($\ref{eq:reductionMP}$) in the first case). Suppose now that $\frac{a_1 \cdot m - b_1}{n_1} \leq \frac{a_2 \cdot m - b_2}{n_2}$. It follows that $\frac{a_1}{n_1} \leq \frac{a_2}{n_2}$ by ($\ref{eq:reductionMP}$). Moreover if $\frac{a_1}{n_1} = \frac{a_2}{n_2}$, then $\frac{b_1}{n_1} \geq \frac{b_2}{n_2}$. Therefore $(\frac{a_1}{n_1},\frac{b_1}{n_1}) \lexicoleq_1 (\frac{a_2}{n_2},\frac{b_2}{n_2})$.

By this reduction and classical results on mean-payoff games~\cite{EM79,ZP}, we get the three statements of Theorem~\ref{thm:compl-lexico}.
\qed \end{proof}

\subsection{$\LimInf$ and $\LimSup$ Lexicographic Payoff Games}

In this section, we consider $\LimInf$ lexicographic payoff games and we provide proofs of Theorems~\ref{thm:uniform-det} and~\ref{thm:compl-lexico} for these games.
The proofs are not given for $\LimSup$ payoff because they are simple adaptations of the ones given for  $\LimInf$ payoff. 

\paragraph{Uniform-Determinacy.} 

The proof of Theorem~\ref{thm:uniform-det} for $\LimInf$ payoff has the same structure 
as for $\MPInf$ payoff. Let $\Glexun$ be a $\LimInf$ lexicographic payoff game. 
Without lost of generality, we can suppose the weights used in $\Glexun$ are natural 
numbers (see Remark~\ref{rem:positive}). 
As for lexicographic payoff games with $\MPInf$ payoff function, 
we show (Proposition~\ref{prop:uniformlyDeterminedOnePlayerGameLimInf}) 
that if the vertices of the game are controlled by only one player, this player has a uniform optimal strategy. Theorem~\ref{thm:uniform-det} will then follow by Theorem~\ref{thm:UniformlyDeterminedHugo}. 

\begin{proposition}\label{prop:uniformlyDeterminedOnePlayerGameLimInf}
Let $\Glexun = (V, V_1, V_2, E, r, \LimInf, \lexicoleq_1)$ be a $\LimInf$ lexicographic payoff game. 
If $V_2 = \emptyset$ (resp. $V_1 = \emptyset$), then player $1$ (resp. player $2$) 
has a uniform optimal strategy. 
\end{proposition}

Given a simple cycle $c$, we denote for each $i \in \set{1,2}$
\begin{equation*}
\Min_i(c) = \min_{m \leq k < n} r_i(\rho_k, \rho_{k + 1}) 
\end{equation*}
the minimum of the cycle according to component $i$. 

\begin{proof}[of Proposition~\ref{prop:uniformlyDeterminedOnePlayerGameLimInf}]
Let $\Glexun = (V, V_1, V_2, E, r, \LimInf, \lexicoleq_1)$ be a $\LimInf$ lexicographic poayoff game. 

(a) Suppose that $V_2 = \emptyset$. We then show that player $1$ 
has a uniform optimal strategy.
This strategy is constructed as in the proof of Proposition~\ref{prop:uniformlyDeterminedOnePlayerGameLimInf}. 
Let us denote by $\A = (V, V_1, V_2, E)$ 
the arena of $\Glexun$ where $V_2 = \emptyset$ and by $\Val^{\A}(v)$ the value of $v$ in $\A$. 
We have that $\Val^{\A}(v) = \sup_{\sigma_1 \in \Sigma_1}\g(\out{\sigma_1}_v) = \sup_{\rho \in V^\omega, \rho_0 = v} \g(\rho)$. Let us construct a uniform optimal strategy $\sigma_1^\star$ for player~$1$ in $\A$ by induction on the size $|V|$ of $\A$ such that for all $v \in V, \g(\out{\sigma_1^\star}_v) = \Val^{\A}(v)$. 

The initial case ($|V| = 1$) is trivial. Suppose that $|V| > 1$. Consider $c_{max}$ a simple cycle in $\A$ with the maximal $(\Min_1, \Min_2)$ payoff $(m_1, m_2)$ w.r.t. the lexicographic order $\lexicoleq_1$. Let $W$ be the set of all the vertices of $V$ from which there is a path in $\A$ to reach $c_{max}$. Let us define the strategy $\sigma_1^\star$ on $W$, such that $\sigma_1^\star$ goes as quickly as possible to this maximum payoff cycle and next go round this cycle forever. The produced play $\rho^\star$ from a vertex of $W$ has then cost $\LimInf(\rho^\star) = \LimInf(c_{max}^\omega) = (\Min_1(c_{max}), \Min_2(c_{max})) = (m_1, m_2)$.  

Let us now show that for all vertices $v \in W$ any infinite play from $v$ in $\A$ cannot supply a cost greater than $(m_1, m_2)$. Let $\rho$ be any play in $\A$ from $v$. 
Let $\rho_{\leq n}$ be any prefix of $\rho$. By definition of $c_{max}$, each simple cycle $c$ of the cycle decomposition multiset of $\rho_{\leq n}$ satisfies 
\begin{align}
\mbox{either} & \quad \Min_1(c) = m_1   \mbox{ and }  \Min_{2}(c) \geq m_2\label{eq:cas5} \\
\mbox{or}  & \quad \Min_1(c) \leq \bar m_1 < m_1. \label{eq:cas6}
\end{align}

By Proposition \ref{prop:decomposition}, there is an index $n_0 \in \IN$ 
such that each edge $(\rho_n, \rho_{n+1})$ with $n \geq n_0$, is in a simple cycle $c_n$ 
that appears in the cycle decomposition multiset of $\rho$. 
Then, for $n \geq n_0$, we define 
    $$
    u_n = \left\{
      \begin{array}{ll}
        1 & \mbox{if cycle $c_n$ satisfies Property (\ref{eq:cas5})}  \\
        0 & \mbox{if cycle $c_n$ satisfies Property (\ref{eq:cas6})}. 
      \end{array}
    \right.
    $$
Remark that $\liminf_{n \to \infty} u_n = 0$ or $1$.

If $\liminf_{n \to \infty} u_n = 1$, then there exists some $n_1 \geq n_0$ such that all 
cycles $c_n$, for $n \geq n_1$, satisfy $\Min_1(c_n) = m_1$ and $\Min_2(c_n) \geq m_2$. 
Therefore $\LimInf_1(\rho) = m_1$ and $\LimInf_2(\rho) \geq m_2$.

If $\liminf_{n \to \infty} u_n = 0$, there are infinitely many cycles $c_n$, with $n \geq n_0$ such that $\Min_1(c_n) \leq \bar m_1 < m_1$. Therefore $\LimInf_1(\rho) < m_1$.

It follows that in both cases $\LimInf(\rho) \lexicoleq_1 (m_1, m_2)$.

We then have that $\g(\out{\sigma_1^\star}_v) = \sup_{\rho \in V^\omega, \rho_0 = v} \g(\rho) = \Val^{\A}(v) = (m_1, m_2)$. If $|V| = |W|$, $\sigma_1^\star$ is the required uniform optimal strategy for player~$1$. Else, consider the subarena $\A' = {\A}_{\upharpoonright{V'}}$ of $\A$ restricted to the set of vertices $V' = V \setminus W$. By induction, we have constructed a uniform optimal strategy ${\sigma'}_1^\star$ for player~$1$ in $\A'$ such that for all $v \in V', \g(\out{{\sigma'}_1^\star}_v) = \sup_{\sigma_1 \in \Sigma_1} \g(\out{\sigma_1}_v) = \Val^{\A'}(v)$ 
As all the plays $\rho$ from $v \in V \setminus W$ stay in $\A'$ by definition of the set $W$, we have that $\Val^{\A'}(v) = \Val^{\A}(v)$. The required uniform optimal strategy is then the union of $\sigma_1^\star$ (defined on $W$) and ${\sigma'}_1^\star$ (defined on $V \setminus W$).

(b) Let us now suppose that $V_1 = \emptyset$ and show that player $2$ 
has a uniform optimal strategy in $\A$. 
This strategy is constructed by induction on the size $|V|$ of $\A$ as in the previous case except that, in the induction step (when $|V| > 1$),   
we choose a simple cycle $c_{min}$ with the minimal (instead of the maximal) $(\Min_1, \Min_2)$ payoff $(m_1, m_2)$ w.r.t. the lexicographic order $\lexicoleq_1$.
Let $W$ be the set of all the vertices of $V$ from which there is a path in $\A$ to reach $c_{min}$. We define the strategy $\sigma_1^\star$ on $W$, such that $\sigma_1^\star$ goes as quickly as possible to this minimum payoff cycle and next go round this cycle forever. The produced play $\rho^\star$ from a vertex of $W$ has then payoff $\LimInf(\rho^\star) = \LimInf(c_{min}^\omega) = (\Min_1(c_{min}), \Min_2(c_{min})) = (m_1, m_2)$. 
Let us now show that for all vertices $v \in W$ any infinite play from $v$ in $\A$ cannot supply a payoff smaller than $(m_1, m_2)$. Let $\rho$ be any play in $\A$. 
Let $\rho_{\leq n}$ be any prefix of $\rho$. 
By definition of $c_{min}$, for each simple cycle $c$ of the cycle decomposition multiset of $\rho_{\leq n}$ Properties~\ref{eq:cas5} and ~\ref{eq:cas6} are now replaced by the next ones: 
\begin{align}
\mbox{either} & \quad \Min_1(c) = m_1   \mbox{ and }  \Min_{2}(c) \leq m_2\label{eq:cas7} \\
\mbox{or}  & \quad \Min_1(c) \geq \bar m_1 > m_1. \label{eq:cas8}
\end{align}
We define
   $$
    u_n = \left\{
      \begin{array}{ll}
        0 & \mbox{if cycle $c_n$ satisfies Property (\ref{eq:cas7})}  \\
        1 & \mbox{if cycle $c_n$ satisfies Property (\ref{eq:cas8})}. 
      \end{array}
    \right.
    $$
Then, either $\liminf_{n \to \infty} u_n = 0$ and $\LimInf_1(\rho) = m_1$, $\LimInf_2(\rho) \leq m_2$, or $\liminf_{n \to \infty} u_n = 1$ and $\LimInf_1(\rho) > m_1$. 
We conclude that $(m_1, m_2) \lexicoleq_1 \LimInf(\rho)$. We can conclude the proof with similar arguments as in case (a). 
\qed\end{proof}

The proof of Theorem~\ref{thm:uniform-det} for $\LimInf$ cost follows from 
Theorem~\ref{thm:UniformlyDeterminedHugo} and 
Proposition~\ref{prop:uniformlyDeterminedOnePlayerGameLimInf}.

\paragraph{Complexity Results.}

Let us now turn to complexity results and give a proof of Theorem~\ref{thm:compl-lexico} for $\LimInf$ payoff.
To get the announced polynomial complexities, the previous reduction to finite cycle-forming games is too expensive. We here propose another reduction to co-B\"uchi and (one pair) Rabin games.

We have the next lemma. The proof of Theorem~\ref{thm:compl-lexico} will follow. We suppose again that weights are natural numbers by Remark~\ref{rem:positive}.

\begin{lemma} \label{lem:reduction}
Let $\Glexun$ be a $\LimInf$ lexicographic payoff game, $v \in V$ be a vertex and $(\alpha,\beta) \in \IN^2$ be a pair of naturals. Deciding whether player 1 has a strategy $\sigma_1$ from $v$ such that $(\alpha,\beta) \lexicoleq_1 \LimInf(\out{\sigma_1, \sigma_2}_v)$ for all strategies $\sigma_2$ of player 2 is {\sf P}-complete. 
\end{lemma}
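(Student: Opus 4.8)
The plan is to reduce this decision problem to solving an $\omega$-regular game whose winning condition for player~$2$ (the opponent) is a Rabin condition with only two pairs, and then to establish the matching lower bound by an inverse reduction from co-B\"uchi games. Throughout I use that the weights take only finitely many values (Remark~\ref{rem:positive}), so that for every play $\rho$ and $i \in \set{1,2}$ one has $\LimInf_i(\rho) = \min\set{r_i(e) \mid e \in \mathcal{I}(\rho)}$, where $\mathcal{I}(\rho) \subseteq E$ denotes the set of edges occurring infinitely often along $\rho$. Indeed, the $\liminf$ of a sequence valued in a finite set is the least value attained infinitely often, and these are exactly the $r_i$-weights of the edges in $\mathcal{I}(\rho)$. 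This turns the quantitative lexicographic condition into a purely combinatorial condition on $\mathcal{I}(\rho)$.

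First I would unfold the order. Writing $m_i = \LimInf_i(\rho)$, the target $(\alpha,\beta) \lexicoleq_1 \LimInf(\rho)$ is equivalent to $m_1 > \alpha$, or ($m_1 = \alpha$ and $m_2 \leq \beta$). Negating, player~$2$ enforces the complement exactly when $m_1 < \alpha$, or ($m_1 = \alpha$ and $m_2 > \beta$). I would then check that this complementary objective is a Rabin condition on $\mathcal{I}(\rho)$ with two pairs: the pair $(A_1,B_1) = (\emptyset,\set{e \mid r_1(e) < \alpha})$ captures $m_1 < \alpha$, and the pair $(A_2,B_2)$ with $A_2 = \set{e \mid r_1(e) < \alpha} \cup \set{e \mid r_2(e) \leq \beta}$ and $B_2 = \set{e \mid r_1(e) = \alpha \wedge r_2(e) > \beta}$ captures $m_1 = \alpha \wedge m_2 > \beta$ (avoiding $A_2$ forces $m_1 \geq \alpha$ and $m_2 > \beta$, while additionally meeting $B_2$ forces $m_1 = \alpha$, and note $A_2 \cap B_2 = \emptyset$). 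After translating these edge-based conditions into the vertex-based Rabin condition of the preliminaries by the standard subdivision of each edge $(u,w)$ into $(u,m_e),(m_e,w)$, where $m_e$ is a fresh single-successor vertex carrying the colour of $e$ (this multiplies the arena size only by a constant factor), I would invoke the $\bigO{n^{p+1}\cdot p!}$ Rabin algorithm with $p = 2$ to decide in polynomial time whether player~$2$ wins from $v$. By determinacy of Rabin games, player~$1$ has the required strategy iff player~$2$ loses, which gives membership in~{\sf P}.

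For {\sf P}-hardness I would reduce from co-B\"uchi games, which are {\sf P}-complete. Given a co-B\"uchi game on an arena $(V,E)$ with target set $A$ (player~$1$ wins iff $A$ is visited finitely often), I set $r_1(e) = 0$ if the head of $e$ lies in $A$ and $r_1(e) = 1$ otherwise, put $r_2 \equiv 0$, and take $\alpha = 1$, $\beta = 0$. Then $\LimInf_1(\rho) = 1$ iff no edge entering $A$ is taken infinitely often, i.e. iff $A$ is visited finitely often; and since $m_2 = 0 \leq \beta$ always holds, the target $(1,0) \lexicoleq_1 \LimInf(\rho)$ is equivalent to $m_1 = 1$. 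Hence player~$1$ can enforce $(1,0) \lexicoleq_1 \LimInf(\out{\sigma_1,\sigma_2}_v)$ against all $\sigma_2$ iff he wins the co-B\"uchi game, so the problem is {\sf P}-hard and therefore {\sf P}-complete.

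The hard part will be the careful handling of the boundary case $m_1 = \alpha$: separating $m_1 > \alpha$ (where player~$1$ already wins regardless of the second component) from $m_1 = \alpha$ is precisely what forces the two-pair structure and, in particular, the delicate hit-set $B_2 = \set{e \mid r_1(e) = \alpha \wedge r_2(e) > \beta}$ rather than the naive $\set{e \mid r_1(e) = \alpha}$, since edges witnessing $B_2$ must be compatible with the avoid-set $A_2$. The crux of the correctness argument is to verify that simultaneously avoiding $A_2$ and meeting $B_2$ pins the value $m_1$ down to \emph{exactly} $\alpha$ (and not merely $\geq \alpha$) while guaranteeing $m_2 > \beta$.
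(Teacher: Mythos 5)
Your proof is correct, and for the membership half it takes a genuinely different route from the paper. The paper keeps player~1's objective and rewrites $(\alpha,\beta)\lexicoleq_1 \LimInf(\rho)$ as a disjunction of a co-B\"uchi condition ($\Diamond\Box\, r_1\geq\alpha+1$) and a one-pair Rabin condition ($\Diamond\Box\, r_1\geq\alpha \,\wedge\, \Box\Diamond\, r_2\leq\beta$), then builds one game per disjunct on the subdivided arena and combines the answers by a ``two-case analysis''. You instead negate, package player~2's objective as a single two-pair Rabin condition on the infinitely repeated edges, solve one Rabin game with $p=2$, and conclude by determinacy. Your formulation buys soundness at a delicate point: the winning region of a disjunction of objectives is in general \emph{not} the union of the winning regions of the disjuncts (consider a player-2 vertex with one successor forcing $\LimInf=(\alpha+1,\beta+1)$, which satisfies only the co-B\"uchi disjunct, and another forcing $\LimInf=(\alpha,\beta)$, which satisfies only the Rabin disjunct: player~1 wins the lexicographic objective there but neither auxiliary game), so the paper's per-disjunct decomposition needs an extra argument that your single-game reduction makes unnecessary. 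Your boundary analysis of $(A_2,B_2)$ is also right: avoiding $A_2$ forces $m_1\geq\alpha$ and $m_2>\beta$ on the infinitely repeated edges, and hitting $B_2$ then pins $m_1$ to exactly $\alpha$. The price is only a slightly larger polynomial bound, $\bigO{(|V|+|E|)^3}$ from the $\bigO{n^{p+1}\cdot p!}$ algorithm with $p=2$, versus the $\bigO{(|V|+|E|)^2}$ claimed in the paper. The {\sf P}-hardness half is essentially the paper's reduction from co-B\"uchi games, with an equivalent choice of weights and threshold ($(1,0)$ with $r_2\equiv 0$ rather than $(1,1)$ with $r_2\equiv 1$), and is correct.
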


\begin{proof}
We first prove that one can decide in $O((|V| + |E|)^2)$ whether player 1 has a strategy $\sigma_1$ from $v$ such that $(\alpha,\beta) \lexicoleq_1 \LimInf(\out{\sigma_1, \sigma_2}_v)$ for all strategies $\sigma_2$ of player 2.
Let us fix $v \in V$, $(\alpha,\beta) \in \IN^2$, and a strategy $\sigma_1$ of player 1 in $\Glexun$. Consider a strategy $\sigma_2$ of player 2 and the outcome $\rho = \out{\sigma_1, \sigma_2}_v$. We have that $(\alpha,\beta) \lexicoleq_1 \LimInf(\rho)$ iff 
\begin{itemize}
\item there exists $n \geq 0$ such that  $r_1(\rho_k,\rho_{k+1}) \geq \alpha + 1$ for all $k \geq n$, or
\item there exists $n \geq 0$ such that  $r_1(\rho_k,\rho_{k+1}) \geq \alpha$ for all $k \geq n$ and $r_2(\rho_l,\rho_{l+1}) \leq \beta$ for infinitely many $l$. 
\end{itemize}
Using LTL notation, we abbreviate these conditions by 
\begin{eqnarray} \label{eq:rabin}
\rho \models (\Diamond \Box~ r_1 \geq \alpha + 1) \quad \vee \quad \rho \models (\Diamond \Box~ r_1 \geq \alpha) \wedge (\Box\Diamond~ r_2 \leq \beta).
\end{eqnarray}
The first condition is a co-B\"uchi condition, while the second one is a Rabin condition. These conditions can then be encoded with two Rabin pairs. 

Let us construct the following Rabin game in relation with these conditions. 
Firstly, from the arena $(V,E)$ of $\Glexun$, we construct a new arena $(V',E')$ in a way to have weights depending on vertices instead of edges. We proceed as follows:  each edge $e = (v,v') \in E$ is split into two consecutive edges. The new intermediate vertex belongs to player 1\footnote{It could belong to player 2 since there is exactly one outgoing edge.}, and it  is decorated with $(r_1(e), r_2(e))$. The vertices of $V$ are decorated with $(+\infty, +\infty)$, and $V'$ has $n' = |V| + |E|$ vertices. 
Secondly, we define two Rabin pairs $(A_1,B_1)$ and $(A_2, B_2)$ such that $A_1 \subseteq V'$ (resp. $B_1 \subseteq V'$) is composed of all vertices decorated by a pair $(a,b)$ such that $a < \alpha + 1$ (resp. $b = + \infty$) and $A_2 \subseteq V'$ (resp. $B_2 \subseteq V'$) is composed of all vertices decorated by a pair $(a,b)$ such that $a < \alpha$ (resp. $b \leq \beta$). For a play $\rho^R$ in the constructed Rabin game and its corresponding play $\rho$ in $\Glexun$ we have that 
\begin{eqnarray*}
\rho^R \models (A_1, B_1) & \iff & \rho \models (\Diamond \Box~ r_1 \geq \alpha + 1), \mathrm{and } \\
\rho^R \models (A_2, B_2) & \iff & \rho \models (\Diamond \Box~ r_1 \geq \alpha) \wedge (\Box\Diamond~ r_2 \leq \beta). \\
\end{eqnarray*}
It follows that player 1 has a winning strategy in the constructed Rabin game from vertex $v$ iff player 1 has a strategy $\sigma_1$ in $\Glexun$. 
This property can be checked in time $O(n'^{p+1}\cdot p!)$ with $p=1$ (there is one Rabin pair). 
One can thus decide in time $O((|V| + |E|)^2)$ whether player 1 has a strategy $\sigma_1$ from $v$ such that $(\alpha,\beta) \lexicoleq_1 \LimInf(\out{\sigma_1, \sigma_2}_v)$ for all strategies $\sigma_2$ of player 2. 

To conclude the proof of Lemma~\ref{lem:reduction}, we show that for each co-B\"uchi game $\CB$, we can construct a $\LimInf$ lexicographic payoff game $\Glexun_{\CB}$ and a pair of naturals $(\alpha, \beta)$ such that the problem of deciding if player $1$ has a winning strategy from a vertex $v$ in $\CB$ is equivalent to the problem stated in Lemma~\ref{lem:reduction}. Let $(V,E)$ be the arena of $\CB$ and $A \subseteq V$ be the set given for the co-B\"uchi condition. We set $(\alpha, \beta) = (1, 1)$ and we define $\Glexun_{\CB} = (V, V_1, V_2, E, r, \LimInf)$ with the reward function $r$ such that for all $(v, v') \in E$, $r(v, v') = (0, 1)$ if $v' \in A$, and $r(v, v') = (1, 1)$ otherwise. Clearly, for every play $\rho$ starting in $v$, we have $\Inf(\rho) \cap A = \emptyset$ iff $\LimInf(\rho) = (1,1)$.  This completes the proof.
\qed\end{proof}

\begin{proof}[of Theorem~\ref{thm:compl-lexico} for $\LimInf$ payoff]
The first part is a direct consequence of Lemma~\ref{lem:reduction} by definition of the value.

The second is also obtained as a corollary of Lemma~\ref{lem:reduction}. Indeed, given a vertex $v$, it suffices to apply it to all pairs $(\alpha,\beta)$ such that $\alpha = r_1(e)$ and $\beta = r_2(e')$ for some edges $e, e' \in E$. Notice that there are $|E|^2$ pairs. The value of $v$ is thus the maximum among those pairs $(\alpha,\beta)$ for which player 1 has a strategy $\sigma_1$ from $v$ such that $(\alpha,\beta) \lexicoleq_1 \LimInf(\out{\sigma_1, \sigma_2}_v)$ for all strategies $\sigma_2$ of player 2. This argument shows that computing the value of $v$ can be done in polynomial time. 

The third statement is a consequence of the second one, using a dichotomy on the edges of $E$ as proposed in~\cite{ZP}. We first start by computing the values $\Val(v)$ for each $v$ in $\Glexun$. If all the vertices of $V$ have outdegree one, then player 1 has a unique uniform optimal strategy. Otherwise, let $v$ be a vertex with outdegree $d > 1$. Let us remove $\lceil \frac{d}{2} \rceil$ of the edges leaving $v$, and let us compute the value $\Val'(v)$ of $v$ in the resulting graph. If $\Val'(v) = \Val(v)$ (resp. $\Val'(v) \ne \Val(v)$), then there is a uniform optimal strategy for player 1 which does not use any of the removed edges (uses one of the removed edges). In both cases, we can restrict the computation to a subgraph of $\Glexun$ with at least $\lfloor \frac{d}{2} \rfloor$ fewer edges. After a polynomial\footnote{$O(\Sigma_{v \in V_1} \log d(v))$ with $d(v)$ being the outdegree of vertex $v$ in $\Glexun$.} number of such experiments, we get the required uniform optimal strategy for player 1 with the announced complexity. A uniform optimal strategy for player 2 can be found in the same way.
\qed\end{proof}

\subsection{$\Inf$ and $\Sup$ Lexicographic Payoff Games}

In this section, we prove Theorems~\ref{thm:uniform-det} and~\ref{thm:compl-lexico} for $\Inf$ and $\Sup$ lexicographic payoff games. Without lost of generality, we limit the proofs to the $\Inf$ payoff.

\paragraph{Positional-Determinacy.}

We begin by proving that these games  are positionally-determined. Some preliminary lemmas are necessary.

\begin{lemma} \label{lem:valueInf}
Every $\Inf$ lexicographic payoff game $\Glexun$ is determined, and its values can be computed in polynomial time.
\end{lemma}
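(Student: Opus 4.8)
The plan is to reduce the quantitative game to a finite family of qualitative (Boolean) threshold games and to read off both determinacy and polynomial value computation from their determinacy and their polynomial solvability. The key structural observation is that the weights range over a finite set, so for every play $\rho$ the quantity $\Inf_i(\rho) = \inf_n r_i(\rho_n,\rho_{n+1})$ is in fact \emph{attained} and belongs to the finite set of weights of component $i$; hence the set $P \subseteq \IQ^2$ of realizable payoff pairs is finite and is totally ordered by $\lexicoleq_1$. This finiteness is what lets qualitative results bootstrap to a quantitative value.

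First I would fix a threshold $(\alpha,\beta) \in P$ and rewrite player~$1$'s objective ``$(\alpha,\beta) \lexicoleq_1 \Inf(\rho)$'' as a Boolean combination of safety and reachability conditions on the edges of $\rho$. Unfolding the definition of $\lexicoleq_1$ gives, in the LTL notation already used in the paper, the equivalent condition $\Box(r_1 \geq \alpha) \wedge (\,\Box(r_1 > \alpha) \vee \Diamond(r_2 \leq \beta)\,)$: player~$1$ must keep every edge above $\alpha$ in the first component, and must moreover either never touch an edge of first weight exactly $\alpha$ or else visit some edge of second weight at most $\beta$. Each piece is a safety or a reachability objective, so the threshold game is $\omega$-regular and therefore determined, with winning regions $W_1(\alpha,\beta)$ and $W_2(\alpha,\beta)$ partitioning $V$. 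It is solved in polynomial time by taking the product of the arena with a constant-size deterministic monitor that records whether an $(r_1<\alpha)$-edge, an $(r_1=\alpha)$-edge, or an $(r_2\leq\beta)$-edge has already occurred; this turns the objective into a plain safety–reachability game on $O(|V|+|E|)$ vertices, solvable in time $O(|V|+|E|)$ by the background results recalled above.

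Next I would recover the value and its determinacy from the threshold games by a sandwiching argument using the finiteness of $P$. Since $P$ is finite, the suprema and infima in the definitions of $\SupVal(v)$ and $\InfVal(v)$ are attained and lie in $P$, so $\InfVal(v) \lexicogeq_1 (\alpha,\beta) \Leftrightarrow v \in W_1(\alpha,\beta)$ and $\SupVal(v) \lexicol_1 (\alpha,\beta) \Leftrightarrow v \in W_2(\alpha,\beta)$. Determinacy of each threshold game makes $W_1(\alpha,\beta)$ and $W_2(\alpha,\beta)$ complementary, whence $\InfVal(v) \lexicogeq_1 (\alpha,\beta) \Leftrightarrow \SupVal(v) \lexicogeq_1 (\alpha,\beta)$ for every $(\alpha,\beta)\in P$; as both values lie in the totally ordered set $(P,\lexicoleq_1)$, this forces $\InfVal(v)=\SupVal(v)$, i.e.\ $\Glexun$ is determined, with $\Val(v)=\max_{\lexicoleq_1}\{(\alpha,\beta)\in P : v\in W_1(\alpha,\beta)\}$. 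Complexity follows since $|P|\leq |E|^2$ is polynomial, so all values are obtained by solving $O(|E|^2)$ threshold games, each in polynomial time.

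The main obstacle I anticipate is the faithful translation of the threshold objective into a Boolean game, and in particular a careful treatment of the disjunction $\Box(r_1>\alpha)\vee\Diamond(r_2\leq\beta)$: the winning region of a disjunctive objective is in general \emph{not} the union of the winning regions of the disjuncts, so one must not solve the two disjuncts separately but rather solve the combined objective on the product with the monitor. A lighter alternative route to determinacy and polynomial values (though not to positional strategies, which are handled separately afterwards) is to reuse the reduction from $\Inf$ to $\LimInf$ given in the proof of Theorems~\ref{thm:existenceES}--\ref{thm:complES}: the augmented arena $V\times R\times R$ carries a $\LimInf$ lexicographic payoff game $\G'^{\lexicoleq_1}$ whose plays correspond bijectively, with equal payoffs and matching strategies, to those of $\Glexun$, so determinacy and polynomial value computation transfer directly from Theorems~\ref{thm:uniform-det}--\ref{thm:compl-lexico} for $\LimInf$, the blow-up being polynomial.
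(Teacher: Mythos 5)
Your proposal is correct, but your main route differs from the paper's. The paper proves Lemma~\ref{lem:valueInf} exactly by your ``lighter alternative'': it builds the augmented arena $V\times R\times R$ recording the running minima, obtains a $\LimInf$ lexicographic payoff game whose plays correspond payoff-preservingly to those of $\Glexun$, and imports determinacy and polynomial value computation from Theorems~\ref{thm:uniform-det}--\ref{thm:compl-lexico} for $\LimInf$. Your primary argument instead solves, for each of the at most $|E|^2$ candidate payoff pairs $(\alpha,\beta)$, the Boolean threshold game with objective $\Box(r_1\geq\alpha)\wedge(\Box(r_1>\alpha)\vee\Diamond(r_2\leq\beta))$, and then recovers $\Val(v)$ by a sandwiching argument that exploits the finiteness and total ordering of the set of realizable payoffs; this is essentially the content of the paper's Lemma~\ref{lem:partition} (which the paper uses only afterwards, to extract \emph{positional} optimal strategies and to prove {\sf P}-hardness), upgraded to yield determinacy and values directly. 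What each approach buys: the reduction to $\LimInf$ is shorter but only delivers finite-memory optimal strategies and leans on the full machinery of Section~\ref{sec:lexico}; your threshold route is self-contained, stays on the original arena (up to a constant-size monitor), and already exposes the positional structure needed later. One point to tighten: after the product with the monitor the objective is not literally a ``plain safety--reachability game'' --- the acceptable limit flag configurations form a set that is neither upward- nor downward-closed, so you have an obligation (weak/Staiger--Wagner) condition. It is still determined with positional strategies on the product and solvable in polynomial time, most cleanly by the three-stage attractor decomposition ($W_2^{<\alpha}$, then $W_1^{\leq\beta}$, then $W_2^{=\alpha}$ on successive subarenas) that the paper itself uses in Lemma~\ref{lem:partition}; citing only the reachability/safety results as stated leaves a small justification gap for the combined disjunctive objective. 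Your observation that the equivalence $\Inf_1(\rho)>\alpha\Leftrightarrow\Box(r_1>\alpha)$ rests on attainment of the infimum over a finite weight set is exactly the right structural point and is needed for the translation to be sound.
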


\begin{proof}
Given $\Glexun$ and an initial vertex $v_0$, we derive a $\LimInf$ lexicographic payoff game $\G'^{\lexicoleq_1}$ and an initial vertex $v'_0 = (v_0, +\infty, +\infty)$, with the same construction as in the proof of Theorem~\ref{thm:existenceES} for $\Inf$ payoff (we augment the vertices with the current smallest seen weights). The latter game is uniformly-determined by Theorem~\ref{thm:uniform-det} and its values can be computed in polynomial by Theorem~\ref{thm:compl-lexico} (for $\LimInf$ payoff). In $\G'^{\lexicoleq_1}$, let $(\alpha,\beta)$ be the value of $v'_0$ and $\sigma'^{\star}_1$, $\sigma'^{\star}_2$ be a pair of uniform optimal strategies for player~1 and player~2 respectively. We naturally derive finite-memory strategies $\sigma^{\star}_1$, $\sigma^{\star}_2$ in $\Glexun$ that mimic strategies $\sigma'^{\star}_1$, $\sigma'^{\star}_2$. Let us show that in $\Glexun$, $v_0$ has value $(\alpha,\beta)$ and $\sigma^{\star}_1$, $\sigma^{\star}_2$ are optimal strategies\footnote{In the proof of Theorem~\ref{thm:uniform-det} for $\Inf$ payoff, we will explain how to obtain positional optimal strategies.}. In $\Glexun$, let $\sigma_2$ be any strategy for player~2, and let us prove that $(\alpha,\beta) \lexicoleq_1  \Inf(\out{\sigma^\star_1, \sigma_{2}}_{v_0})$ (see Lemma~\ref{lemma:pointDeSelle}). This inequality follows from $(\alpha,\beta) \lexicoleq_1  \LimInf(\out{\sigma'^\star_1, \sigma'_{2}}_{v'_0})$ in $\G'^{\lexicoleq_1}$ and $\Inf(\out{\sigma^\star_1, \sigma_{2}}_{v_0}) = \LimInf(\out{\sigma'^\star_1, \sigma'_{2}}_{v'_0})$, where $\sigma'_{2}$ is the strategy that mimics $\sigma_{2}$ in $\G'^{\lexicoleq_1}$. Similarly we have $\inf(\out{\sigma_1, \sigma^\star_{2}}_{v_0}) \lexicoleq_1 (\alpha,\beta)$ for all strategy $\sigma_1$ of player~1 in $\Glexun$. Therefore from this reduction to $\LimInf$ lexicographic payoff game, $\Glexun$ is determined, and its values can be computed in polynomial time.
\qed\end{proof}

\begin{lemma} \label{lem:partition}
Let $(\alpha, \beta)$ be a pair of rational numbers. 
\begin{enumerate}
\item There exists a partition of $V$ into $W_1$ and $W_2$ such that $v \in W_1$ iff player 1 has a (positional) strategy $\sigma_1^\ast$ from $v$ in $\Glexun$ such that $(\alpha,\beta) \lexicoleq_1 \Inf(\out{\sigma_1^\ast, \sigma_2}_{v})$ for all strategies $\sigma_2$ of player 2.
\item Similarly, there exists a partition of $V$ into $T_1$ and $T_2$ such that $v \in T_2$ iff player 2 has a (positional) strategy $\sigma_2^\ast$ from $v$ in $\Glexun$ such that $\Inf(\out{\sigma_1, \sigma_2^\ast}_{v}) \lexicoleq_1  (\alpha,\beta) $ for all strategies $\sigma_1$ of player 1.
\end{enumerate}
Moreover, the two partitions of $V$ and the two strategies $\sigma_1^\ast$, $\sigma_2^\ast$ can be computed in polynomial time.
\end{lemma}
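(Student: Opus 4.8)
The plan is to characterize each of the two winning regions by a short, fixed sequence of \emph{safety} and \emph{reachability} (attractor) computations, each of which is positionally determined and solvable in linear time (see the background on reachability and safety games), and then to glue the corresponding positional strategies together. The first step is to rewrite the two threshold conditions purely in terms of the edge weights, using that $\Inf_i(\rho)=\min_n r_i(\rho_n,\rho_{n+1})$ (the infimum of an infinite sequence taking finitely many values is attained). For the first item,
\begin{equation*}
(\alpha,\beta)\lexicoleq_1\Inf(\rho)\ \Longleftrightarrow\ \big(\forall n\ r_1(\rho_n,\rho_{n+1})\geq\alpha\big)\ \wedge\ \big(\big(\forall n\ r_1(\rho_n,\rho_{n+1})>\alpha\big)\ \vee\ \big(\exists n\ r_2(\rho_n,\rho_{n+1})\leq\beta\big)\big),
\end{equation*}
that is, a safety condition on the first component conjoined with a disjunction of a (stronger) safety condition and a reachability condition on the ``good'' edges. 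To apply the vertex-based reachability and safety algorithms I would first split every edge by an intermediate vertex decorated with its weight pair, exactly as in the proof of Lemma~\ref{lem:reduction}, so that ``the traversed edge is good'' becomes ``a good vertex is visited''.

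For the first item I would then compute, in this order: (i) the safety region $H$ from which player~1 can keep every traversed edge with $r_1\geq\alpha$; (ii) inside the subarena induced by $H$, the attractor $R=\mathrm{Attr}_1^H(\{e : r_2(e)\leq\beta\})$ from which player~1 forces a good edge while staying in $H$; and (iii) on $Z=H\setminus R$, the safety region $Z\setminus\mathrm{Attr}_2^Z(\{e : r_1(e)=\alpha\})$ from which player~1 can avoid all $r_1=\alpha$ edges. I claim $W_1=R\cup\big(Z\setminus\mathrm{Attr}_2^Z(\{r_1=\alpha\})\big)$ and $W_2=V\setminus W_1$. The positional strategy $\sigma_1^\ast$ plays the attractor strategy on $R$ (which automatically keeps the play in $H$ after the good edge is used) and the $r_1=\alpha$-avoiding safety strategy on $Z\setminus\mathrm{Attr}_2^Z(\{r_1=\alpha\})$; outside $H$ it is irrelevant. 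Any play from $W_1$ consistent with $\sigma_1^\ast$ either stays forever in $Z$ avoiding $r_1=\alpha$ edges, hence has $\Inf_1>\alpha$, or it enters $R$ and reaches a good edge while remaining in $H$, hence has $\Inf_1\geq\alpha$ and $\Inf_2\leq\beta$; in both cases $(\alpha,\beta)\lexicoleq_1\Inf$.

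The second item is symmetric, with the roles of safety and reachability exchanged on the first component: here
\begin{equation*}
\Inf(\rho)\lexicoleq_1(\alpha,\beta)\ \Longleftrightarrow\ \big(\exists n\ r_1(\rho_n,\rho_{n+1})\leq\alpha\big)\ \wedge\ \big(\big(\exists n\ r_1(\rho_n,\rho_{n+1})<\alpha\big)\ \vee\ \big(\forall n\ r_2(\rho_n,\rho_{n+1})\geq\beta\big)\big),
\end{equation*}
so player~2 wants a reachability condition on the first component conjoined with a disjunction of a (stronger) reachability and a safety condition on the second. I would solve this by the dual nested computation — player~2's safety region for $r_2\geq\beta$, an attractor inside it to the $r_1\leq\alpha$ edges, together with the plain attractor to the $r_1<\alpha$ edges — yielding the partition $T_1,T_2$ and a positional $\sigma_2^\ast$.

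The main obstacle, and the one place where a naive argument fails, is the disjunction ``safety $\vee$ reachability'' (resp. ``reachability $\vee$ safety'') inside each condition: one \emph{cannot} simply union the region where player~1 forces the safety disjunct with the region where he forces the reachability disjunct, since player~1 may win by realizing \emph{different} disjuncts against different behaviours of the opponent (avoid $r_1=\alpha$ if player~2 moves one way, reach a good edge if he moves the other). The nested computation above — attractor to the good edges first, then a safety game on the \emph{complement} $Z$ of that attractor — is precisely what captures these mixed strategies while still producing a single positional strategy, because on $Z$ player~1 can no longer force a good edge and must fall back on avoiding $r_1=\alpha$. To establish that $W_1$ (resp. $T_2$) is \emph{exactly} the claimed region I would either exhibit the dual spoiling strategy of the opponent on the complement, or invoke the determinacy of $\Inf$ lexicographic payoff games already proved in Lemma~\ref{lem:valueInf} and identify the computed region with $\{v : (\alpha,\beta)\lexicoleq_1\Val(v)\}$. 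Finally, each construction uses only a constant number of safety games and attractors on an arena enlarged merely by edge splitting, all running in time $O(|V|+|E|)$, so both partitions and both positional strategies are obtained in polynomial time, as required.
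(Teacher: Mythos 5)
Your treatment of the first item is essentially the paper's proof: the same splitting of edges into decorated intermediate vertices, the same rewriting of $(\alpha,\beta)\lexicoleq_1\Inf(\rho)$ as a Boolean combination of safety and reachability conditions, and the same three-stage nested computation (player~2's attractor to the $r_1<\alpha$ vertices, then player~1's attractor to the $r_2\leq\beta$ vertices inside the complement, then a safety game avoiding the $r_1=\alpha$ vertices on what remains), yielding the same partition $W_1=R\cup(Z\setminus \mathrm{Attr}_2^Z(\{r_1=\alpha\}))$, the same glued positional strategy, and the same linear-time bound. You in fact make explicit the mixing subtlety (why one cannot simply union the regions for the two disjuncts) that the paper handles silently by the order in which it runs the three reachability games, and your case analysis of plays consistent with $\sigma_1^\ast$ (stay in $Z$ forever versus eventually enter $R$) is the right one.

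The one point to tighten is your sketch of the second item. As written --- ``player~2's safety region for $r_2\geq\beta$, an attractor inside it to the $r_1\leq\alpha$ edges, together with the plain attractor to the $r_1<\alpha$ edges'' --- the safety region is apparently computed on the full arena and the plain attractor is unioned in afterwards. That reintroduces exactly the mixing problem you flag in item~1: take a player-1 vertex $v$ with two successors, one leading (via an edge with $r_1>\alpha$, $r_2<\beta$) into $\mathrm{Attr}_2(\{r_1<\alpha\})$ and the other (via an edge with $r_1>\alpha$, $r_2\geq\beta$) into a sink cycle of weight $(\alpha,\beta)$. Then $v\in T_2$, but $v$ lies in neither of your two sets, since player~1's escape through the $r_2<\beta$ edge destroys the global safety region for $r_2\geq\beta$. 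The computation must be nested in the same order as for item~1: first put all of $\mathrm{Attr}_2(\{r_1<\alpha\})$ into $T_2$; on its complement (where every vertex has $r_1\geq\alpha$) remove player~1's attractor to $\{r_2<\beta\}$, which goes into $T_1$; and only then take player~2's attractor to $\{r_1=\alpha\}$ on what remains. Equivalently, rerun the item-1 pipeline with $\{r_2\leq\beta\}$ replaced by $\{r_2<\beta\}$ to obtain $T_1$ directly and set $T_2=V\setminus T_1$, which is presumably what the paper intends by ``proved similarly.''
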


\begin{proof}
We only prove the first statement, the second one is proved similarly.
From $\Glexun$, we construct the same arena ${\cal A} = (V', E')$ as in Lemma~\ref{lem:reduction} in a way to have weights depending on vertices instead of edges. Recall that $V' = V \cup E$ such that each edge $e$ of $\Glexun$ is split into two consecutive edges where the new intermediate vertex $s$ is decorated with $(r_1(s), r_2(s))$ and belongs to player 1. The vertices $v$ of $V$ are decorated with $(r_1(v), r_2(v)) = (+\infty, +\infty)$.  

Let $(\alpha, \beta)$ be a pair of rational numbers. We are going to construct a partition of $V'$ into two sets $W_1$ and $W_2$ such that for all $v' \in V'$, we have that $v' \in W_1 \cap V$ iff player 1 has a strategy $\sigma_1^\ast$ (that can be chosen positional) from $v'$ in $\Glexun$ such that $(\alpha,\beta) \lexicoleq_1 \Inf(\out{\sigma_1^\ast, \sigma_2}_{v'})$ for all strategies $\sigma_2$ of player 2. The restriction to $V$ of this partition leads to the required partition of $V$ of Lemma~\ref{lem:partition}. Recall that $(\alpha,\beta) \lexicoleq_1 \Inf(\rho)$, with $\rho = \out{\sigma_1^\ast, \sigma_2}_{v'}$ iff 
\begin{itemize}
\item either $r_1(\rho_n,\rho_{n+1}) > \alpha$  for all $n$,
\item or $r_1(\rho_n,\rho_{n+1}) \geq \alpha$  for all $n$, and $r_2(\rho_n,\rho_{n+1}) \leq \beta$ for one $n$.
\end{itemize}
In the sequel, let us use notation 
$\left\langle\langle i \right\rangle\rangle^{{\cal A}'}\phi$ for the set of vertices $v'$ inside a game restricted to the subarena $\cal A'$ of $\cal A$, from which player $i$ has a strategy that ensures an LTL formula $\phi$, with $\phi$ describing a reachability objective. To get the partition of $V'$ into $W_1$ and $W_2$, we proceed step by step.

\begin{enumerate}
\item We consider $W_2^{< \alpha} = \left\langle\langle 2 \right\rangle\rangle^{\cal A} \Diamond \set{s \in V'  \mid r_1(s)  < \alpha}$, that is, the set of vertices $v'$ in $\cal A$ from which player~2 can ensure to reach a vertex $s$ decorated with $r_1(s) < \alpha$. Clearly, from such vertices $v' \in V \cap W_2^{< \alpha}$ seen as vertices in $\Glexun$, player 1 cannot ensure $(\alpha,\beta) \lexicoleq_1 \Inf(\out{\sigma_1, \sigma_2}_{v'})$ against all strategies $\sigma_2$ of player 2. In other words, $W_2^{< \alpha} \subseteq W_2$. 
\item Let ${\cal A}_1 = {\cal A}|_{V'\setminus W_2^{< \alpha}}$ be the subarena $\cal A$ restricted to vertices in $V^1 = V'\setminus W_2^{< \alpha}$. Notice that in subarena ${\cal A}_1$, every vertex $s \in V^1$ is decorated with $r_1(s) \geq \alpha$ since otherwise $s \in W_2^{< \alpha}$.  We consider the set $W_1^{\leq \beta} = \left\langle\langle 1 \right\rangle\rangle^{{\cal A}_1} \Diamond \set{s \in V^1 \mid r_2(s) \leq \beta}$. We get that $W_1^{\leq \beta} \subseteq W_1$. Moreover, for all $v' \in W_1^{\leq \beta} \cap V$, player 1 has a positional strategy $\sigma_1^\ast$ from $v'$ in $\Glexun$ such that $(\alpha,\beta) \lexicoleq_1 \Inf(\out{\sigma_1^\ast, \sigma_2}_{v'})$. Indeed this strategy is given by the positional strategy that is used in the reachability game played on ${\cal A}_1$ to reach a vertex~$s$ decorated with $r_2(s) \leq \beta$.  
\item We define the subarena ${\cal A}_2 = {\cal A}|_{V^1\setminus W_1^{\leq \beta}}$ of ${\cal A}$ restricted to $V^2 = V^1\setminus W_1^{\leq \beta}$. Each vertex $s$ of ${\cal A}_2$ is decorated with $r_1(s) \geq \alpha$ and $r_2(s) > \beta$. With the set  $W_2^{= \alpha} = \left\langle\langle 2 \right\rangle\rangle^{{\cal A}_2} \Diamond \set{s \in V^2 \mid r_1(s) = \alpha}$, we have that $W_2^{= \alpha} \subseteq W_2$. 
\item Since each vertex $v'$ of $V'\setminus (W_2^{< \alpha} \cup W_1^{\leq \beta} \cup W_2^{= \alpha})$ is decorated with $r_1(v') > \alpha$, it follows that $v'$ belongs to $W_1$. 
Moreover, there exists a positional strategy $\sigma_1^\ast$ of player~1 from $v'$ in $\Glexun$ such that $(\alpha,\beta) \lexicoleq_1 \Inf(\out{\sigma_1^\ast, \sigma_2}_{v'})$. It is given by a positional strategy used in the safety game played on ${\cal A}_2$ to avoid  all vertices $s$ decorated with $r_1(s) = \alpha$. 
\end{enumerate}
Therefore, we get required partition of $V'$ into $W_2 = W_2^{< \alpha} \cup W_2^{= \alpha}$ and $W_1 = V' \setminus W_2$. Moreover, as this partition has been obtained thanks to three reachability games, the sets $W_2^{< \alpha}$, $W_1^{\leq \beta}$ and $W_2^{= \alpha}$ can be computed in $\bigO{|V| + 3 \cdot |E|}$\footnote{The arena $\cal A$ has $|V| + |E|$ vertices and $2 \cdot  |E|$ edges.}, as well as the positional strategy $\sigma_1^\ast$~\cite{LNCS2500}. 
\qed\end{proof}

\begin{proof}[of Theorem~\ref{thm:uniform-det} for $\Inf$ payoff]
Let $\Glexun$ be an $\Inf$ lexicographic payoff game. By Lemma~\ref{lem:valueInf}, we know that $\Glexun$ is determined.
Given an initial vertex $v_0$ with value $(\alpha, \beta)$, let us indicate how to construct positional optimal strategies for both players. We partition $V$ into the sets $W_1$ and $W_2$ (resp. $T_1$ and $T_2$) as indicated in Lemma~\ref{lem:partition}. By Definition~\ref{def:optimal}, we have that $v_0 \in W_1 \cap T_2$, and Lemma~\ref{lem:partition} provides positional strategies $\sigma_1^\ast$ and $\sigma_2^\ast$ that are optimal. Notice that these strategies depend on~$v_0$ since they depend on  $(\alpha,\beta)$. 
\qed\end{proof}

\paragraph{Complexity Results.}

Let us now turn to the proof of Theorem~\ref{thm:compl-lexico}.

\begin{proof}[of Theorem~\ref{thm:compl-lexico} for $\Inf$ payoff]
The second statement is direct consequence of Lemma~\ref{lem:valueInf}. A polynomial time algorithm for the first and third statements follows from Lemma~\ref{lem:partition}.

To establish {\sf P}-completeness for the first statement, we show that for each safety game $\Safety$, we can construct an $\Inf$ lexicographic payoff game $\Glexun_{\Safety}$ and a pair of naturals $(\alpha, \beta)$ such that the problem of deciding if player~$1$ has a winning strategy from a vertex $v_0$ in $\Safety$ is equivalent to the value problem stated in the first statement of Theorem~\ref{thm:compl-lexico} for $\Inf$ payoff. Let $(V,E)$ be the arena of $\Safety$ and $A \subseteq V$ be the set of vertices that player~$1$ wants to avoid. We set $(\alpha, \beta) = (1, 1)$ and we define $\Glexun_{\Safety} = (V \cup \{w\}, V_1 \cup \{w\}, V_2, E' = E \cup \{(w,v_0)\}, r, \Inf)$ with $w \notin V$ and $r$ such that for all $(v, v') \in E'$, $r(v, v') = (0, 1)$ if $v' \in A$, and $r(v, v') = (1, 1)$ otherwise. Clearly, for every play $\rho$ starting in $v_0$ in $\Safety$, we have a corresponding path $w \rho$ in $\Glexun_{\Safety}$ such that $\rho_n \notin A$ for all $n \geq  0$ iff $\Inf(w \rho) = (1, 1)$. 
Therefore, deciding whether player~1 has a winning strategy from $v_0$ in  $\Safety$ is equivalent to decide whether the value of $w$ in $\Glexun_{\Safety}$ is at least equal to $(\alpha, \beta) =(1,1)$.
\qed\end{proof}

\subsection{$\Disc^\lambda$ Lexicographic Payoff Games} {\label{subsec:Disc}

In this section we study the $\Disc^\lambda$ payoff function for which we prove Theorems~\ref{thm:uniform-det} and~\ref{thm:compl-lexico}.

\paragraph{Uniform-Determinacy.}
The proof of Theorem~\ref{thm:uniform-det} for $\Disc^\lambda$ lexicographic payoff games follows from the simple next idea: player 1 first tries to maximize his payoff limited to the first component and then to minimize his payoff limited to the second component. Player 1 can use a uniform strategy to achieve this goal because at each step, the game reduces to a one-dimensional discounted game.

Let us now go into the details. The proof of Theorem~\ref{thm:uniform-det} for $\Disc^\lambda$ payoff will be obtained as a consequence of Lemmas~\ref{lem:disc} and~\ref{lem:optimalDisc} given below.
 
Let $\lambda \in \, ]0,1[$ and $\Glexun = (V, V_1, V_2, E, r, \Disc^\lambda, \lexicoleq_1)$ be a lexicographic payoff game with weight function $\Disc^\lambda$.
We derive from $\Glexun$ the discounted game $\D = (V, V_1, V_2, E, r_1, \Disc^\lambda_1)$ such that the weights are limited to the first component $r_1$ of $r$, and player $1$ wants to maximize the payoff $\Disc^\lambda_1$ while player $2$ wants to minimize it. By \cite{ZP}, this game is uniformly-determined, and the value of 
each vertex $v \in V$ and uniform strategies can be defined by the following system of equations: 
\begin{equation}\label{definitionValue}
\Val(v) = \left\{
  \begin{array}{ll}
    \max_{(v, v') \in E} \set{(1 - \lambda) \cdot r_1(v, v') + \lambda \cdot \Val(v')} & \quad \mbox{if } v \in V_1, \\ 
    \min_{(v, v') \in E} \set{(1 - \lambda) \cdot r_1(v, v') + \lambda \cdot \Val(v')} & \quad\mbox{if } v \in V_2. 
  \end{array}
\right.
\end{equation}

We denote by $E'$ the set of all optimal edges of $E$, i.e. edges realizing the maximum (resp. minimum) for 
player $1$ (resp. player $2$) in system (\ref{definitionValue}). 

The next lemma states that against an optimal strategy of player 1 in $\D$, player 2 should rather choose edges in $E'$. 

\begin{lemma} \label{lem:disc}
\begin{enumerate}
\item Let $\sigma^\star_1$ be a uniform optimal strategy of player 1 in $\D$, and let $\sigma_2$ be a strategy of player 2 in $\D$.  Let $v$ be a vertex in $V$ with value $\Val(v)$ in $\D$. If $\rho = \out{\sigma_1^\star, \sigma_2}_v$ contains two vertices $\rho_l, \rho_{l+1}$ with $(\rho_l, \rho_{l+1}) \in E \setminus E'$, then $\Disc_1^\lambda(\out{\sigma^\star_1, \sigma_2}_v) > \Val(v)$. 
\item Similarly let $\sigma^\star_2$ be a uniform optimal strategy of player 2, and $\sigma_1$ be a strategy of player 1. If  $\rho = \out{\sigma_1, \sigma_2^\star}_v$ contains two vertices $\rho_l, \rho_{l+1}$ with $(\rho_l, \rho_{l+1}) \in E \setminus E'$, then  $\Disc_1^\lambda(\out{\sigma_1, \sigma_2^\star}_v) < \Val(v)$.
\end{enumerate}
\end{lemma}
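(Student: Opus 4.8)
The plan is to exploit the Bellman optimality equations~(\ref{definitionValue}) along the outcome $\rho = \out{\sigma_1^\star, \sigma_2}_v$ by running a telescoping (unfolding) argument. First I would establish a local inequality valid at every step of $\rho$. Since $\sigma_1^\star$ is optimal, at a player-1 vertex $\rho_k \in V_1$ the edge chosen lies in $E'$ and realises the maximum in~(\ref{definitionValue}), so $\Val(\rho_k) = (1-\lambda)\cdot r_1(\rho_k,\rho_{k+1}) + \lambda\cdot\Val(\rho_{k+1})$; at a player-2 vertex $\rho_k \in V_2$, any outgoing edge satisfies $\Val(\rho_k) \leq (1-\lambda)\cdot r_1(\rho_k,\rho_{k+1}) + \lambda\cdot\Val(\rho_{k+1})$, because $\Val(\rho_k)$ is the minimum over such edges, with equality exactly when the edge is in $E'$. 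Hence for every $k$ we have $\Val(\rho_k) \leq (1-\lambda)\cdot r_1(\rho_k,\rho_{k+1}) + \lambda\cdot\Val(\rho_{k+1})$, and since $\sigma_1^\star$ always selects $E'$ edges at player-1 vertices, the only way an edge $(\rho_l,\rho_{l+1}) \in E \setminus E'$ can occur in $\rho$ is at a player-2 vertex, where it yields the strict inequality $\Val(\rho_l) < (1-\lambda)\cdot r_1(\rho_l,\rho_{l+1}) + \lambda\cdot\Val(\rho_{l+1})$.

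Next I would introduce the partial quantities
\[
D_N = (1-\lambda)\sum_{k=0}^{N-1}\lambda^k\, r_1(\rho_k,\rho_{k+1}) + \lambda^N\, \Val(\rho_N),
\]
and compute $D_{N+1} - D_N = \lambda^N\bigl[(1-\lambda)\cdot r_1(\rho_N,\rho_{N+1}) + \lambda\cdot\Val(\rho_{N+1}) - \Val(\rho_N)\bigr]$, which is exactly $\lambda^N$ times the nonnegative gap of the local inequality at step $N$. Thus $(D_N)_N$ is nondecreasing and $D_0 = \Val(v)$. Writing $\epsilon > 0$ for the strict gap at the distinguished step $l$, we obtain $D_{l+1} \geq D_0 + \lambda^l\epsilon = \Val(v) + \lambda^l\epsilon$, and monotonicity gives $D_N \geq \Val(v) + \lambda^l\epsilon$ for all $N \geq l+1$. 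As the weights are bounded the values $\Val(\rho_N)$ are bounded, so $\lambda^N\Val(\rho_N) \to 0$ and $D_N \to \Disc_1^\lambda(\rho)$ as $N \to \infty$; passing to the limit yields $\Disc_1^\lambda(\out{\sigma_1^\star,\sigma_2}_v) \geq \Val(v) + \lambda^l\epsilon > \Val(v)$, which is statement~1.

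Statement~2 is entirely dual: against a uniform optimal strategy $\sigma_2^\star$ of player 2 the local inequality reverses to $\Val(\rho_k) \geq (1-\lambda)\cdot r_1(\rho_k,\rho_{k+1}) + \lambda\cdot\Val(\rho_{k+1})$ (equality at player-2 vertices, and at player-1 vertices iff the edge is in $E'$), so any edge in $E \setminus E'$ must now occur at a player-1 vertex with a strict gap, the sequence $D_N$ becomes nonincreasing, and the same limit argument gives $\Disc_1^\lambda(\out{\sigma_1,\sigma_2^\star}_v) \leq \Val(v) - \lambda^l\epsilon < \Val(v)$. I expect the main obstacle to be purely bookkeeping: correctly arguing that a non-optimal edge in the outcome can only be contributed by the minimising (resp.\ maximising) player and therefore produces a genuinely strict gap $\epsilon > 0$, and justifying the interchange of the limit with the infinite discounted sum, which rests only on boundedness of the weights together with $0 < \lambda < 1$.
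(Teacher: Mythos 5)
Your proof is correct and follows essentially the same route as the paper: both arguments rest on the Bellman equations~(\ref{definitionValue}), the observation that the optimizing player's uniform optimal strategy only uses edges of $E'$ so that a non-optimal edge must be contributed by the opponent and yields a strictly positive one-step gap, and the fact that discounting preserves this gap. The only difference is bookkeeping: the paper truncates the unfolding at the first non-optimal edge and bounds the tail by $\lambda^{l+1}\Val(\rho_{l+1})$ via optimality of $\sigma_1^\star$, whereas you telescope the local inequalities to infinity and pass to the limit; both are valid.
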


\begin{proof}
We only give the proof for the first statement, the proof being similar for the second one. Notice that player~1 only uses edges in $E'$.

Suppose that player $2$ follows a strategy $\sigma_2$ such that  $\rho = \out{\sigma_1^\star, \sigma_2}_v$ uses at least one edge in $E \setminus E'$.
We consider the smallest index $l \geq 0$ such that $\rho_l \in V_2$ and $(\rho_l, \rho_{l+1}) \in E \setminus E'$. 

On one hand, we have by (\ref{definitionValue}) 
\begin{eqnarray} \label{eq:val}
\Val(v) &=& (1 - \lambda) \cdot \sum_{k=0}^{l-1} \lambda^k \cdot r_1(\rho_k, \rho_{k+1}) + \lambda^l \cdot \Val(\rho_l). 
\end{eqnarray}

On the other hand, by optimality of $\sigma_1^\star$, we have that 
\begin{eqnarray*}
\Disc^{\lambda}_1\left(\out{\sigma_1^\star, \sigma_2}_v \right) 
                     & = & (1 - \lambda) \cdot \sum_{k=0}^{l} \lambda^k \cdot r_1(\rho_k, \rho_{k+1})  + \lambda^{l+1} \cdot \Disc^{\lambda}_1\left(\out{\sigma_1^\star, {\sigma_2}|_{\rho_{\leq l}}}_{\rho_{l+1}}\right) \\ 
                     &\geq& (1 - \lambda) \cdot \sum_{k=0}^{l} \lambda^k \cdot r_1(\rho_k, \rho_{k+1}) +  \lambda^{l+1} \cdot \Val(\rho_{l+1}).
 \end{eqnarray*}
Then by definition of $l$, we have
\begin{eqnarray}  \label{eq:disc}     
\Disc^{\lambda}_1\left(\out{\sigma_1^\star, \sigma_2}_v \right)                     
& > & (1 - \lambda) \cdot \sum_{k=0}^{l-1} \lambda^k \cdot r_1(\rho_k, \rho_{k+1}) + \lambda^l \cdot 
                                                     \Val (\rho_l).
\end{eqnarray}
By (\ref{eq:val}) and (\ref{eq:disc}), we can conclude that $\Disc^{\lambda}_1\left(\out{\sigma_1^\star, \sigma_2}_v \right) > \Val(v)$. 
\qed\end{proof}

We have shown that if both players play optimally in $\D$, then the produced outcome uses edges of $E'$ only. 
We now consider the discounted game $\D' = (V, V_1, V_2, E', r_2, \Disc^\lambda_2)$ such that the edges are limited to $E'$ and the weight function is now $r_2$. Another difference is that player $1$ wants to \emph{minimize} his payoff (now seen as a cost) while player $2$ wants to \emph{maximize} it. 

The following lemma describes values and optimal strategies of the lexicographic payoff game $\Glexun$ in relation with the ones of discounted games $\D$ and $\D'$. Theorem~\ref{thm:uniform-det} for $\Disc^\lambda$ payoff is a consequence of this lemma.

\begin{lemma}\label{lem:optimalDisc}
\begin{enumerate}
\item Let $v \in V$ be a vertex. If $v$ has value $\Val(v)$ in $\D$ and $\beta(v)$ in $\D'$, 
then $v$ has value $(\Val(v), \beta(v))$ in $\Glexun$. 
\item A uniform optimal strategy in $\D'$ is optimal in $\Glexun$. 
\end{enumerate}
\end{lemma}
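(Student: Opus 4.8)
The plan is to derive everything from Lemma~\ref{lem:disc} together with the telescoping structure of the discounted value. Let $\hat\sigma_1$ and $\hat\sigma_2$ be uniform optimal strategies for player~1 and player~2 respectively in $\D'$; both prescribe only edges of $E'$, so they are valid positional strategies in $\Glexun$. Because the edges of $E'$ are exactly those attaining the optimum in the Bellman system~(\ref{definitionValue}), a strategy that always selects an $E'$-edge at its own vertices is automatically optimal in $\D$; hence $\hat\sigma_1$ is optimal for the maximizer of $\Disc^\lambda_1$ in $\D$ and $\hat\sigma_2$ is optimal for the minimizer. I would also record the key telescoping fact: any play $\rho$ from $v$ using only edges of $E'$ satisfies $\Disc^\lambda_1(\rho) = \Val(v)$, obtained by unfolding~(\ref{definitionValue}) as $\Val(v) = (1-\lambda)\sum_{k=0}^{n-1}\lambda^k r_1(\rho_k,\rho_{k+1}) + \lambda^n\Val(\rho_n)$ and letting $n \to \infty$, the values being bounded.

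First I would show that $\hat\sigma_1$ guarantees $(\Val(v),\beta(v)) \lexicoleq_1 \Disc^\lambda(\out{\hat\sigma_1,\sigma_2}_v)$ for every $\sigma_2 \in \Sigma_2$, by a two-case analysis. If the outcome uses some edge of $E\setminus E'$, then Lemma~\ref{lem:disc}(1) (applicable since $\hat\sigma_1$ is $\D$-optimal) gives a first component strictly above $\Val(v)$, so the desired inequality holds by the first clause defining $\lexicoleq_1$. Otherwise the play stays inside $E'$: its first component equals $\Val(v)$ by the telescoping fact, and, being a play of $\D'$ consistent with the optimal minimizer $\hat\sigma_1$, its second component is at most $\beta(v)$, so the inequality holds by the second clause of $\lexicoleq_1$. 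Symmetrically, using Lemma~\ref{lem:disc}(2) for deviations into $E\setminus E'$ (which push the first component strictly below $\Val(v)$) and optimality of the maximizer $\hat\sigma_2$ in $\D'$ for plays inside $E'$, I would show that $\hat\sigma_2$ guarantees $\Disc^\lambda(\out{\sigma_1,\hat\sigma_2}_v) \lexicoleq_1 (\Val(v),\beta(v))$ for every $\sigma_1 \in \Sigma_1$.

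These two bounds yield $(\Val(v),\beta(v)) \lexicoleq_1 \InfVal(v) \lexicoleq_1 \SupVal(v) \lexicoleq_1 (\Val(v),\beta(v))$, so $v$ has value $(\Val(v),\beta(v))$ in $\Glexun$, which is the first statement; the witnessing strategies are precisely the uniform optimal strategies of $\D'$, which is the second. The content that does the real work is Lemma~\ref{lem:disc}; what remains demands only careful bookkeeping of the two clauses of $\lexicoleq_1$ and the observation that $(V,E')$ is deadlock-free (each vertex retains at least one Bellman-optimal successor), so that $\D'$ is a well-defined discounted game. I expect no genuine obstacle here, merely the routine verification of the case distinctions.
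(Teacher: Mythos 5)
Your proposal is correct and follows essentially the same route as the paper's proof: both reduce to the optimality of the $\D'$-strategies in $\D$, invoke Lemma~\ref{lem:disc} to show that any use of an edge outside $E'$ strictly improves (resp.\ worsens) the first component, and fall back on optimality in $\D'$ for plays confined to $E'$. Your case split (play leaves $E'$ vs.\ stays in $E'$) is just the mirror image of the paper's split ($\Disc^\lambda_1(\rho) > \Val(v)$ vs.\ $= \Val(v)$), and the telescoping observation you make explicit is implicit in the paper's appeal to system~(\ref{definitionValue}).
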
 

\begin{proof}
Let $\tau_1^\star, \tau_2^\star$ be uniform optimal strategies for player $1$ and player $2$ respectively in $\D'$. Suppose that vertex $v$ has value $\Val(v)$ in $\D$ and value $\beta(v)$ in $\D'$. By Lemma~\ref{lemma:pointDeSelle}, it suffices to show that for all strategies $\sigma_1, \sigma_2$ for player $1$ and player $2$ respectively in $\Glexun$, we have $\Disc^\lambda(\out{\sigma_1, \tau_2^\star}_v) \lexicoleq_1 (\Val(v), \beta(v)) \lexicoleq_1  \Disc^\lambda(\out{\tau_1^\star, \sigma_2}_v)$. 

Let $\rho = \out{\tau_1^\star, \sigma_2}_v$. We are going to show that $(\Val(v), \beta(v)) \lexicoleq_1  \Disc^\lambda(\rho)$. 
As $\tau^\star_1$ is a strategy in $\D'$, it is an optimal strategy in~$\D$. Therefore $\Disc^\lambda_1(\rho) \geq \Val(v)$. If this inequality is strict, then $(\Val(v), \beta(v)) \lexicoleq_1  \Disc^\lambda(\rho)$. Let us thus suppose that $\Disc^\lambda_1(\rho) = \Val(v)$. By Lemma~\ref{lem:disc}, the outcome $\rho$ uses edges of $E'$ only and is then an outcome in $\D'$. By optimality of $\tau_1^\star$ in $\D'$, it follows that $\Disc^\lambda_2(\rho) \leq \beta(v)$, and thus $(\Val(v), \beta(v)) \lexicoleq_1  \Disc^\lambda(\rho)$.

We show similarly that $\Disc^\lambda(\out{\sigma_1, \tau_2^\star}_v) \lexicoleq_1 (\Val(v), \beta(v))$.
\qed\end{proof}

\paragraph{Complexity Results.}
Let us now provide a proof of Theorem~\ref{thm:compl-lexico} for $\Disc^\lambda$ payoff.

\begin{proof} [of Theorem~\ref{thm:compl-lexico} for $\Disc^\lambda$ payoff]
We begin with the second statement. By Lemma~\ref{lem:optimalDisc}, we know that if $v$ has value $\Val(v)$ in $\D$ and $\beta(v)$ in $\D'$, then $v$ has value $(\Val(v), \beta(v))$ in $\Glexun$. The value $\Val(v)$ of each $v$ in the discounted game $\D$ can be computed in pseudo-polynomial time~\cite{ZP}. Once these values are computed in $\D$, the subgame $\D'$ can be computed in polynomial time by substituting each value $\Val(v)$ in system (\ref{definitionValue}) and removing each edge that does not realise the maximum (resp. minimum) for player $1$ (resp. player $2$) in the resulting system. The value $\beta(v)$ of each vertex $v$ in the discounted game $\D'$ can be computed in  pseudo-polynomial time. 

Notice that we also have an algorithm in {\sf NP} $\cap$ {\sf co-NP} to compute the value $(\Val(v), \beta(v))$ of a given vertex $v$ of $\Glexun$. Let us explain the main ideas. Recall that in discounting games, deciding whether the value of a vertex is greater than or equal to a given threshold is in {\sf NP} $\cap$ {\sf co-NP}, and that the solutions to a system like (\ref{definitionValue}) can be written with polynomially many bits~\cite{ZP,LNCS2500}. Hence, there is an exponential number of possibilities for the values of a discounted game. To compute such a value, we can apply a dichotomy that uses the previous decision problem in {\sf NP} $\cap$ {\sf co-NP} a polynomial number of times. We thus have an algorithm that uses a polynomial number of calls to an oracle in  {\sf NP} $\cap$ {\sf co-NP}. Since ${\sf P}^{ \sf NP \cap \sf co\mbox{-}NP} = {\sf NP} \cap$ {\sf co-NP}~\cite{Brassard79}, we get an algorithm in {\sf NP} $\cap$ {\sf co-NP} to compute the value $(\Val(v), \beta(v))$ of vertex $v$ in $\Glexun$.

Let us turn to the first statement of Theorem~\ref{thm:compl-lexico}. Let $(\alpha,\beta)$ be a pair of rational numbers, and let $v$ be a vertex. To check whether $(\alpha,\beta) \lexicoleq_1 (\Val(v), \beta(v))$, we compute $(\Val(v), \beta(v))$ in {\sf NP} $\cap$ {\sf co-NP} as just explained, and then make the comparison with $(\alpha,\beta)$. 

For the third statement, by Lemma~\ref{lem:optimalDisc}, we also know that a uniform optimal strategy in $\D'$ is optimal in $\Glexun$. Uniform optimal strategies in discounted games can be constructed in pseudo-polynomial time~\cite{ZP}, and we have seen before that $\D'$ can also be constructed in pseudo-polynomial time.
\qed\end{proof}

\section{Study of Problems~\ref{prob:ESconstraint} and~\ref{prob:path}} \label{sec:path}

In Section~\ref{sec:lexico}, we have studied the determinacy of lexicographic payoff games and the related complexities. 
From these results we were able to prove Theorems~\ref{thm:existenceES} and~\ref{thm:complES} about the existence and the construction of a secure equilibrium in weighted games with the payoff functions of Definition~\ref{def:payoff}.

The aim of this section is to provide a proof of Theorem~\ref{thm:constraintES} about the constrained existence of a secure equilibrium in such games (see Problem~\ref{prob:ESconstraint}). Thanks to Proposition~\ref{constSE} that provides a general framework for solving Problem~\ref{prob:ESconstraint}, it remains to study Problem~\ref{prob:path} about the constraint existence of a path in a graph with weights and values.
Hence we first study the latter problem. We then derive a proof of Theorem~\ref{thm:constraintES} for $\MPInf, \MPSup, \LimInf,\LimSup,\Inf$ and $\Sup$ payoffs. We leave the discounted case open. However we show that a solution in this case would provide
a solution to an open problem mentioned in \cite{ChatterjeeFW13} itself related to other difficult open problems in mathematics \cite{Boker}.

\subsection{Solution for Problem~\ref{prob:path}}

Let us first show that Problem~\ref{prob:path} is decidable for $\MPInf$, $\MPSup$, $\LimInf$ and $\LimSup$ payoffs. The approach that we develop is inspired by proof techniques proposed in~\cite{UmmelsW11}.

\begin{theorem} \label{thm:path}
Let $G = (V,E,v_0,r,\Val)$ be a finite directed graph with an initial vertex $v_0$, a weight function $r$, and a value function $\Val$. Let $\mu, \nu \in (\IQ \cup \{\pm \infty\})^2$ be two thresholds. Then for $\g = \MPInf$, $\MPSup$, $\LimInf$ and $\LimSup$, one can decide in polynomial time whether there exists an infinite path $\rho$ in $G$ such that 
$\forall k \geq 0, \forall i \in \{1,2\}, \Val^i(\rho_k) \lexicoleq_i \g(\rho_{\geq k})$, and
$\mu \leq \g(\rho) \leq \nu$.
\end{theorem}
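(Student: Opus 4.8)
The plan is to exploit the prefix-independence of the four measures to collapse the security condition into a statement about the \emph{set of visited vertices} only, and then to break the remaining coupling by guessing the payoff. First I would use that for $\g \in \set{\MPInf,\MPSup,\LimInf,\LimSup}$ each component is prefix-independent (Remark~\ref{rem:prefix-linear}), so $\g(\rho_{\geq k}) = \g(\rho)$ for every $k$. Writing $p = \g(\rho)$, the first requirement $\forall k\,\forall i\; \Val^i(\rho_k) \lexicoleq_i \g(\rho_{\geq k})$ becomes simply that every vertex $v$ occurring in $\rho$ satisfies $\Val^1(v) \lexicoleq_1 p$ and $\Val^2(v)\lexicoleq_2 p$. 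Hence, letting $V_p = \set{v \in V \mid \Val^1(v)\lexicoleq_1 p \text{ and } \Val^2(v)\lexicoleq_2 p}$, Problem~\ref{prob:path} is equivalent to deciding whether there is an infinite path from $v_0$ that stays inside the induced subgraph $G[V_p]$ and whose payoff $p$ satisfies $\mu \le p \le \nu$. The only coupling left to break is the dependency of $V_p$ on the unknown value $p$.

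For $\LimInf$ and $\LimSup$, each component of $p$ is necessarily one of the finitely many edge weights, so I would enumerate all pairs $p \in \set{r_1(e)\mid e\in E} \times \set{r_2(e)\mid e\in E}$ (at most $|E|^2$ of them). For a fixed candidate $p$ with $\mu \le p \le \nu$ and $v_0 \in V_p$, I would restrict to $G[V_p]$, iteratively delete deadlocks, and then search the subgraph of edges $e$ with $r_1(e) \ge p_1$ and $r_2(e) \ge p_2$ (reversed inequalities for $\LimSup$) for a strongly connected component $C$, reachable from $v_0$ inside the full $G[V_p]$, such that $\min_{e\in C} r_1(e) = p_1$ and $\min_{e\in C} r_2(e) = p_2$. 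A path that reaches $C$ (along arbitrary edges of $G[V_p]$, which affect only a finite prefix) and then visits every edge of $C$ infinitely often realises exactly $\LimInf(\rho)=(p_1,p_2)$; conversely the recurrent edges of any solution form such a $C$. This is a standard strongly-connected-component and reachability computation, so each candidate is tested in linear time and the whole procedure is polynomial.

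The mean-payoff cases $\MPInf$ and $\MPSup$ are the real difficulty, and this is where the technique of~\cite{UmmelsW11} is adapted. Now the achievable values $p$ form a continuum, so instead of enumerating payoffs I would enumerate the polynomially many \emph{breakpoints} given by the components of the rationals $\Val^1(v),\Val^2(v)$ ($v\in V$) together with $\mu,\nu$; these cut $\IR^2$ into polynomially many cells on which the set $V_p$ is \emph{constant}, which is exactly what resolves the circular dependency between $p$ and $V_p$. For each cell I would restrict to $G[V_p]$ and, for each strongly connected component $S$ reachable from $v_0$, decide by linear programming whether some mean-payoff vector lying in $(\text{cell}) \cap [\mu,\nu]$ is realisable by an infinite path that remains in $S$, the realisable region being described through nonnegative circulations on $S$ (the convex hull of the simple-cycle mean vectors), which has a polynomial-size linear-programming formulation.

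The main obstacle is precisely this last feasibility test. The subtlety is that the exact achievable set for the \emph{componentwise} liminf ($\MPInf$) and limsup ($\MPSup$) is strictly larger than the convex hull of cycle means: alternating ever longer stretches on two cycles with mean vectors $(0,1)$ and $(1,0)$ drives the running average to oscillate and yields an $\MPInf$ vector equal to $(0,0)$, which lies outside the segment joining them. Consequently the lower-threshold test ``$\MPInf(\rho)\ge\mu$ is achievable'' reduces cleanly to ``some point of the circulation polytope dominates $\mu$'', whereas the upper-threshold test ``$\MPInf(\rho)\le\nu$'' must account for this oscillation phenomenon; for $\MPSup$ the two roles are exchanged, so liminf and limsup genuinely require different characterisations. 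Showing that both tests remain expressible as polynomial-size linear programs — so that the overall algorithm stays polynomial — is the technical crux; once it is established, correctness of the whole procedure follows from the cell-wise constancy of $V_p$ and from the reduction of the first paragraph.
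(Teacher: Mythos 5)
Your overall architecture is sound and close in spirit to the paper's: the prefix-independence reduction is exactly the paper's first step, and your decoupling by enumerating candidate payoffs (edge-weight pairs for $\LimInf$/$\LimSup$, cells of a grid for $\MPInf$/$\MPSup$) is a legitimate alternative to the paper's enumeration over pairs of vertices $(s_1,s_2)$ realizing the maximal values $\Val^1(\rho_k)$ and $\Val^2(\rho_k)$ along the path; both enumerations are polynomial and both reduce the problem to a feasibility test of the form ``does some path $\rho$ in a fixed subgraph satisfy $x_i \sim_i \g_i(\rho) \sim'_i y_i$ for $i\in\set{1,2}$ with $\sim_i,\sim'_i\,\in\set{<,\leq}$''. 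Your treatment of $\LimInf$ and $\LimSup$ via reachable strongly connected components is complete and arguably more elementary than the paper's reduction to emptiness of Rabin automata.

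The genuine gap is in the mean-payoff feasibility test, which you yourself label ``the technical crux'' and leave unestablished. You correctly observe that the set of achievable componentwise-$\liminf$ vectors is strictly larger than the convex hull of cycle means (the oscillation phenomenon), so the naive circulation polytope gives the wrong answer on one side of the threshold test; but you stop at asserting that ``both tests remain expressible as polynomial-size linear programs'' still has to be shown. That is precisely the content that cannot be omitted: without an exact, polynomial-size characterization of which conjunctions of strict and non-strict componentwise bounds on $\MPInf(\rho)$ (resp.\ $\MPSup(\rho)$) are realizable inside a strongly connected component, neither the correctness nor the polynomial running time of your algorithm is proven. The paper discharges exactly this step by invoking the linear-programming characterization of Ummels and Wojtczak (Lemma~\ref{lem:ummels}, proved in \cite{UmmelsArXiv}), extended to systems mixing strict and non-strict inequalities via the auxiliary Lemma~\ref{lem:strict} in the appendix; your proof needs either that result cited and adapted precisely (including why the characterization correctly accounts for the oscillation phenomenon on both sides of the threshold) or an actual proof of the missing characterization.
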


The proof of this theorem is based on the next lemma.
\begin{lemma} \label{lem:ummels}
One can decide in polynomial time whether there exists an infinite path $\rho$ in $G$ such that for each $i \in \{1,2\}$, $\mu_i \sim_i \g_i(\rho) \sim'_i \nu_i$ with $\sim_i, \sim'_i \, \in \{<, \leq\}$.
\end{lemma}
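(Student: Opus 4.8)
The plan is to reduce the question to a constant number of polynomial-time checks performed on each strongly connected component (SCC) of $G$, exploiting the fact that all four measures are prefix-independent and hence depend only on the tail of a path. The key structural observation is that for any infinite path $\rho$ the set $U$ of vertices visited infinitely often is strongly connected, so it is contained in a single SCC $S$ of $G$ that is reachable from $v_0$, and every edge traversed infinitely often lies in the induced subgraph $G[S]$. Since $\MPInf$, $\MPSup$, $\LimInf$ and $\LimSup$ are prefix-independent, $\g_i(\rho)$ is determined by this tail; a finite prefix reaching $S$ is irrelevant. Hence it suffices to decide, separately for each of the linearly many SCCs $S$ reachable from $v_0$, whether some path eventually confined to $S$ meets the box constraint, and to take the disjunction. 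Throughout, a threshold equal to $\pm\infty$ simply drops the corresponding inequality. I would then treat the two families of measures differently.

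For $\LimInf$ (and dually $\LimSup$) the value $\g_i(\rho)$ equals the minimum (resp.\ maximum) of $r_i(e)$ over the edges $e$ used infinitely often. Thus a lower constraint $\mu_i \sim_i \LimInf_i(\rho)$ forces \emph{every} infinitely-often edge to be $r_i$-compatible with $\mu_i$, whereas an upper constraint $\LimInf_i(\rho) \sim'_i \nu_i$ only requires the \emph{existence} of one such edge. I would therefore delete from $G$ every edge violating one of the two lower bounds, recompute the SCCs of the reduced graph, and search for a reachable SCC containing at least one edge (so that an infinite path can stay inside it) that simultaneously witnesses the component-$1$ and the component-$2$ upper bounds. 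Traversing all edges of such an SCC infinitely often then realizes the constraint, and the whole procedure is clearly polynomial.

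For $\MPInf$ and $\MPSup$ I would argue through the flow polytope of $S$. Let $P(S) \subseteq \IR^2$ be the set of average-weight vectors $\bigl(\sum_e f_e r_1(e),\, \sum_e f_e r_2(e)\bigr)$ ranging over normalized circulations $f \ge 0$ on $G[S]$; membership and optimization over $P(S)$ are linear programs of polynomial size, which is the $\cite{UmmelsW11}$-style engine. A single circulation $x \in P(S)$ is realizable as a path whose running average converges to $x$. For $\MPInf$ the two lower bounds are \emph{sustained} constraints (they force the running-average vector eventually to stay componentwise above $\mu$), while each upper bound $\MPInf_i \sim'_i \nu_i$ is a \emph{recurrent} constraint (it only asks the $i$-th running average to dip to $\nu_i$ infinitely often). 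By interleaving, with growing stretches, two circulations $y,y' \in P(S)$, I would show the box is realizable in $S$ iff there exist $y \in P(S)$ with $y \ge \mu$ and $y_1 \sim'_1 \nu_1$, and $y' \in P(S)$ with $y' \ge \mu$ and $y'_2 \sim'_2 \nu_2$ --- two LP feasibility tests. The $\MPSup$ case is exactly dual, exchanging the roles of $\mu$ and $\nu$ (lower bounds become recurrent, upper bounds sustained).

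The main obstacle is the correctness of this $\MPInf$/$\MPSup$ characterization, specifically the soundness of decoupling the two components: because the liminf (resp.\ limsup) of component $1$ and of component $2$ may be witnessed along \emph{different} subsequences, the achievable region is strictly larger than $P(S) \cap [\mu,\nu]$, so one cannot simply intersect the polytope with the box. I would have to prove both directions carefully --- realizability via an explicit interleaving schedule of the circulations $y,y'$ whose transient segments have vanishing relative length, and necessity via an averaging and cycle-decomposition argument applied to an arbitrary witnessing path confined to $S$. The last technical point is to handle the strict inequalities rigorously: strict feasibility of a rational polytope is still decidable in polynomial time, but this needs justification (e.g.\ via a polynomially bounded perturbation argument) so that the cases $\sim,\sim' \in \set{<,\le}$ are all covered uniformly.
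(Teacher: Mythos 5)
Your approach is correct in substance but takes a genuinely different, more self-contained route than the paper. For $\MPInf$ and $\MPSup$ the paper's proof is essentially a citation: it invokes the linear-programming reduction of \cite{UmmelsW11} (proved in \cite{UmmelsArXiv} for non-strict inequalities only) and adds just one ingredient of its own, namely that feasibility of a mixed system of strict and non-strict linear constraints is still decidable in polynomial time (Lemma~\ref{lem:strict}, via maximizing a slack variable). You instead reconstruct that engine from scratch through the SCC decomposition and the circulation polytope $P(S)$, and the decoupled characterization you propose --- one feasibility test per component, each asking for a point of $P(S)$ dominating $\mu$ whose $i$-th coordinate meets the $i$-th upper bound --- is the correct one: sufficiency follows by interleaving the two circulations with stretches of growing length (the running average then stays near the segment between the two points, so each componentwise liminf equals the minimum of the two coordinates), and necessity follows by extracting a convergent subsequence of running-average vectors along indices realizing $\MPInf_i(\rho)$ and observing that every such subsequential limit lies in $P(S)$ by cycle decomposition. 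The step you defer is exactly the content of the cited result, so nothing is wrong in principle, and the strict-inequality issue you flag is precisely what the paper isolates in its appendix lemma. For $\LimInf$ and $\LimSup$ the paper reduces to emptiness of a product of one-pair Rabin automata, whereas you give the equivalent direct graph algorithm (delete the edges violating the sustained lower bounds, then look for a non-trivial SCC of the reduced graph containing witnesses for both recurrent upper bounds); these are the same computation, yours stated more elementarily. One detail worth making explicit in your write-up: reachability of the witnessing SCC must be tested in the original graph $G$, not the reduced one, since the prefix of the path may legitimately use deleted edges finitely often.
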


\begin{proof}
We begin with $\MPInf$ payoff (the proof can be easily adapted to $\MPSup$ payoff).
Lemma~\ref{lem:ummels} is stated in~\cite{UmmelsW11} and proved in~\cite{UmmelsArXiv}, but for $\sim_i, \sim'_i$ equal to $\leq$ only. The proposed proof reduces the existence of the required path $\rho$ to the existence of a solution of a linear program 
(there is no linear objective function to optimize, just a finite set of linear constraints to solve). One can check in polynomial time whether there exists a solution to a linear program~\cite{Schrijver}.
In our case, the proof of~\cite{UmmelsArXiv} leads to a set of linear constraints with both strict and non strict inequalities. One can also check in polynomial time the existence of a solution of such a set of constraints (see Lemma~\ref{lem:strict} in the appendix).

In the case of $\LimInf$ payoff, let us show a reduction to the emptiness problem for Rabin automata (the proof is similar for $\LimSup$ payoff). Let $\rho$ be a path in $G$ starting in $v_0$ such that for each $i \in \{1,2\}$, $\mu_i \sim_i \LimInf_i(\rho) \sim'_i \nu_i$. Equivalently, for each $i \in \{1,2\}$, there exists $n_i \geq 0$ such that $\mu_i \sim_i r_i(\rho_k,\rho_{k+1})$ for all $k \geq n_i$, and $r_i(\rho_l,\rho_{l+1}) \sim'_i \nu_i$ for infinitely many $l$. This is the conjonction of two Rabin conditions, one for each component $i$. 
From $G$, we construct a Rabin automaton ${\cal R}_i$, $i \in \{1,2\}$, with initial vertex $v_0$, and sets of vertices and edges $V'_i,E'_i$ defined as follows. Each edge $e = (v,v') \in E$ is split into two consecutive edges, and the new intermediate vertex is decorated with $r_i(e)$. The set $V'_i$ has thus $|V| + |E|$ vertices, such that vertices of $V$ are decorated with $+\infty$. We then define one Rabin pair $(A_i,B_i)$ such that $A_i \subseteq V'_i$ (resp. $B_i \subseteq V'_i$) is composed of all vertices decorated by $a$ such that $\neg(\mu_i \sim_i a)$  (resp. by $b$ such that $b \sim'_i \nu_i$).
Consider the Rabin automaton $\cal R$ being the intersection of the automata ${\cal R}_1$ and ${\cal R}_2$. We have that there exists an infinite path $\rho$ in $G$ such that $\forall i\in \{1,2\}, \mu_i \sim_i \g_i(\rho) \sim'_i \nu_i$ if and only if there exists an accepting path in $\cal R$. The latter property can be checked in polynomial time~\cite{KingKV01}.
\qed\end{proof}

\begin{proof}[of Theorem~\ref{thm:path}]
Notice that each $\g_i$ is prefix-independent in Theorem~\ref{thm:path} (see Remark~\ref{rem:prefix-linear}). Hence condition 
$\forall k \geq 0, \forall i \in \{1,2\}, \Val^i(\rho_k) \lexicoleq_i \g(\rho_{\geq k})$ can be replaced by
\begin{eqnarray} \label{eq:caractESbis}
\forall k \geq 0, \forall i \in \{1,2\}, \quad \Val^i(\rho_k) \lexicoleq_i \g(\rho).
\end{eqnarray}
We thus propose Algorithm~\ref{algo} to solve Problem~\ref{prob:path}.
\begin{algorithm}[t]
\caption{$Constrained Existence(G,\mu,\nu)$}
\label{algo}
\begin{algorithmic}[1]
\FOR {each $s_1, s_2 \in V$}
\STATE Compute the subgraph $G'$ of $G$ whose vertices $v$ are such that $\Val^i(v) \lexicoleq_i \Val^i(s_i)$, for $i \in \{1,2\}$ 
\IF {there exists a play $\rho$ in $G'$ such that $\Val^i(s_i) \lexicoleq_i \g(\rho)$, for $i \in \{1,2\}$, and $\mu \leq \g(\rho) \leq \nu$}
\RETURN True
\ENDIF
\ENDFOR
\RETURN False
\end{algorithmic}
\end{algorithm}

To prove its soundness, assume that the algorithm returns True. Hence, there exist a subgraph $G'$ of $G$ depending on two vertices $s_1, s_2$, and a play $\rho$ in $G'$ such that $\Val^i(s_i) \lexicoleq_i \g(\rho)$, for $i \in \{1,2\}$, and $\mu \leq \g(\rho) \leq \nu$. It follows that $\rho$ is a play in $G$ that satisfies ($\ref{eq:caractESbis}$) and $\mu \leq \g(\rho) \leq \nu$.

To prove that the algorithm is complete, let $\rho$ be a play that satisfies ($\ref{eq:caractESbis}$) and $\mu \leq \g(\rho) \leq \nu$. For each $i$, let $s_i \in V$ be such that $\Val^i(s_i)$ is the maximum value among $\Val^i(\rho_k)$, $k \geq 0$ (for the order $\lexicoleq_i$). Therefore $\Val^i(s_i) \lexicoleq_i \g(\rho)$, for $i \in \{1,2\}$. It follows that $\rho$ is a play in the subgraph $G'$ of $G$ whose vertices $v$ are such that $\Val^i(v) \lexicoleq_i \Val^i(s_i)$, for $i \in \{1,2\}$. As $\mu \leq \g(\rho) \leq \nu$, the algorithm will return True.

Let us now explain how to check the existence of a path as indicated in line~3 of Algorithm~\ref{algo}. If one recalls the definition of the orders $\lexicoleq_1$ and $\lexicoleq_2$, one notices that the condition that a path $\rho$ has to satisfy in line~3 is equivalent to the disjunction of four conditions of the form $\forall i \in \{1,2\}, x_i \sim_i \g_i(\rho) \sim'_i y_i$ with $x, y \in (\IQ \cup \{\pm \infty\})^2$ and $\sim_i, \sim'_i \, \in \{<, \leq\}$. The latter conditions can be checked in polynomial time by Lemma~\ref{lem:ummels}.

Hence, Algorithm~\ref{algo} is correct and works in polynomial time.
\qed\end{proof}

\subsection{Solution to Problem~\ref{prob:ESconstraint}}

We now have all the required material to prove Theorem~\ref{thm:constraintES} for all the payoffs except $\Disc^\lambda$ payoff. We begin with $\MPInf$, $\MPSup$, $\LimInf$ and $\LimSup$ payoffs, since $\Inf$ and $\Sup$ payoffs require a separate proof.

\begin{proof} [of Theorem~\ref{thm:constraintES} for $\MPInf$, $\MPSup$, $\LimInf$ and $\LimSup$ payoffs]
By Remark~\ref{rem:prefix-linear}, and Theorems~\ref{thm:uniform-det},~\ref{thm:compl-lexico} and~\ref{thm:path}, we see that the hypotheses of Proposition~\ref{constSE} are satisfied. Therefore, given an initialized weighted game $(\G, v_0)$ and two thresholds $\mu, \nu \in (\mathbb{Q} \cup \{\pm \infty\})^2$, one can decide whether there exists a secure equilibrium $(\sigma_1, \sigma_2)$ in $(\G, v_0)$ such that $\mu \leq \g(\out{\sigma_1, \sigma_2}_{v_0}) \leq \nu$. Let us come back to the algorithm that was proposed to solve this problem and let us study its complexity (see the proof of Proposition~\ref{constSE}): (1) we compute $\Val^i(v)$ for each vertex $v$ of $\Glexi$, $i \in \{1,2\}$; (2) we construct from $(\G,v_0)$ the graph $G = (V,E,v_0,r,\Val)$ on which Algorithm~\ref{algo} is applied. For each payoff $\MPInf, \MPSup, \LimInf$ and $\LimSup$, step (2) can be done in polynomial time by Theorem~\ref{thm:path}. For $\LimInf$ and $\LimSup$ payoffs, step (1) can also be done in polynomial time by Theorem~\ref{thm:compl-lexico}. 

To complete the proof, it remains to show that step (1) is in {\sf NP} $\cap$ {\sf co-NP} for $\MPInf$ and $\MPSup$ payoffs. By Theorem~\ref{thm:compl-lexico}, we know that (a) optimal strategies are uniform for $\MPInf$ and $\MPSup$ lexicographic payoff games, and (b) deciding whether the value of a vertex is greater than or equal to a given threshold is in {\sf NP} $\cap$ {\sf co-NP}.
Therefore, by (a) the value of a vertex is a bounded rational (between $0$ and $|R|$\footnote{Under Remark~\ref{rem:positive}, recall that $|R|$ is the maximal weight of the game.}) whose denominator is at most equal to $|V|$ (the maximal length of a simple cycle), thus leading to an exponential number of possibilities. To compute this value, we can apply a simple dichotomy that uses (b). We thus have an algorithm using a polynomial number of calls to an oracle in  {\sf NP} $\cap$ {\sf co-NP}. Since ${\sf P}^{ \sf NP \cap \sf co\mbox{-}NP} = {\sf NP} \cap$ {\sf co-NP}~\cite{Brassard79}, step (1) is in {\sf NP} $\cap$ {\sf co-NP} as announced.
\qed\end{proof}

\begin{proof} [of Theorem~\ref{thm:constraintES} for $\Inf$ and $\Sup$ payoffs]
We use the same reduction to $\LimInf$ and $\LimSup$ weighted games as done in the proof of Theorems~\ref{thm:existenceES} and~\ref{thm:complES} for $\Inf$ and $\Sup$ payoffs. As Theorem~\ref{thm:constraintES} holds for $\LimInf$ and $\LimSup$ payoffs, it also holds for $\Inf$ and $\Sup$ payoffs.
\qed\end{proof}

\subsection{Particular Case of $\Disc^\lambda$ Payoff}

In this section, we consider $\Disc^\lambda$ weighted games. We are going to show that Problem~\ref{prob:ESconstraint} for these games is related to the next problem whose decidability is unknown~\cite{ChatterjeeFW13}. Moreover the latter problem is related to other hard open problems in diverse mathematical fields according to~\cite{Boker}. 

\begin{problem} \label{prob:Boker}
Given three rational numbers $a,b$ and $t$, and a rational discount factor $\lambda \in \, ]0,1[$, does there exist an infinite sequence $w = w_0w_1 \ldots \in \{a, b\}^\omega$ such that $\sum_{k=0}^{\infty} w_k \lambda^k$ is equal to $t$?
\end{problem}

Let us provide the next reduction of Problem~\ref{prob:Boker} to Problem~\ref{prob:ESconstraint} for $\Disc^{\lambda}$ payoff. Let $a,b, t \in \IQ$ and a rational factor $\lambda \in ]0,1[$. We consider the $\Disc^\lambda$ weighted game $(\G^{a,b},v_0)$ played on the arena depicted in Figure~\ref{Figure:reduction}, where $c$ is a rational number such that 
$c > \max \set{|a|, |b|}$.\footnote{Notice that $(\G^{a,b},v_0)$ have mutiple edges between two vertices. Adequate intermediate vertices should be added to respect Definition~\ref{RG}.}  We also consider the thresholds $\mu = ((1-\lambda)\cdot t,- (1-\lambda)\cdot t)$ and $\nu = (+\infty,+\infty)$. Notice that threshold $\nu$ imposes no constraint. We want to show the next proposition:

\begin{proposition} \label{prop}
There exists a sequence $w \in \{a, b\}^\omega$  such that $\sum_{k=0}^{\infty} w_k \lambda^k = t$ if and only if there exists a secure equilibrium in $(\G^{a,b},v_0)$ with outcome~$\rho$ such that $\mu \leq \Disc^{\lambda}(\rho) \leq \nu$.
\end{proposition}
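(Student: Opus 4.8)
The plan is to characterise the outcomes of secure equilibria in $(\G^{a,b},v_0)$ by means of Lemma~\ref{lem}, and then to read the target‑discounted‑sum condition directly off the two thresholds. First I would make explicit the correspondence between infinite sequences $w = w_0w_1\ldots \in \{a,b\}^\omega$ and the plays of $(\G^{a,b},v_0)$ that use only the two ``digit'' edges (weighted $(a,-a)$ and $(b,-b)$): to such a play $\rho_w$ corresponds $\Disc^\lambda(\rho_w) = \bigl((1-\lambda)\sum_k w_k\lambda^k,\,-(1-\lambda)\sum_k w_k\lambda^k\bigr)$, so that $\g_2 = -\g_1$ along $\rho_w$, whereas every edge of the gadget built from $c$ makes $r_1+r_2$ strictly negative. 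With this in hand the constraint $\mu \leq \Disc^\lambda(\rho) \leq \nu$ becomes transparent: as $\nu = (+\infty,+\infty)$ is vacuous and $\mu = ((1-\lambda)t,-(1-\lambda)t)$, the two inequalities $\Disc^\lambda_1(\rho)\geq (1-\lambda)t$ and $\Disc^\lambda_2(\rho)\geq -(1-\lambda)t$ together pin $\Disc^\lambda_1(\rho) = (1-\lambda)t$ on any digit play, i.e. $\sum_k w_k\lambda^k = t$.

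Before treating the two implications I would compute the values $\Val^1(v)$ and $\Val^2(v)$ of every vertex $v$ of $\G^{a,b}$ in the lexicographic discounted games $\Glexun$ and $\Glexdeux$, invoking Lemma~\ref{lem:optimalDisc}: each value is obtained by first solving the one‑dimensional discounted game on $r_1$ (resp. $r_2$) and then the residual discounted game on the optimal edges. The role of $c > \max\{|a|,|b|\}$ is precisely to \emph{locate} these values relative to the range of digit payoffs: the $c$‑gadget produces, against an optimal opponent, a first (resp. second) component strictly below $\min\{a,b\}$ (resp. $-\max\{a,b\}$), while along any digit play the tail payoff $\g(\rho_{\geq k})$ has first component in $[\min\{a,b\},\max\{a,b\}]$ and second component in $[-\max\{a,b\},-\min\{a,b\}]$. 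I expect this value analysis, together with the verification that it forces the condition of Lemma~\ref{lem} to hold on the relevant plays, to be the main obstacle; it is the step that must be tuned so that $c > \max\{|a|,|b|\}$ is exactly the right threshold.

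For the implication from a target sequence to a secure equilibrium, assume $w\in\{a,b\}^\omega$ satisfies $\sum_k w_k\lambda^k = t$ and take $\rho = \rho_w$. Then $\Disc^\lambda(\rho) = \mu$, so $\mu \leq \Disc^\lambda(\rho) \leq \nu$ holds. It remains to check, using the values of the previous paragraph, that $\Val^i(\rho_k) \lexicoleq_i \g(\rho_{\geq k})$ for every $k\geq 0$ and $i\in\{1,2\}$; this comes down to comparing the extreme components produced by the $c$‑gadget with the tail payoffs of $\rho$, which the stated bounds guarantee. By Lemma~\ref{lem}, $\rho$ is then the outcome of a secure equilibrium, which is the one we want.

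Conversely, let $(\sigma_1,\sigma_2)$ be a secure equilibrium whose outcome $\rho$ satisfies $\mu \leq \Disc^\lambda(\rho)\leq\nu$. I would first argue that $\rho$ uses no edge of the $c$‑gadget: entering it drives $r_1+r_2$ strictly negative with magnitude governed by $c$, so by the choice of $c$ at least one of $\Disc^\lambda_1(\rho) < (1-\lambda)t$ or $\Disc^\lambda_2(\rho) < -(1-\lambda)t$ would hold, contradicting $\mu \leq \Disc^\lambda(\rho)$; this also accounts for the degenerate case where $t$ lies outside the achievable range, since then no play at all meets the thresholds. Hence $\rho = \rho_w$ is a digit play, and by the first paragraph the thresholds force $\sum_k w_k\lambda^k = t$, so $w$ witnesses Problem~\ref{prob:Boker}. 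Together with the forward direction this establishes the equivalence.
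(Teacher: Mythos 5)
Your route genuinely differs from the paper's: the paper proves a standalone Lemma~\ref{lem:path<->SE} (a play is an equilibrium outcome iff it stays in $\{v_0,v_1\}$, via an explicit ``deviate-to-$v_2$'' punishment profile), whereas you go through the general characterization of Lemma~\ref{lem}. Your backward direction is correct and in fact cleaner than the paper's: since $r_1+r_2=0$ on the digit edges and $r_1+r_2=-2c<0$ on the gadget edges, any play with $\mu\leq\Disc^\lambda(\rho)$ satisfies $\Disc^\lambda_1(\rho)+\Disc^\lambda_2(\rho)\geq 0$ and hence never visits $v_2$, without invoking the equilibrium property at all.

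The gap is in the forward direction, exactly at the step you flag as ``the main obstacle'' and leave unverified. Carrying out the computation via Lemma~\ref{lem:optimalDisc}: in the one-dimensional discounted game on $r_1$, player~2 at $v_1$ optimally escapes to $v_2$ (this only needs $-c<\min\{a,b\}$), so the first component of $\Val^1(v_1)$ is $-c$ but the first component of $\Val^1(v_0)$ is $(1-\lambda)\max\{a,b\}-\lambda c$; symmetrically the second component of $\Val^2(v_1)$ is $-(1-\lambda)\min\{a,b\}-\lambda c$. These are \emph{not} ``strictly below $\min\{a,b\}$ (resp.\ $-\max\{a,b\}$)'' under the stated hypothesis $c>\max\{|a|,|b|\}$: take $a=0$, $b=1$, $c=3/2$, $\lambda=1/10$; then the first component of $\Val^1(v_0)$ is $3/4$. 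For the constant-$a$ play $\rho$ (the unique digit play realizing $t=a/(1-\lambda)$) one gets $\g_1(\rho_{\geq 0})=a=0<3/4$, so condition~(\ref{eq:caractES}) fails at $k=0$ and, by Lemma~\ref{lem}, $\rho$ is not an equilibrium outcome; the forward implication therefore breaks for this $t$. The underlying phenomenon is that under discounting an immediate one-step gain of roughly $b-a$ can outweigh the delayed $-c$ punishment when $\lambda$ is small, so the threshold that actually works is of the order $c\geq(\max\{a,b\}-(1-\lambda)\min\{a,b\})/\lambda$, not $c>\max\{|a|,|b|\}$. To be fair, the paper's own proof of Lemma~\ref{lem:path<->SE} commits the same oversight (it compares raw edge weights rather than discounted tail payoffs), so executing your plan honestly would expose a real flaw in the paper; but as written your proof asserts the needed value bounds rather than establishing them, and they are false under the stated choice of $c$.
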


\begin{figure}[ht!]
\begin{center}
\begin{tikzpicture}[initial text=,auto, node distance=3cm, shorten >=1pt] 

\node[state, initial, scale=0.7]    (v0)                           {$v_0$};
\node[state, rectangle, scale=0.7]    (v1)         [right=of v0]     {$v_1$};
\node (fictif) [right=1.5cm of v0] {};
\node[state, scale=0.7]               (v2)    [below=1.5 of fictif]    {$v_2$};

\path[->] (v0) edge [bend left=45]              node[midway, scale=0.8] {$(b, -b)$} (v1)
               edge [bend left=15]              node[midway, scale=0.8] {$(a, -a)$} (v1)
               edge [bend left=-25]              node[left, scale=0.8] {$(-c, -c)$} (v2)

          (v1) edge [bend right=-45]  node[midway, scale=0.8] {$(a, -a)$}  (v0)
               edge [bend right=-15]              node[midway, scale=0.8] {$(b, -b)$} (v0)
               edge [bend right=-25]              node[right, scale=0.8] {$(-c, -c)$} (v2)

          (v2) edge [loop below]  node[midway, scale=0.8] {$(-c, -c)$}  ();

\end{tikzpicture}
\end{center}
\caption{The $\Disc^\lambda$ weighted game $\G^{a,b}$ \label{Figure:reduction}}
\end{figure}
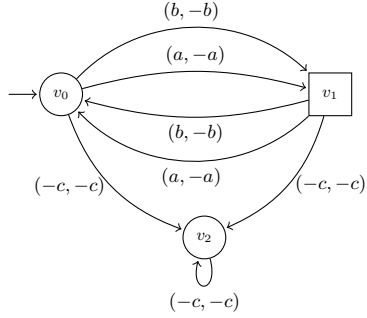

In the next lemma, we denote by $\A$ the subarena of $\G^{a,b}$ restricted to 
the set of vertices $\set{v_0, v_1}$.

\begin{lemma}\label{lem:path<->SE}
For all infinite paths $\rho$ of $\G^{a,b}$ starting in $v_0$,  $\rho$ is a path in $\A$ if and only if 
there exists a secure equilibrium $(\sigma_1, \sigma_2)$ in $(\G^{a,b}, v_0)$ 
with outcome~$\rho$.
\end{lemma}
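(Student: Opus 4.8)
The plan is to use the characterization of secure equilibrium outcomes provided by Lemma~\ref{lem}: a play $\rho$ starting in $v_0$ is the outcome of a secure equilibrium in $(\G^{a,b},v_0)$ if and only if $\Val^i(\rho_k) \lexicoleq_i \g(\rho_{\geq k})$ for all $k \geq 0$ and $i \in \set{1,2}$, where $\Val^i$ denotes the value in the lexicographic payoff game $\Glexi$ derived from $\G^{a,b}$. So the first step is to compute these values at $v_0, v_1, v_2$ via the reduction of Lemma~\ref{lem:optimalDisc} to one-dimensional discounted games on each component. Throughout, write $M = \max\set{a,b}$ and $m = \min\set{a,b}$.

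The decisive structural feature is that from $v_1$ the minimizing player of $\Glexun$, and from $v_0$ the minimizing player of $\Glexdeux$, can escape to the absorbing vertex $v_2$ and thereby force the payoff $(-c,-c)$, which is strictly worse in the relevant component than anything obtainable inside $\A$ (indeed $-c < \min\set{a,-a,b,-b}$ since $c > \max\set{|a|,|b|}$). I therefore expect $\Val^1(v_2) = \Val^2(v_2) = (-c,-c)$, together with $\Val^1(v_1) = (-c,-c)$ and $\Val^2(v_0) = (-c,-c)$, whereas $\Val^1(v_0) = ((1-\lambda)M - \lambda c,\, -(1-\lambda)M - \lambda c)$ and $\Val^2(v_1) = ((1-\lambda)m - \lambda c,\, -(1-\lambda)m - \lambda c)$; the last two correspond to one optimal move inside $\A$ followed immediately by the opponent's escape to $v_2$.

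For the forward implication (a secure-equilibrium outcome stays in $\A$) I will argue by contraposition. Suppose $\rho$ reaches $v_2$, and let $\rho_k \in \set{v_0,v_1}$ be the last vertex of $\A$ before the first move to $v_2$, so that $\g(\rho_{\geq k}) = (-c,-c)$. If $\rho_k = v_0$, this move belongs to player~1, and since player~1 can guarantee a first component strictly above $-c$ from $v_0$ (by going to $v_1$, whatever follows gives $\Disc^\lambda_1 \geq (1-\lambda)M - \lambda c > -c$), we have $\Val^1(v_0) \lexicog_1 (-c,-c) = \g(\rho_{\geq k})$, violating the characterization for $i=1$. Symmetrically, if $\rho_k = v_1$ the move belongs to player~2, whose guaranteed second component from $v_1$ is $\geq -(1-\lambda)m - \lambda c > -c$, giving $\Val^2(v_1) \lexicog_2 \g(\rho_{\geq k})$ and violating the characterization for $i=2$. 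Note this direction uses only $c > \max\set{|a|,|b|\}$.

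For the converse, let $\rho$ stay in $\A$. Every suffix then satisfies $\g(\rho_{\geq k}) = (D_k, -D_k)$ with $D_k = (1-\lambda)\sum_{n \geq 0}\lambda^n r_1(\rho_{k+n},\rho_{k+n+1}) \in [m,M]$, because first-component weights inside $\A$ all lie in $\set{a,b}$ and second-component weights are their negatives. It remains to verify $\Val^i(\rho_k) \lexicoleq_i (D_k,-D_k)$ for both $i$ at each step. The two ``non-owner'' conditions are automatic: at $v_0$ for $i=2$ one needs $-c < -D_k$ (i.e. $D_k < c$), and at $v_1$ for $i=1$ one needs $-c < D_k$, both holding since $m \leq D_k \leq M < c$. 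The two ``owner'' conditions reduce, by the definitions of $\lexicoleq_1$ and $\lexicoleq_2$, to the scalar inequalities $(1-\lambda)M - \lambda c \leq D_k$ at $v_0$ and $D_k \leq (1-\lambda)m + \lambda c$ at $v_1$. The main obstacle lies exactly here: these must hold for every $D_k \in [m,M]$, which forces $c$ to be large enough — concretely $c > (M - (1-\lambda)m)/\lambda$ and the symmetric bound — so that the threat of escaping to $v_2$ outweighs any one-step gain of a deviation. This is where the quantitative choice of the escape weight $c$ in the definition of $\G^{a,b}$ is essential (and is the only quantitative input beyond the sign condition). Once Lemma~\ref{lem:path<->SE} is established in this way, the desired equivalence with sequences realizing $\sum_n \lambda^n r_1(\rho_n,\rho_{n+1}) = t$ over $\set{a,b}$ follows, since for a play in $\A$ the constraint $\mu \leq \Disc^\lambda(\rho) \leq \nu$ with $\mu = ((1-\lambda)t,-(1-\lambda)t)$ and $\nu = (+\infty,+\infty)$ pins $\Disc^\lambda_1(\rho)$ to exactly $(1-\lambda)t$.
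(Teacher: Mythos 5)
Your route is genuinely different from the paper's: you go through the value characterization of secure-equilibrium outcomes (Lemma~\ref{lem}) together with an explicit computation of $\Val^1$ and $\Val^2$ on the three vertices, whereas the paper argues directly with a trigger-strategy profile (follow $\rho$; upon any deviation the opponent escapes to $v_2$) and a one-line non-profitability claim. Your value computations are correct, and your backward direction (a secure-equilibrium outcome cannot enter $v_2$) is complete and, as you note, uses only $c>\max\set{|a|,|b|}$; it matches in substance the paper's deviation argument at the last $\A$-vertex before $v_2$. One small slip: since $\lexicoleq_1$ breaks ties by \emph{reversing} the order on the second component, the owner condition at $v_0$ is the strict inequality $(1-\lambda)M-\lambda c< D_k$ rather than $\leq$ (equality would force $-\lambda c\geq \lambda c$); this is harmless because you need strictness anyway.

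The substantive point is the one you flag but leave unverified at the end of the converse: the owner conditions must hold for every $D_k\in[m,M]$, which requires $\lambda c> M-(1-\lambda)m$ and $\lambda c>(1-\lambda)M-m$. Neither follows from the paper's hypothesis $c>\max\set{|a|,|b|}$, so your proof cannot be closed as written --- and in fact the lemma as stated is false. Take $a=0$, $b=1$, $\lambda=1/2$, $c=3/2$, and let $\rho$ alternate between $v_0$ and $v_1$ along the $(b,-b)$-edges, so that $\Disc^\lambda(\rho)=(1,-1)$. At the first visit to $v_1$, player~2 can deviate to the $(0,0)$-edge; whatever player~1 does afterwards, every remaining $r_2$-weight is at least $-c$, so player~2's total payoff is at least $(1-\lambda)(-1)+\lambda\left((1-\lambda)\cdot 0+\lambda\cdot(-c)\right)=-7/8>-1$. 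Hence no profile with outcome $\rho$ is even a Nash equilibrium, although $\rho$ stays in $\A$. The paper's own justification (``$-c$ is strictly smaller than all the weights in $\A$'') overlooks exactly what your computation exposes: the deviator pockets one step of reward with coefficient $(1-\lambda)$ before the punishment, which enters only with coefficient $\lambda$. The statement (and Proposition~\ref{prop}, which uses this direction) is repaired by strengthening the requirement on $c$ to, e.g., $c>2\max\set{|a|,|b|}/\lambda$; with that change your argument goes through, and the reduction to Problem~\ref{prob:Boker} is unaffected.
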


\begin{proof}
Suppose that $\rho$ is a path in  $\A$. We constuct the strategy profile $(\sigma_1, \sigma_2)$ 
as follows. Both strategies follow $\rho$, and as soon as a player deviates from this path, the other player 
immediately deviates to vertex $v_2$. It is clear that no player has an incentive to deviate. Indeed both players 
want to maximize their discounted sum, and the self loop on vertex $v_2$ is labelled with $-c$ which is strictly smaller 
than all the weights in the subarena $\A$. So no deviation from $\rho$ is profitable, 
and $(\sigma_1, \sigma_2)$ is a secure equilibrium. 

Suppose now that $\rho$ is the outcome of a secure equilibrium 
$(\sigma_1, \sigma_2)$ in $(\G^{a,b}, v_0)$. By contradiction, assume 
that $\rho$ visits vertex $v_2$. Consider the smallest $k$ 
such that $\rho_{k+1} = v_2$. Without lost of generality suppose that $\rho_k = v_0$.
Then player~1 has a strategy $\sigma'_1$ that goes to $v_1$ and thus
avoids going to  $v_2$ (for at least one more round). This is a profitable deviation for him, because all the edges that go to $v_2$ have reward $(-c, -c)$, which is 
strictly smaller than rewards in the subarena $\A$. 
This is in contradiction with $(\sigma_1, \sigma_2)$ being a secure equilibrium. 
\qed\end{proof}

\begin{proof}[of Proposition~\ref{prop}]
Lemma~\ref{lem:path<->SE} states that paths $\rho$ in $\A$ that start in $v_0$ are exactly outcomes of secure equilibria in $(\G^{a,b},v_0)$. Moreover, by definition of the weight function of $\G^{a,b}$, we have $\Disc^{\lambda}_1(\rho) = - \Disc^{\lambda}_2(\rho)$. Therefore there exists a sequence $w \in \{a, b\}^\omega$  such that $\sum_{k=0}^{\infty} w_k \lambda^k = t$ if and only if there exists a secure equilibrium in $(\G^{a,b},v_0)$ with outcome~$\rho$ such that $\mu \leq \Disc^{\lambda}(\rho) ~(\leq \nu)$.
\qed\end{proof}

We have thus shown that Problem~\ref{prob:Boker} reduces to Problem~\ref{prob:ESconstraint} for $\Disc^{\lambda}$ payoff. Therefore Problem~\ref{prob:Boker} is decidable if Problem~\ref{prob:ESconstraint} is itself decidable for this payoff.

\section{Conclusion} \label{sec:conclusion}

In this paper, we have studied Problems~\ref{prob:ES}-\ref{prob:ESconstraint} about secure equilibria in two-player weighted games, and proposed three general frameworks in which we can solve them. We have proved that weighted games with a classical payoff function like $\MPInf$, $\MPSup$, $\LimInf$, $\LimSup$, $\Inf$, $\Sup$ and $\Disc^{\lambda}$, all fall in these frameworks, except the $\Disc^{\lambda}$ payoff for Problem~\ref{prob:ESconstraint} (see Tables~\ref{table:existence}-~\ref{table:constrained}). We have shown that this particular problem is linked to another challenging open problem~\cite{Boker,ChatterjeeFW13}.

Our approach was inspired by the recent work~\cite{TJS} that states the existence of Nash equilibria in a large class of multi-player weighted games. As in~\cite{TJS}, we have considered two particular zero-sum games $\Glexun$, $\Glexdeux$ associated with the initial game $\G$; we have proved that they are uniformly-determined and studied their complexity for all the classical payoffs (see Table~\ref{table:determined}). These results are very interesting on their own right, and to the best of our knowledge, were not studied in the literature, except for a variant of the $\MPInf$ payoff studied in~\cite{KTB}. For $\MPInf$ payoff, our proofs were inspired by techniques developed in~\cite{KTB} and~\cite{UmmelsW11}, however with far from trivial adaptations.

Independently of our paper, in~\cite{Julie14}, the authors give general hypotheses on multi-player weighted games that guarantee the existence of a secure equilibrium. These hypotheses are satisfied by the weighted games studied in this paper with $\LimInf$, $\LimSup$, $\Inf$, $\Sup$ and $\Disc^{\lambda}$ payoffs (but not with $\MPInf$ and $\MPSup$ payoffs). Thus for these payoffs, Problem~\ref{prob:ES} is also positively solved thanks to the results proved in~\cite{Julie14} (however with no indication about the existence of a secure equilibrium that is finite-memory). 

\section*{Acknowledgments}
We thank Jean Cardinal, Martine Labb\'e, Quentin Menet,  and   David Sbabo for fruitful discussions.

\bibliographystyle{abbrv}
\bibliography{se_main.bib}

\section*{Appendix}

\begin{lemma}\label{lem:strict}
Let 
$$\left\{
    \begin{array}{llll}
        \bar{A}_i \cdot \bar{x} & >& b_i & \quad 1 \leq i \leq n \\ 
        \bar{A}_{n + j} \cdot \bar{x} & \geq& b_{n+j} &\quad 1 \leq j \leq m
            \end{array}
            \right.
$$
be a system of $n$ strict linear constraints and $m$ non strict linear constraints.
Then one can decide in polynomial time whether this system has a solution.
\end{lemma}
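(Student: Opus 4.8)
The plan is to reduce the feasibility question to a single linear program involving only non-strict constraints, for which polynomial-time feasibility (and optimization) algorithms are known~\cite{Schrijver}. The whole difficulty lies in the strict inequalities, since standard linear programming handles closed (non-strict) constraints; I would remove the strictness by introducing one fresh real variable $\epsilon$ that measures a uniform slack common to all strict constraints.

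Concretely, first I would consider the linear program with variables $\bar{x}$ and $\epsilon$:
\begin{align*}
\text{maximize } & \epsilon \\
\text{subject to } & \bar{A}_i \cdot \bar{x} \geq b_i + \epsilon, \quad 1 \leq i \leq n, \\
& \bar{A}_{n+j} \cdot \bar{x} \geq b_{n+j}, \quad 1 \leq j \leq m, \\
& 0 \leq \epsilon \leq 1.
\end{align*}
The upper bound $\epsilon \leq 1$ is added only to keep the optimum bounded, so that the optimal value $\epsilon^\star$ is well defined. I would then establish the key equivalence: the original system has a solution if and only if this linear program is feasible and its optimal value satisfies $\epsilon^\star > 0$ (when $n = 0$ the strict part is empty and one simply tests feasibility of the non-strict constraints directly).

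The two directions of this equivalence are the heart of the argument. For the forward direction, given a solution $\bar{x}_0$ of the original system, each strict constraint gives $\bar{A}_i \cdot \bar{x}_0 - b_i > 0$, so setting $\epsilon_0 = \min_{1 \leq i \leq n}(\bar{A}_i \cdot \bar{x}_0 - b_i)$ yields $\epsilon_0 > 0$, and $(\bar{x}_0, \min(\epsilon_0, 1))$ is a feasible point with positive objective; hence $\epsilon^\star > 0$. For the converse, any optimal point $(\bar{x}^\star, \epsilon^\star)$ with $\epsilon^\star > 0$ satisfies $\bar{A}_i \cdot \bar{x}^\star \geq b_i + \epsilon^\star > b_i$ for every $i$, together with all the non-strict constraints, so $\bar{x}^\star$ solves the original system. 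The same slack argument also shows that $\epsilon^\star = 0$ forces the strict system to be infeasible, so the test $\epsilon^\star > 0$ is decisive.

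Finally, since one can decide feasibility of a system of non-strict linear constraints and compute the optimum of a linear program in polynomial time~\cite{Schrijver}, the whole procedure runs in polynomial time. I expect the main (though ultimately routine) obstacle to be the uniform-slack step: one must argue that a finite collection of strict inequalities that is satisfiable pointwise is in fact satisfiable with a \emph{common} positive margin, which is exactly what taking the minimum over the $n$ strict constraints provides.
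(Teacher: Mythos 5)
Your proof is correct and follows essentially the same route as the paper: both introduce a slack variable for the strict constraints, maximize it subject to the non-strict relaxation, and test whether the optimum is positive. Your added bound $\epsilon \leq 1$ is a small but worthwhile refinement, since it guarantees the optimum is attained and sidesteps the unbounded case that the paper's version of the linear program leaves implicit.
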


\begin{proof}
The problem of deciding if the previous system $S$ has a solution can be reduced to the following linear program $P$  where $y$ is a new variable: 
$$\max y $$
$$\left\{
  \begin{array}{llll}
    \bar{A}_i \cdot \bar{x} - y& \geq b_i & \quad 1 \leq i \leq n \\ 
    \bar{A}_{n + j} \cdot \bar {x}  & \geq b_{n+j} &\quad 1 \leq j \leq m
  \end{array}
\right.
$$ 
If $P$ has no solution, then $S$ has no solution neither. If $P$ has a solution, depending on whether the optimal value $y$ satisfies $y > 0$ or $y \leq 0$, then $S$ has a solution or not. 
\qed\end{proof}

\end{document}